\DeclareMathAlphabet{\mathcal}{OMS}{cmsy}{m}{n}
\newcommand{\R}{\mathbb{R}}
\numberwithin{equation}{section}
\newtheorem{theorem}{Theorem}[section]
\newtheorem{lemma}[theorem]{Lemma}
\newtheorem{proposition}[theorem]{Proposition}
\newtheorem{remarkth}[theorem]{Remark}
\newenvironment{remark}{\begin{remarkth}\upshape}{\hfill$\diamond$\end{remarkth}}
\renewcommand{\emph}[1]{{\bfseries\itshape{#1}}}
\theoremstyle{definition}
\newtheorem{definition}{Definition}
\newcommand\tocinarticle{%
     \begin{flushleft}
     \end{flushleft}
\@starttoc{toc}%
   }
\let\OLDthebibliography\thebibliography
\renewcommand\thebibliography[1]{
  \OLDthebibliography{#1}
  \setlength{\parskip}{0pt}
  \setlength{\itemsep}{4pt plus 0.3ex}
}
\title{\bf Reduction and relative equilibria for the 2-body problem on spaces of constant curvature}
\author{A.V.~Borisov, L.C.~Garc\'ia-Naranjo, I.S.~Mamaev \& J.~Montaldi}
\begin{document}

\maketitle

\begin{abstract}
We consider the two-body problem on surfaces of constant non-zero curvature and classify the relative equilibria and their stability.    On the hyperbolic plane, for each $q>0$ we show there are two relative equilibria where the masses are separated by a distance $q$. One of these is geometrically of elliptic type and the other of hyperbolic type.  The hyperbolic ones are always unstable, while the elliptic ones are stable when sufficiently close, but unstable when far apart.  

On the sphere of positive curvature, if the masses are different, there is a unique relative equilibrium (RE) for every angular separation except $\pi/2$. When the angle is acute, the RE is elliptic, and when it is obtuse the RE can be either elliptic or linearly unstable. We show using a KAM argument that the acute ones are almost always nonlinearly stable.  If the masses are equal there are two families of relative equilibria: one where the masses are at equal angles with the axis of rotation (`isosceles RE') and the other when the two masses subtend a right angle at the  centre of the sphere.   The isosceles RE are
 elliptic if  the angle subtended by the particles is acute and is unstable if it is obtuse. At  $\pi/2$, the two families meet and a 
 pitchfork bifurcation takes place. Right-angled RE are elliptic away from the bifurcation point.

In each of the two geometric settings, we use a global reduction to eliminate the group of symmetries and analyse the resulting reduced equations which live on a 5-dimensional phase space and possess one Casimir function.  
\end{abstract}

\setcounter{tocdepth}{2}
\tocinarticle


\section*{Introduction}
\addcontentsline{toc}{section}{Introduction}

The study of the dynamics of material points and rigid bodies in  spaces of constant curvature has been a popular subject of research in the past two decades.
For recent advances in this area, see the review~\cite{1, 7} and the book by Diacu~\cite{Diacu} (and its critique in \cite{1}).

In this paper we focus on the 2-body problem on a (complete and simply connected)  two dimensional space of constant non-zero curvature. 
 Our interest in the problem is mathematical, although there is a possible physical motivation in that the background curvature of the universe may be non-zero .  However the effect of the curvature would probably be negligible at the length scales involved in isolated 2-body problems. 
Contrary to the situation  in flat space,  the system is not equivalent
to the corresponding generalisation of the Kepler problem: it is nonintegrable and exhibits chaotic behaviour.  This is due to there being no analogue of the centre of mass frame in a curved space, and no Galilean invariance. 
Recent papers like~\cite{1, 2, 3, 4,Kilin, Sh, LGN2016} have considered the reduction by symmetries and some qualitative aspects of the problem.
Libration points and choreographies are treated in~\cite{Kilin, Mont, 3} and the restricted  two-body problem is considered in~\cite{BM_2006, Ch}.

In the opinion of the authors,  there is no coherent, systematic, and complete treatment of the classification and stability of the relative equilibria (RE) of 
the problem, accessible to the community of celestial mechanics,  and the principal aim of this paper is to fill this gap.

Our approach to studying the RE of the problem relies on the use of the explicit form of the reduced equations of motion.  This
  allows us to recover  previous results  in a systematic and elementary way, and to extend them. Original  results of our paper include 
  the classification of RE for arbitrary attractive potentials, a detailed discussion of the qualitative features of the motion and the stability analysis of RE
 in the case of positive curvature, and the presentation of the energy-momentum bifurcation diagram  in the case of negative curvature.

We describe the results of the paper  in  more detail below, but we begin by recalling the general notion of relative equilibrium, which goes back to Poincar\'e and applies to any dynamical system with a continuous  group of symmetries.

\begin{definition}\label{def:RE}
Consider a dynamical system with a symmetry group $G$.  A \emph{relative equilibrium} is a trajectory of the dynamical system which coincides with the motion given by the action of a 1-parameter subgroup of the group $G$.
\end{definition}

In other words, a trajectory $\gamma(t)$ is a relative equilibrium if there is a 1-parameter subgroup $g(t)$ of $G$ such that $\gamma(t)=g(t)\cdot\gamma(0)$, and one can show that this is equivalent to the trajectory lying in a single group orbit in the phase space.  In particular, relative equilibria correspond to equilibrium points of the reduced equations of motion.
Note that if $\gamma(t)$ is a relative equilibrium, then so is $k\cdot\gamma(t)$ (for any $k\in G$), with corresponding 1-parameter subgroup $kg(t)k^{-1}$.   In the literature, the term relative equilibrium can refer to the trajectory, to any point on the trajectory, or to the entire group orbit containing the trajectory.  More details on relative equilibria in the Hamiltonian context can be found in many places, for example the Lecture Notes by Marsden \cite{Marsden}.  The symmetry groups we use in this paper are  $SO(3)$, the group of isometries of the sphere, and $SO(2,1)$ (or $SL(2,\R)$), the isometries of the Lobachevsky plane (also called the hyperbolic plane), and we describe their 1-parameter subgroups in the relevant section.  See also \cite{Mo-Peyresq,MN-G} for details about relative equilibria for these groups. 

In the problems we consider, the RE are always `rigid motions'. On the sphere, these consist simply of uniform rotations about a fixed axis.  However, in hyperbolic geometry there is more than one type of rigid motion: the so-called elliptic, hyperbolic and parabolic motions.  The elliptic motions are periodic, while the others are unbounded; see the discussion below for more details.

Another important consideration in these systems is \textit{time-reversal symmetry}, which holds whenever the Hamiltonian is given by the sum of a quadratic kinetic energy and a potential energy which depends only on the configuration. In canonical coordinates, this symmetry is given by the map on phase space $(q,p)\mapsto (q,-p)$.   If $(q(t),p(t))$ is a trajectory of the system, then its `reversal' $(q(-t), -p(-t))$ is another trajectory of the system.  In particular, if such a trajectory is a relative equilibrium then so is its reversal.  

In the statements of existence of relative equilibria, we do not distinguish between 2 trajectories that either lie in the same group orbit, or are related by time reversal.  Thus we count RE `up to all symmetries', including time reversal and exchange of the bodies when the masses are equal.

\subsubsection*{Reduction}

As mentioned above, the reduction of the problem has been considered before \cite{4,3}. We have nevertheless included a self-contained presentation of the reduction
for completeness.

For both the positive and negative curvature cases, the unreduced system is a four degree of freedom symplectic Hamiltonian system,  the symmetry group is
three dimensional and acts freely and properly.  The reduced system is  a five dimensional Poisson Hamiltonian system, whose generic symplectic
leaves are the four dimensional level sets of a Casimir function.

We first deal with the case of positive curvature and  consider the problem on the $2$-dimensional sphere $S^2$.
We perform the reduction of the problem by the action of $SO(3)$ that simultaneously rotates both masses.  In our treatment we do not consider collisions nor antipodal configurations, which allows us to   introduce a moving coordinate frame whose axes are aligned  according
to the configuration of the masses in a convenient way. The Hamiltonian of the system may then be written in terms of 
the angle $q$ subtended by the masses at the centre of the sphere, its  conjugate momentum $p$, and the
 vector of angular momentum $\boldsymbol m$ written in  the moving frame. These quantities do not depend on the orientation of the fixed frame and may therefore
 be used  as coordinates on the
reduced space. This approach is inspired by the reduction of the free rigid body problem and, just as   happens for that problem, 
the Euclidean squared norm of $\boldsymbol m$ passes down to the quotient space as a Casimir function whose level sets are the symplectic leaves of the reduced space.

We apply an analogous reduction scheme in the case of negative curvature by considering the action of $SO(2,1)$ (or equivalently of $SL(2,\R)$) on the Lobachevsky plane $L^2$.  
This time, the Casimir function $C$ on the reduced space is the squared norm of the momentum vector  $\boldsymbol m$ with respect to the Minkowski metric.

\subsection*{Classification and stability of relative equilibria}

After finding the reduced equations, in the final two sections we proceed to classify the RE of the problem by finding all the equilibria of the reduced equations.  In this way we recover the results of \cite{5,DiacuPCh,LGN2016} in a systematic and elementary fashion. Moreover, with this approach, we are able to conveniently analyse their stability.  We now summarize these results.

\subsubsection*{The case of negative curvature}

In hyperbolic geometry, there are 3 types of non-trivial isometry, known as elliptic, hyperbolic and parabolic transformations and correspondingly 3 types of 1-parameter subgroup of $SO(2,1)$.  The elliptic transformations are characterised by having precisely one fixed point in the hyperbolic plane.  The hyperbolic and parabolic transformations have no fixed points, but have respectively two and one fixed points `at infinity'; see any text on hyperbolic geometry, for example \cite{Iversen}. 

As is known \cite{DiacuPCh,LGN2016}, for the 2-body problem on the Lobachevsky plane $L^2$, there are two families of RE, known as the \emph{hyperbolic} and \emph{elliptic} families, according to the type of their 1-parameter subgroup.  The former are unbounded solutions that do not have an analog in Euclidean space, while the latter are periodic solutions analogous to the RE in Euclidean space. This classification becomes very transparent in  our treatment: hyperbolic RE correspond to equilibria of the reduced Hamiltonian system restricted to negative values of the Casimir function $C$, whereas elliptic RE are those for positive values of $C$ (see Section\,\ref{SS:reconstruction-L2} below).
It is also known that  parabolic transformations do not give rise to RE of the problem \cite{DiacuPCh,LGN2016}. In our treatment, this corresponds
to the absence of equilibria of the reduced system when $C=0$.

The (nonlinear) stability properties of the RE described above was first established in~\cite{LGN2016} by  working on a symplectic slice for the unreduced system since the reduced equations
were not available (in a journal in English!).    With the reduced system at hand, we are able to  recover these results in elementary terms by directly analysing the signature of the Hessian of the reduced Hamiltonian at the RE. The hyperbolic RE are always unstable, whereas the elliptic RE are stable if the masses are sufficiently close. However, as the distance between them grows, the family undergoes a saddle-node bifurcation and the elliptic RE become unstable.

All of the information of the RE of the system is illustrated in the Energy-Momentum diagram shown in Fig.\,\ref{F:DiagramL2}.
This kind of diagram, as well as the underlying topological considerations of the analysis, goes back to Smale and has been
been developed in detail for integrable systems by Bolsinov, Borisov
and Mamaev~\cite{5}. The application of this kind of analysis to  nonintegrable systems, such as the one considered in this paper, is not very common.\footnote{We mention  only the paper~\cite{6} on the Conley index where new isosceles vortex configurations were found and their stability was established using topological methods.}

\subsubsection*{The case of positive curvature}

 In this case  all RE are periodic solutions in which the masses rotate
 about an axis that passes through the shortest geodesic that joins them.
  As indicated first in \cite{3}, the classification of RE in this case is more intricate than for negative curvature since  it depends on how the masses of the bodies compare to each other:
  \begin{enumerate}
\item   If the masses are different, there are
  two disjoint families of RE that we term \emph{acute} and \emph{obtuse}, according to the (constant) value of the angle between the masses during the motion.
  For acute RE, it is the heavier mass that is closer to the axis of rotation and hence these are a natural generalisation of the RE of the problem in Euclidean space.
  On the other hand, for obtuse RE it is the lighter mass which is closer to the axis of rotation, and these RE do not have an analog in Euclidean (or hyperbolic) space.

\item If the masses are equal, there are two families of RE.  The family of \emph{isosceles} RE are those for which the axis of rotation bisects the arc that joins the two masses, while in the family of \emph{right angled} RE  the angle between the masses is $\pi/2$  and the axis of rotation is located anywhere between them. These two families meet when the axis of rotation subtends an angle of $\pi/4$ with each mass, and a pitchfork bifurcation of the RE of the system takes place.
  \end{enumerate}

As for the case of negative curvature, we  compute the signature of the Hessian of the reduced
Hamiltonian at the RE in an attempt to establish stability results. Via this analysis it is possible to
conclude the instability of certain RE of the system. However, contrary to the case of negative curvature, this is
 insufficient to prove any kind of  nonlinear stability results of the RE since the Hessian matrix is not definite and hence
 the reduced Hamiltonian may not be used as a Lyapunov function of the system. This surprising feature of the
 problem was also found in \cite{MRO-2017} by working on a symplectic slice of the unreduced system.

In view of the above considerations, we take an analytical approach to the study of the nonlinear stability of certain RE of the problem.
By using Birkhoff normal forms and applying KAM theory, we are able to show that, if the masses are different, the generic acute RE of the problem are stable.

\subsection*{Stability and reduced stability} 
Since we study here the reduced equations of motion, the stability we consider is \emph{reduced stability} (whether nonlinear or linear), and it is important to appreciate the relationship between this reduced stability and the stability in the full unreduced equations of motion.  This involves the `geometry' of the momentum value for a given relative equilibrium. 

Many of the relative equilibria are periodic motions, and for such motion there is a well-known concept of \emph{orbital stability}, where there is a tubular neighbourhood of the orbit such that any solution that intersects that neighbourhood stays close to the orbit. 

If $\boldsymbol M$ is the momentum value at (any point of) the relative equilibrium, then there is a subgroup of the symmetry group $G$ which fixes that value, which we denote $G_{\boldsymbol M}$. If this subgroup is isomorphic to $SO(2)$ (which it is in all the RE that enjoy reduced stability) then nonlinear reduced stability implies orbital stability.  
For a discussion concerning stability of RE in the Lobachevski plane, see for example \cite{MN-G}.

\subsection*{Outline of the paper}

The reduction of the problem on $S^2$ and $L^2$ is respectively presented in Sections 1 and 2. The case of  positive curvature is presented first since the geometry is better known.  We then proceed to classify the RE of the problem and study their stability. We first deal with the case 
of negative curvature in Section 3 and then with the case of positive curvature in Section 4. We have chosen to present first the negative curvature
results since, as was discussed above, the analysis is more straightforward.  Finally, some related open problems are described at the end.

\newpage
\section{Reduction for the sphere $S^2$}

The 2-body problem on $S^2$ concerns the motion of two masses $\mu_1$ and $\mu_2$ on the unit sphere on $\R^3$, that are subject to the attractive force 
that only depends on the distance between the particles. The accepted generalization of the inverse squared law from the planar case  to the problem on the sphere is defined
by the potential
\begin{equation} \label{eq:gravity on S2}
U_{\mathrm{grav}}(q)=-\frac{G\mu_1\mu_2}{\tan q},
\end{equation}
where $q\in (0,\pi)$ is the angle subtended by the masses (their Riemannian distance) and $G>0$ is the `gravitational' constant. This form of the potential leads to Bertrand's property in Kepler's problem
 and to a natural generalization
of Kepler's first law \cite{kozlov, Carinena}.

Note that the potential $U_{\mathrm{grav}}$ is singular at configurations where $q=0$ and $q=\pi$. Namely, at collisions and antipodal positions of the particles. 
In this paper we consider 
 more general attractive, $q$-dependent potentials $U:(0,\pi)\to \R$, that have the same qualitative properties as $U_{\mathrm{grav}}$:
\begin{equation*}
U'(q)>0, \qquad \lim_{q\to 0^+}U(q) =-\infty, \qquad \lim_{q\to \pi^-}U(q) =\infty.
\end{equation*}

The configuration space for the system is
\begin{equation} \label{eq:Q for sphere}
Q=(S^2\times S^2)\setminus \Delta, 
\end{equation}
where $\Delta$ is the set of collisions and antipodal configurations.  $Q$ is a  four dimensional manifold which is an open dense subset of $S^2\times S^2$.
The momentum phase space of the system is the eight-dimensional manifold $T^*Q$. The dynamics of the system is Hamiltonian with respect to the canonical
symplectic structure on $T^*Q$, and the Hamiltonian $H:T^*Q\to \R$ given by $H=T+U$ where $T$ is the sum of the kinetic energy of the particles.

The system is clearly invariant under the (cotangent lift) of the action of $SO(3)$ that simultaneously rotates both particles; this action is free and consequently the reduced orbit space $T^*Q/SO(3)$ is a manifold.

\begin{theorem}
\label{th:reduction S2}
For $Q$ given in \eqref{eq:Q for sphere}, the reduced space $T^*Q/SO(3)$ is isomorphic as a Poisson manifold to $\R^3\times (0,\pi)\times \R\ni ({\boldsymbol m},q,p)$. The Poisson structure on this space is defined by the relations
\begin{equation}\label{eq:PB on S2}
\{m_x, m_y\}=-m_z, \quad \{m_y, m_z\}=-m_x, \quad \{m_z, m_x\}=-m_y, \quad \{q, p\}=1,
\end{equation}
where  $\boldsymbol m=(m_x,m_y,m_z)^T$.

For the 2-body problem on the sphere with interaction governed by potential energy $U(q)$, the reduced dynamics is Hamiltonian with respect to this Poisson structure  
with Hamiltonian 
\begin{equation}
\label{eq:Hamiltonian on sphere}
H({\boldsymbol m},q,p)=\frac{1}{2\mu_1 }\left ( (\boldsymbol  m , {\bf A}(q) \boldsymbol  m)
+2 m_x p + ( 1 +  \mu)p^2 \right ) + U(q),
\end{equation}
where $(\cdot, \cdot)$ is the Euclidean scalar product in $\R^3$ and
\begin{equation}
\label{eq:defA}
 {\bf A}(q) =\begin{pmatrix}  1 & 0 & 0 \\ 0 & 1 &   \dfrac{\cos q}{\sin q} \\
0 &  \dfrac{  \cos q}{\sin q}  & \dfrac{ (\mu +\cos^2 q)}{ \sin^2 q} \end{pmatrix}, \qquad \mu=\frac{\mu_1}{\mu_2}.
\end{equation}
\end{theorem}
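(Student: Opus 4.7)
The plan is to globally trivialize $Q$ as $SO(3)\times(0,\pi)$ by choosing a moving frame adapted to the two-body configuration, perform Poisson reduction to identify the quotient as a manifold with its Poisson structure, and then compute the reduced Hamiltonian explicitly via a Legendre transform.

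For the trivialization, introduce the global section $\sigma:(0,\pi)\to Q$ sending $q$ to the configuration with the first mass at the north pole $\mathbf{e}_z$ and the second mass at $\cos q\,\mathbf{e}_z+\sin q\,\mathbf{e}_y$. Since collisions and antipodal configurations have been excluded, $SO(3)$ acts freely and properly on $Q$, and $\sigma$ produces a diffeomorphism $Q\cong SO(3)\times(0,\pi)$. Lifting to the cotangent bundle and using the left trivialization $T^*SO(3)\cong SO(3)\times\mathfrak{so}(3)^*$, $SO(3)$-invariant functions on $T^*Q$ are functions of the body angular momentum $\boldsymbol m\in\mathfrak{so}(3)^*$, the shape $q$, and its conjugate momentum $p$. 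Poisson reduction therefore gives
\[
T^*Q/SO(3)\;\cong\;\mathfrak{so}(3)^*\times T^*(0,\pi)\;\cong\;\mathbb{R}^3\times(0,\pi)\times\mathbb{R},
\]
with the product Poisson structure: the minus Lie--Poisson bracket on $\mathfrak{so}(3)^*$ (which, under the identification $\mathfrak{so}(3)\cong\mathbb{R}^3$ by the hat map, produces precisely the relations in \eqref{eq:PB on S2}), and the canonical bracket $\{q,p\}=1$.

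For the Hamiltonian, write the positions of the particles as $\mathbf{x}_i(t)=R(t)\tilde{\mathbf{x}}_i(q(t))$, where $R\in SO(3)$ and $\tilde{\mathbf{x}}_i$ are the body-frame positions fixed by $\sigma$. Then $\dot{\mathbf{x}}_i = R\bigl(\boldsymbol\omega\times\tilde{\mathbf{x}}_i + \dot q\,\partial_q\tilde{\mathbf{x}}_i\bigr)$ where $\boldsymbol\omega$ is the body angular velocity associated with $R^{-1}\dot R$. Substituting into $T=\tfrac12\mu_1|\dot{\mathbf{x}}_1|^2+\tfrac12\mu_2|\dot{\mathbf{x}}_2|^2$ yields a quadratic form $\tfrac12(\boldsymbol\omega,\dot q)\,\mathcal{M}(q)\,(\boldsymbol\omega,\dot q)^T$. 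The key simplifying feature of the moving frame is that $\tilde{\mathbf{x}}_1$ is constant and $\partial_q\tilde{\mathbf{x}}_2$ lies in the $\mathbf{e}_y\mathbf{e}_z$-plane, so $\dot q$ couples only to $\omega_x$; this gives $\mathcal{M}(q)$ a tractable block form. The conjugate momenta are $(\boldsymbol m,p)=\mathcal{M}(q)(\boldsymbol\omega,\dot q)$, with $\boldsymbol m$ the body angular momentum, matching the reduced $\mathfrak{so}(3)^*$-coordinates. Inverting $\mathcal{M}(q)$ and reading off the blocks produces the matrix $\mathbf{A}(q)$ of \eqref{eq:defA} together with the cross term $2m_x p$ and the coefficient $1+\mu$ of $p^2$ in \eqref{eq:Hamiltonian on sphere}.

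The main obstacle is not conceptual but a matter of bookkeeping: one must be careful about body versus space conventions for the angular velocity (which determines the sign in the Lie--Poisson bracket of \eqref{eq:PB on S2}) and about normalizations, so that the factor $1/\mu_1$ in front and the parameter $\mu=\mu_1/\mu_2$ appearing in $\mathbf{A}(q)$ come out correctly. Once these conventions are fixed, the inversion of $\mathcal{M}(q)$ and the identification of $(\boldsymbol m,q,p)$ with the reduced coordinates is an elementary computation that produces the stated Hamiltonian and completes the proof.
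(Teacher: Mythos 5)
Your proposal is correct, and its overall scheme is the same as the paper's: the adapted moving frame (mass $\mu_1$ on the $z$-axis, mass $\mu_2$ in the $yz$-plane), the identification $Q\cong SO(3)\times(0,\pi)$, the body angular velocity/momentum, and a Legendre transform of the block-structured kinetic energy to obtain \eqref{eq:Hamiltonian on sphere}. The one genuine difference is how the Poisson structure \eqref{eq:PB on S2} is obtained. You invoke the general Poisson (Lie--Poisson) reduction theorem for the left trivialization of $T^*SO(3)$, so the minus Lie--Poisson bracket on $so(3)^*$ times the canonical bracket on $T^*(0,\pi)$ comes for free; the paper instead introduces Euler angles $(\theta,\varphi,\psi)$ with $g=R^Z_\psi R^X_\theta R^Z_\varphi$, writes $\boldsymbol m$ explicitly in terms of the canonical momenta $(P_\theta,P_\varphi,P_\psi)$ as in \eqref{eq:m in terms of p}, and verifies the bracket relations by direct computation from \eqref{eq:canonical-Poisson}; the abstract reduction picture appears there only as a remark. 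Your route is cleaner and coordinate-free, but it rests on citing the standard reduction theorem (and you must, as you note, fix the body-versus-space convention to get the minus sign right); the paper's coordinate computation is more pedestrian but self-contained, and the explicit Euler-angle formulas it produces (e.g.\ \eqref{eq:ang velocity} and \eqref{eq:m in terms of p}) are reused later for the reconstruction of the dynamics in Section~\ref{SS:reconstruction-sphere}, which your approach would have to develop separately. To make your plan a complete proof one would still carry out the inversion of the mass matrix $\mathcal{M}(q)$ explicitly to exhibit ${\bf A}(q)$, the cross term $2m_xp$ and the coefficient $1+\mu$, but that is exactly the elementary computation you describe.
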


As a consequence of the above theorem, the reduced equations of motion take the form
\begin{equation}
\label{eq5.2}
\dot{\boldsymbol m}=\boldsymbol m\times\frac{\partial H}{\partial \boldsymbol m},\quad
\dot{q}=\frac{\partial H}{\partial p},\quad
\dot{p}=-\frac{\partial H}{\partial q},
\end{equation}
with $H$ given by \eqref{eq3.00}. Note that the  Poisson bracket \eqref{eq:PB on S2} is degenerate and possesses the Casimir function
\begin{equation}
\label{eq3}
C(\boldsymbol m)=(\boldsymbol m,\boldsymbol m)=m^{2}_{x}+m^{2}_{y}+m^{2}_{z},
\end{equation}
which is  a first integral of \eqref{eq5.2}.

The theorem is proved in the following 2 subsections.

\subsection{Group parametrization of configurations}	
\label{SS:Group-parametrization}

 Let $OXYZ$~be a fixed coordinate system on $\R^3$ and let $\boldsymbol R_{\alpha}=(X_{\alpha},Y_{\alpha},Z_{\alpha})^T$~be the
Cartesian coordinates of the point mass~$\mu_{\alpha}$, $\alpha=1,2$.
The key idea behind the proof of Theorem~\ref{th:reduction S2} is the introduction of a moving orthogonal coordinate system~$Oxyz$ 
 determined by the following two conditions (see Fig.\,\ref{fig1}):
\begin{enumerate}
\item[(i)]  the axis~$Oz$ passes through the mass~$\mu_1$,
\item[(ii)] the mass~$\mu_2$ is contained in the  plane~$Oyz$ with coordinate $y>0$.
\end{enumerate}
\begin{figure}[h]
\centering
\includegraphics[totalheight=5cm]{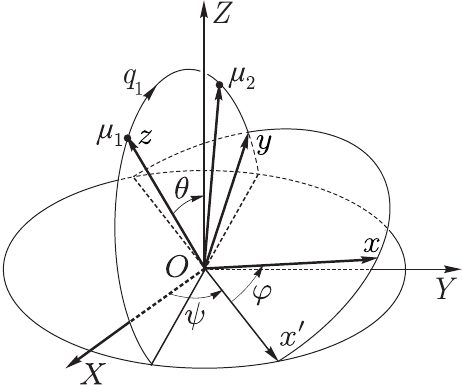}
\caption{Euler angles for the 2-body configuration on $S^2$.}
\label{fig1}
\end{figure}

According to our convention, the coordinates of the masses on the moving frame are given by the vectors
\begin{equation}
\label{eq:masses-body-frame}
\boldsymbol r_{1}=( 0, 0, 1)^T, \qquad \boldsymbol r_{2}=( 0, \sin q, \cos q)^T.
\end{equation}

Note that associated to any configuration, we can determine the angle $q\in (0,\pi)$ subtended by the masses, and an element in $SO(3)$
via conditions (i) and (ii). This process may be inverted and shows that as a manifold
\begin{equation*}
Q=SO(3)\times (0,\pi).
\end{equation*}

An element $g\in SO(3)$ changes coordinates from the body frame $Oxyz$ into the space frame.
We introduce  Euler angles  $(\theta, \varphi, \psi)$ in $SO(3)$ according to the convention $g=R^Z_\psi R^X_\theta R^Z_\varphi$ (see Fig.\,\ref{fig1}). Then
 $0<\theta<\pi, \, 0<\varphi<2\pi, \, 0<\psi<2\pi$, and 
\begin{equation}
\label{eq:g}
\begin{small}
g(\theta, \varphi, \psi)=\begin{pmatrix} \cos\varphi\cos\psi{-}\cos\theta\sin\psi\sin\varphi &
-\sin\varphi\cos\psi{-}\cos\theta\sin\psi\cos\varphi  &   \sin\theta\sin\psi \\
\cos\varphi\sin\psi{+}\cos\theta\cos\psi\sin\varphi  & -\sin\varphi\sin\psi
{+}\cos\theta\cos\psi\cos\varphi   & -\sin\theta\cos\psi \\
\sin\theta\sin\varphi   & \sin\theta\cos\varphi  & \cos\theta \end{pmatrix}.
\end{small}
\end{equation}
Therefore we may use $(\theta, \varphi, \psi, q)$ as generalized coordinates for $Q$.  In particular, the position of the masses in the fixed frame in terms of our generalized coordinates are $\boldsymbol R_{1}= g \boldsymbol r_{1}$ and $\boldsymbol R_{2}= g \boldsymbol r_{2}$.

Using these expressions one can obtain an explicit expression for the Lagrangian of the system $L=T-U$, where the kinetic
energy 
\begin{equation}
\label{eq:KinEnergy}
T=\frac{1}{2} \left (\mu_1 || \dot{ \boldsymbol R}_{1}||^2 + \mu_2 ||  \dot{ \boldsymbol R}_{2}||^2 \right ),
\end{equation}
 and the potential energy $U=U(q)$.  To exploit the symmetries  we write $T$ in terms of the (left invariant) body frame angular velocity $\boldsymbol \omega \in \R^3$,
 defined by $\widehat{ \boldsymbol \omega }=g^{-1}\dot g$ where, as usual, $\widehat{ \boldsymbol \omega }$ is the matrix
\begin{equation}
\label{eq:hatmap}
\widehat{ \boldsymbol \omega }= \begin{pmatrix} 0 & -\omega_z & \omega_y \\ \omega_z & 0 & -\omega_x \\ -\omega_y & \omega_x & 0\end{pmatrix}.
\end{equation}
Performing the algebra one finds
\begin{equation}
\label{eq:ang velocity}
\omega_x=\dot \psi \sin \theta \sin \varphi  +\dot\theta \cos \varphi, \quad \omega_y= \dot \psi \sin \theta \cos \varphi  -\dot\theta \sin \varphi,
\quad \omega_z=\dot \varphi +\dot \psi \cos \theta,
\end{equation}
and the following expression for the kinetic energy
\begin{equation}
\label{eq:KinEnergy}
T= \frac{\mu_1}{2} ({\bf A}(q)^{-1}  \boldsymbol \omega,  \boldsymbol \omega)   +\frac{\mu_2}{2} (\omega_x -  \dot q)^2 ,
\end{equation} 
where ${\bf A}(q)$ is given by \eqref{eq:defA}. The independence of the above expression on the Euler angles is due to the $SO(3)$ invariance of the system
(the invariance under rotations of the space frame).

\subsection{Generalized momenta and reduction}
\label{SS:reductionS2}

We define the generalized momenta of the system in the standard way:
\begin{equation}
\begin{gathered}
\label{eq456}
P_\theta=\frac{\partial L}{\partial\dot{\theta}},\qquad
P_\varphi=\frac{\partial L}{\partial\dot{\varphi}},\qquad
P_\psi=\frac{\partial L}{\partial\dot{\psi}}, \qquad
p=\frac{\partial L}{\partial \dot{q}},
\end{gathered}
\end{equation}
where $L=T-U$.
The canonical Poisson structure on $T^*Q$ is determined by the canonical relations
\begin{equation}
\label{eq:canonical-Poisson}
\{\theta , P_\theta\} =1, \qquad \{\varphi , P_\varphi \} =1, \qquad \{\psi , P_\psi \} =1, \qquad \{q , p\} =1,
\end{equation}
with all other brackets equal to zero.

To perform the reduction we introduce the body frame representation of the angular momentum $\boldsymbol m :=\frac{\partial T}{\partial  \boldsymbol \omega}$.
Its expression in  terms of the canonical coordinates  is
\begin{equation}
\label{eq:m in terms of p}
\boldsymbol m =\frac{1}{\sin \theta} \begin{pmatrix} 
 \sin\varphi\,(P_\psi-P_\varphi\cos\theta)+P_\theta\sin \theta \cos\varphi \\
\cos\varphi\,(P_\psi-P_\varphi\cos\theta)-P_\theta\sin \theta \sin\varphi \\
 \sin \theta P_\varphi ,
\end{pmatrix},
\end{equation}
and the kinetic energy may be expressed as
\begin{equation*}
\label{eq3.00}
T=\frac{1}{2\mu_1 }\left ( (\boldsymbol  m , {\bf A}(q) \boldsymbol  m)
+2 m_x p + ( 1 +  \mu)p^2 \right ).
\end{equation*}
This establishes the validity of \eqref{eq:Hamiltonian on sphere}. The commutation relations  \eqref{eq:PB on S2} are directly obtained from
\eqref{eq:canonical-Poisson} and \eqref{eq:m in terms of p}, and  Theorem~\ref{th:reduction S2} is proved.

\begin{remark}Geometrically, the  reduction process 
 carried above consists of working out the left trivialization of  $T^*SO(3)$ to arrive at the decomposition
\begin{equation*}
T^*Q= T^*SO(3) \times T^*(0,\pi) = SO(3) \times so(3)^*  \times (0,\pi) \times \R,
\end{equation*}
and then eliminating the $SO(3)$ component by the symmetries. The resulting bracket is the product of the Lie-Poisson bracket on $so(3)^*$ and the
canonical bracket on $T^*(0,\pi)=(0,\pi) \times \R$ as may be recognized in \eqref{eq:PB on S2}. A fuller description of this process and the reduction for the $N$-body problem in $S^2$ can be found in \cite{1}.
\end{remark}

\subsection{Conserved quantities and reconstruction}
\label{SS:reconstruction-sphere}

Assume that we are given a solution $(\boldsymbol m(t), q(t), p(t))$ to the reduced system~\eqref{eq5.2}.
We now explain how to determine the time dependence of the Euler angles.

Observe that by the rotational invariance of the problem,  the angular momentum $\boldsymbol M$
of the system written in the  fixed axes is constant along the motion:
\[
\boldsymbol M:=\sum\limits^{2}_{\alpha=1}\mu_\alpha\boldsymbol R_\alpha\times\dot{\boldsymbol R}_\alpha.
\]
We choose the fixed frame $OXYZ$ in such a way that
$\boldsymbol M\parallel OZ$. Hence, $\boldsymbol M= (0,0,M_0)^T$ with $M_0\geq 0$, and using \eqref{eq:g}
we get
\begin{equation} \label{eq:m for S2}
\boldsymbol m(t)  =g(t)^{-1}\boldsymbol M = M_0 \begin{pmatrix} \sin\theta \sin\varphi \\ \sin\theta\cos\varphi \\ \cos\theta  \end{pmatrix} .
\end{equation}
Therefore,
\begin{equation}
\label{eq:reconstruction-sphere-thetaphi}
\cos\theta=\frac{m_{z}(t)}{M_0},\quad
\tan\varphi=\frac{m_{x}(t)}{m_{y}(t)}.
\end{equation}
A quadrature for the evolution of~$\psi$ is obtained by using~\eqref{eq:ang velocity} to write
\reversemarginpar
\begin{equation}
\begin{gathered}
\label{eq1.9e}
\dot{\psi}=\frac{\sin\varphi \, \omega_x+\cos\varphi\, \omega_y}{\sin\theta}=
\frac{M_0\left(m_{x}(t)\omega_{x}(t)+
m_{y}(t)\omega_{y}(t)\right)}{M^{2}_{0}-m_{z}^2(t)},\\
\end{gathered}
\end{equation}
where 
\begin{equation*}
\omega_x=\frac{\partial H}{\partial m_x},\quad
\omega_y=\frac{\partial H}{\partial m_y}.
\end{equation*}

A particular case of reconstruction is from relative equilibria, which correspond to equilibria of the reduced system.  Thus, $\boldsymbol m, q,p$ are constant, and hence so is $\boldsymbol\omega$. Consequently, $\theta$ and $\varphi$ are also constant, and only $\psi$ varies, and it varies uniformly. This shows (as expected) that the relative equilibrium consists of a uniform rotation about the axis containing $\boldsymbol M$.

\newpage
\section{Reduction for the Lobachevski plane $L^2$}
\label{S:Red L2}

We now consider the setting of two particles on the Lobachevsky plane $L^2$ (also called the pseudo-sphere, or the hyperbolic plane).  The reduction procedure is very similar to the spherical setting above, so we give fewer details.  

Consider the three-dimensional Minkowski space and attach the fixed coordinate system $OXYZ$ to it. The scalar product
is given by
\begin{equation}
\label{eq1.7}
\langle\boldsymbol R, \boldsymbol R\rangle_\mathrm{K}=(\boldsymbol R, {\mathrm{K}}\boldsymbol R), \quad {\mathrm{K}}=\text{diag}(1, 1, -1),
\end{equation}
where again $(\cdot ,\cdot)$ is the usual Euclidean scalar product.
The Lobachevsky plane is defined  as the  Riemannian manifold that consists of the upper sheet of the hyperboloid
\begin{equation}
\label{eq1.8}
\langle\boldsymbol R, \boldsymbol R\rangle_\mathrm{K}=X^2+Y^2- Z^2=-1,
\end{equation}
equipped with the metric that it inherits from the ambient Minkowski space.
The Gaussian curvature of $L^2$ equals $-1$. A standard expression for the metric is 
$$
ds^2=d\theta^2+\sinh^2\theta\, d\varphi^2,
$$
and may be obtained by  choosing  local coordinates on~\eqref{eq1.8} in the form
$$
X=  \sinh \theta \cos \varphi, \quad Y=  \sinh \theta \sin \varphi, \quad Z= \cosh \theta,
$$
and then restricting~\eqref{eq1.7}. The group of orientation preserving  isometries of $L^2$ is $SO(2,1)$ (the set of $3\times 3$ invertible matrices $g$
satisfying $g^T\mathrm{K}g=\mathrm{K}$) acting naturally on $(X,Y,Z)^T\in L^2$.

 The setup of the  $2$-body problem on $L^2$ is analogous to the one described for $S^2$. In this case,
 the generalization of the inverse square law is given by the potential
\begin{equation}
\label{eq:grav-potential-L2}
U_{\mathrm{grav}}(q)=-\frac{G\mu_1\mu_2}{\tanh q},
\end{equation}
where $q\in (0,\infty)$ is the Riemannian distance between the particles. As for the spherical case, this form of the potential leads to Bertrand's property in Kepler's problem
 and to a natural generalization
of Kepler's first law  \cite{kozlov, Carinena}. We consider more general attractive potentials $U=U(q)$ satisfying $U'(q)>0$ for all $q\in (0,\infty)$ and $U(q)\to -\infty$ as $q\to 0^+$.

The configuration space of the problem is 
\begin{equation}
\label{eq:Q for L2}
Q=L^2\times L^2  \setminus \Delta
\end{equation}
 where $\Delta$ is the collision set. 
The momentum phase space is the eight-dimensional manifold $T^*Q$, with Hamiltonian $H=T+U$. The kinetic energy $T$
is the sum of the kinetic energies of the particles. Each of them is obtained as the product of the mass with the norm squared of the 
velocity  vector, where the norm is computed with respect to the Riemannian metric. 

The problem is invariant under the action of $SO(2,1)$ which simultaneously ``rotates" the particles. In analogy with Theorem~\ref{th:reduction S2} we have:
\begin{theorem}
\label{th:reduction L2}
For $Q$ given by \eqref{eq:Q for L2},
the reduced space $T^*Q/SO(2,1)$ is isomorphic as a Poisson manifold to $\R^3\times (0,\infty)\times \R\ni ({\boldsymbol m},q,p)$. The Poisson structure on this space is defined by the relations,
\begin{equation}
\label{eq:PB on L2}
\{m_x, m_y\}=m_z, \quad \{m_y, m_z\}=-m_x, \quad 
\{m_z, m_x\}=-m_y, \quad \{q, p\}=1,
\end{equation}
where $\boldsymbol m=(m_x,m_y,m_z)^T$.

For the 2-body problem on the Lobachevsky plane with interaction governed by potential energy $U(q)$, the reduced dynamics is Hamiltonian with respect to this Poisson structure  
with Hamiltonian 
\begin{equation}
\label{eq:Hamiltonian on L2}
H({\boldsymbol m},q,p)=\frac{1}{2\mu_1 }\left ( (\boldsymbol  m , {\bf A}(q) \boldsymbol  m)
-2 m_x p + ( 1 +  \mu)p^2 \right ) + U(q),  
\end{equation}
where $(\cdot, \cdot)$ is the Euclidean  scalar product  and
\begin{equation}
\label{eq:defAHyper}
 {\bf A}(q) =\begin{pmatrix}  1 & 0 & 0 \\ 0 & 1 &  - \dfrac{\cosh q}{\sinh q} \\
0 & - \dfrac{  \cosh q}{\sinh q}  & \dfrac{ (\mu +\cosh^2 q)}{ \sinh^2 q} \end{pmatrix}, \qquad \mu=\frac{\mu_1}{\mu_2}.
\end{equation} 
\end{theorem}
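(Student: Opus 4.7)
The proof proceeds in close parallel with that of Theorem~\ref{th:reduction S2}, with the Euclidean scalar product on $\R^3$ replaced throughout by the Minkowski form $\langle\cdot,\cdot\rangle_{\mathrm K}$ and the symmetry group $SO(3)$ replaced by $SO(2,1)$. First I would introduce a moving frame $Oxyz$ adapted to the configuration, following the conventions of Section~\ref{SS:Group-parametrization}: the $Oz$-axis passes through $\mu_1$, so that $\boldsymbol r_1 = (0,0,1)^T$ (which lies on $L^2$ by \eqref{eq1.8}), and $\mu_2$ lies in the half-plane $Oyz$ with $y>0$. Solving for $\boldsymbol r_2$ on $L^2$ under these constraints gives $\boldsymbol r_2 = (0,\sinh q,\cosh q)^T$, where $q\in(0,\infty)$ is the Riemannian distance between the particles. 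This yields the diffeomorphism $Q\cong SO(2,1)\times(0,\infty)$ via $\boldsymbol R_\alpha = g\boldsymbol r_\alpha$.

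Next I would choose a parametrization of $SO(2,1)$ by three ``hyperbolic Euler angles'' $(\theta,\varphi,\psi)$, for instance $g = R^Z_\psi \, B^X_\theta \, R^Z_\varphi$, in which $R^Z_\alpha$ is an ordinary rotation about the $Z$-axis (an elliptic generator that preserves $\mathrm K$) and $B^X_\theta$ is the Lorentz boost in the $X$-direction. Defining the body-frame angular velocity by $\widehat{\boldsymbol\omega}=g^{-1}\dot g \in so(2,1)$ via the hat map determined by $\widehat{\boldsymbol\omega}^T\mathrm K + \mathrm K\widehat{\boldsymbol\omega}=0$, I would derive expressions for $\boldsymbol\omega$ in terms of $(\dot\theta,\dot\varphi,\dot\psi)$ analogous to \eqref{eq:ang velocity}. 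Since the Minkowski form is $SO(2,1)$-invariant, the kinetic energy
\begin{equation*}
T=\tfrac{1}{2}\mu_1\langle\dot{\boldsymbol R}_1,\dot{\boldsymbol R}_1\rangle_{\mathrm K}+\tfrac{1}{2}\mu_2\langle\dot{\boldsymbol R}_2,\dot{\boldsymbol R}_2\rangle_{\mathrm K}
\end{equation*}
depends only on $(\boldsymbol\omega,q,\dot q)$, and a direct computation produces an explicit expression whose matrix part is \eqref{eq:defAHyper}.

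To complete the reduction, I would introduce the canonical momenta conjugate to $(\theta,\varphi,\psi,q)$ as in \eqref{eq456} and define the body-frame angular momentum $\boldsymbol m := \partial T/\partial\boldsymbol\omega$. Expressing $\boldsymbol m$ in terms of the canonical momenta yields an analogue of \eqref{eq:m in terms of p}, and the Legendre transform in $(\boldsymbol\omega,\dot q)$ delivers the reduced Hamiltonian \eqref{eq:Hamiltonian on L2}. The Poisson brackets among the components of $\boldsymbol m$ then follow from the canonical brackets \eqref{eq:canonical-Poisson}: they are precisely the Lie--Poisson brackets on $so(2,1)^*$, which on account of the signature of $\mathrm K$ carry a reversed sign $\{m_x,m_y\}=+m_z$ relative to the $so(3)^*$ case, reproducing \eqref{eq:PB on L2}.

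The main obstacle is bookkeeping rather than conceptual: one must keep careful track of the signs introduced by the Minkowski metric. In particular, the coefficient $-2m_x p$ in \eqref{eq:Hamiltonian on L2} (compared with $+2m_x p$ on the sphere) and the negative off-diagonal entries $-\cosh q/\sinh q$ of \eqref{eq:defAHyper} originate from the fact that the direction in which $q$ varies on $L^2$ is generated by a Lorentz boost rather than by a Euclidean rotation, so that its infinitesimal action on $\boldsymbol r_2$ differs in sign from the rotational case. Once the hyperbolic Euler-angle formulas and the $so(2,1)$ hat map are set up consistently, the remainder of the argument is an algebraic manipulation identical in structure to that of Section~\ref{SS:reductionS2}.
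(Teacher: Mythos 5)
Your proposal is correct and follows essentially the same route as the paper: the moving frame with $\boldsymbol r_1=(0,0,1)^T$, $\boldsymbol r_2=(0,\sinh q,\cosh q)^T$ giving $Q\cong SO(2,1)\times(0,\infty)$, the hyperbolic Euler angles $g=R^Z_\psi R^X_\theta R^Z_\varphi$ with the middle factor a boost, the body angular velocity in $so(2,1)$ (your hat map is the paper's $\widetilde{\boldsymbol\omega}=\widehat{\boldsymbol\omega}\mathrm{K}$), and the passage to $\boldsymbol m=\partial T/\partial\boldsymbol\omega$ with the brackets derived from the canonical ones, yielding the Lie--Poisson structure on $so(2,1)^*$ times the canonical bracket. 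This matches the argument of Sections~\ref{SS:L2 conventions} and~\ref{SS:reductionL2}.
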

As a consequence of the theorem, the reduced equations of motion are
\begin{equation}
\label{eq2.4}
\dot{\boldsymbol  m}=(\mathrm{K}\boldsymbol  m)\times \frac{\partial H}{\partial \boldsymbol  m}, \quad \dot{q}=\frac{\partial H}{\partial p}, \quad
\dot{p}=-\frac{\partial H}{\partial q}.
\end{equation}

 The Poisson bracket \eqref{eq:PB on L2} has generic rank 4 (its rank drops to 2 when $\boldsymbol m=0$). Its 4-dimensional symplectic leaves are the
regular level sets of the Casimir function
\begin{equation} \label{eq:Casimir for L^2}
C(\boldsymbol  m) =-\langle \boldsymbol  m, \boldsymbol  m \rangle_\mathrm{K}= -m_x^2 -m_y^2+m_z^2,
\end{equation}
which is a first integral of the reduced equations of motion  \eqref{eq2.4}.

The  following 2 subsections give a proof of Theorem~\ref{th:reduction L2}.

\subsection{Group parametrization of configurations}
\label{SS:L2 conventions}

The proof of Theorem~\ref{th:reduction L2} proceeds as in the spherical case. We introduce a moving  frame $Oxyz$ 
 which is orthogonal in the metric~\eqref{eq1.7}, such that the point masses have body coordinates 
$$
 \boldsymbol r_{1}=( 0, 0, 1)^T, \qquad \boldsymbol r_{2}=( 0, \sinh q, \cosh q)^T,
$$
 where   $q\in (0, \infty)$ is the hyperbolic distance between the masses. As for the spherical case,
 the moving frame is completely determined by the requirement that $\mu_1$ is on the $z$-axis and $\mu_2$ lies on the
 positive $Oyz$ plane. Hence, as  a manifold,  $Q=SO(2,1)\times (0,\infty)$.

We introduce local coordinates for the group $SO(2, 1)$ by adapting the  Euler angles for $SO(3)$ used 
in Section~\ref{SS:Group-parametrization} (see Fig.\,\ref{figris}).  Consider an element $g\in SO(2,1)$ that 
relates the moving and the fixed frame  as a sequence of 3 rotations $g=R^Z_\psi R^X_\theta R^Z_\varphi$, with the second rotation $R^X_\theta$ being  hyperbolic. This leads to 
\begin{equation}
\label{eq:Euler for L2}
\begin{small}
g(\varphi, \theta, \psi)=
\begin{pmatrix}
\cos \varphi \cos \psi-\cosh \theta \sin\varphi \sin \psi & -\sin \varphi \cos \psi-\cosh \theta \cos\varphi \sin \psi & -\sinh \theta \sin \psi \\
\cos \varphi \sin \psi+\cosh \theta \sin \varphi \cos \psi & -\sin \varphi \sin \psi+\cosh \theta \cos \varphi \cos \psi & \sinh \theta \cos \psi \\
\sinh \theta \sin \varphi & \sinh \theta \cos \varphi & \cosh \theta
\end{pmatrix},
\end{small}
\end{equation}
with $0\leq\varphi, \psi <2\pi$, $\theta \in [0,\infty)$. In this way, $(\theta, \varphi, \psi,q)$ are generalized coordinates for the problem. The 
position of the masses with respect to the fixed coordinate system are given by $\boldsymbol R_{1}= g \boldsymbol r_{1}$ and $\boldsymbol R_{2}= g \boldsymbol r_{2}$,
and the kinetic energy of the system is
\begin{equation}
\label{eq:KinEnergyL2}
T=\frac{1}{2} \left (\mu_1 \langle  \dot{ \boldsymbol R}_{1} , \dot{ \boldsymbol R}_{1}  \rangle_\mathrm{K}  + \mu_2  \langle  \dot{ \boldsymbol R}_{2} , \dot{ \boldsymbol R}_{2}  \rangle_\mathrm{K} \right ).
\end{equation}

\begin{figure}[h]
\centering
\includegraphics[totalheight=5cm]{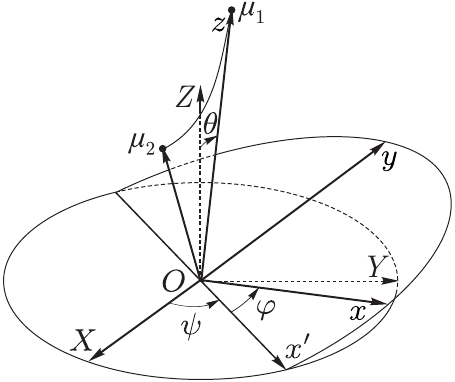}
\caption{Generalized Euler angles for the 2-body configuration on $L^2$.}
\label{figris}
\end{figure}

Analogous to \eqref{eq:ang velocity}, the  left-invariant angular velocity $\boldsymbol  \omega\in \R^3$ is defined by $\widetilde{\boldsymbol  \omega}=g^{-1}\dot g$. Here $\widetilde{\boldsymbol  \omega}:=\widehat{\boldsymbol \omega} \mathrm{K}$ where,
 as usual,  $\widehat{\boldsymbol \omega}$ is
given by \eqref{eq:hatmap}. Performing the algebra we find
\begin{equation}
\label{eq:ang vel L2}
\omega_x=\dot \theta \cos\varphi+\dot\psi \sinh\theta \sin\varphi, \quad \omega_y=-\dot \theta \sin\varphi+\dot\psi \sinh\theta \cos\varphi,
\quad \omega_z=\dot \varphi+\dot \psi\cosh \theta.
\end{equation}
The kinetic energy  $T$ is written in terms of $(\boldsymbol \omega, \dot q, q)$ as
\begin{equation*}
T= \frac{\mu_1}{2} ({\bf A}(q)^{-1}  \boldsymbol \omega,  \boldsymbol \omega)   +\frac{\mu_2}{2} (\omega_x + \dot q)^2 ,
\end{equation*}
with ${\bf A}(q)$  given by \eqref{eq:defAHyper}.

\subsection{Generalized momenta and reduction}
\label{SS:reductionL2}

The procedure now is exactly as in Section~\ref{SS:reductionS2}. After the introduction of the
angular momentum vector in the moving frame $\boldsymbol m :=\frac{\partial T}{\partial  \boldsymbol \omega}$ and the canonical
momenta $(P_\theta, P_\varphi, P_\psi,p)$ that are conjugate to $(\theta, \varphi, \psi, q)$, the relevant formulae to complete the proof
of Theorem~\ref{th:reduction L2} are
\begin{equation}
\label{eq:m in terms of p L2}
\boldsymbol m =\frac{1}{\sinh \theta} \begin{pmatrix} 
 \sin\varphi\,(P_\psi-P_\varphi\cosh\theta)+P_\theta\sinh \theta \cos\varphi \\
\cos\varphi\,(P_\psi-P_\varphi\cosh\theta)-P_\theta\sinh \theta \sin\varphi \\
 \sinh \theta P_\varphi ,
\end{pmatrix},
\end{equation}
and 
\begin{equation*}
\label{eq3.01}
T=\frac{1}{2\mu_1 }\left ( (\boldsymbol  m , {\bf A}(q) \boldsymbol  m)
-2 m_x p + ( 1 +  \mu)p^2 \right ).
\end{equation*}

\begin{remark}The geometric interpretation is also analogous to the spherical case. 
We have worked out the left trivialization of  $T^*SO(2,1)$ to arrive at the decomposition
\begin{equation*}
T^*Q= T^*SO(2,1) \times T^*(0,\infty) = SO(2,1) \times so(2,1)^*  \times (0,\infty) \times \R.
\end{equation*}
We  then eliminated the $SO(2,1)$ component to obtain a bracket on $T^*Q/SO(2,1)$ that is the product of the Lie-Poisson bracket on $so(2,1)^*$ and the
canonical bracket on $T^*(0,\infty)=(0,\infty) \times  \R$.
\end{remark}

\subsection{Conserved quantities and reconstruction}
\label{SS:reconstruction-L2}  

\def\m{{\boldsymbol m}}
\def\M{{\boldsymbol M}}

As for the spherical case, we indicate how to determine the time evolution of $(\theta, \varphi, \psi)$ assuming
that a solution $(\boldsymbol m(t), q(t), p(t))$ to the reduced system~\eqref{eq2.4} is known.

By the $SO(2,1)$  invariance of the problem,  the angular momentum $\boldsymbol M$
of the system written in the  fixed axes is constant along the motion. According to our previous definitions we may write
\[
\boldsymbol M:=\sum\limits^{2}_{\alpha=1}\mu_\alpha(  \mathrm{K} \boldsymbol R_\alpha) \times ( \mathrm{K} \dot{\boldsymbol R}_\alpha),
\]
and we have $\boldsymbol m(t)  = g(t)^{-1}\boldsymbol M$.

In contrast to the spherical setting, here the geometry is more subtle and for example we cannot simply choose $\boldsymbol M = (0,0,M_0)^T$ as this is insufficiently general.  For a given $\m(0)$ the momentum vector $\m(t)$ evolves on a surface with different possible geometries depending on the sign of the Casimir function $C$ given 
by~\eqref{eq:Casimir for L^2} . See Fig.\,\ref{fig:hyperboloids}. If $\m\neq0$ there are three types:
\begin{itemize}
\item \emph{Elliptic momentum}  Here $C(\m)>0$, and the momentum evolves on one sheet of a two-sheeted hyperboloid (the sheets are distinguished by the sign of $m_z$).  For a motion with elliptic momentum, one can choose a frame so that $\M=(0,0,M_0)^T$, with $M_0\in\R$.  Since such an $\M$ defines a point on the hyperbolic plane $L^2$, the subgroup of transformations in $SO(2,1)$ that preserve an elliptic momentum fixes that point in $L^2$ and hence is an elliptic subgroup; it is the group of rotations about $\M$ and is isomorphic to $SO(2)$. 
\item \emph{Hyperbolic momentum}  Here $C(\m)<0$, and the momentum evolves on  a one-sheeted hyperboloid. Here one can choose $\M=(0,M_0,0)^T$ with $M_0>0$.  The subgroup of transformations in $SO(2,1)$ that preserve a hyperbolic momentum is a hyperbolic subgroup, and is the group of hyperbolic rotations about $\M$ which is isomorphic to $\R$. 
\item \emph{Parabolic momentum} Here $C(\m)=0$, and the momentum evolves on one sheet of the cone with the origin removed (again, the two sheets are distinguished by the sign of $m_z$).  Here one can chose $\M=\pm(1,0,1)^T$, and the subgroup preserving such a vector is a parabolic subgroup and is also isomorphic to $\R$. 
\end{itemize}

Note that in Minkowski geometry, elliptic, hyperbolic and parabolic vectors are often called timelike, spacelike and lightlike vectors, repectively.

\begin{figure}
\centering
\includegraphics[width=3cm]{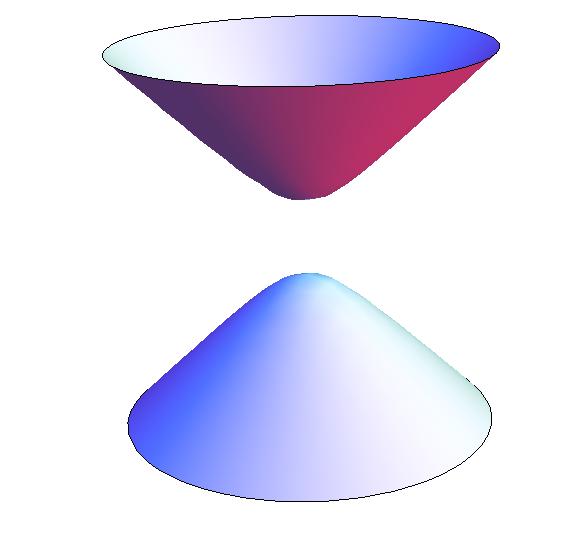}\includegraphics[width=3cm]{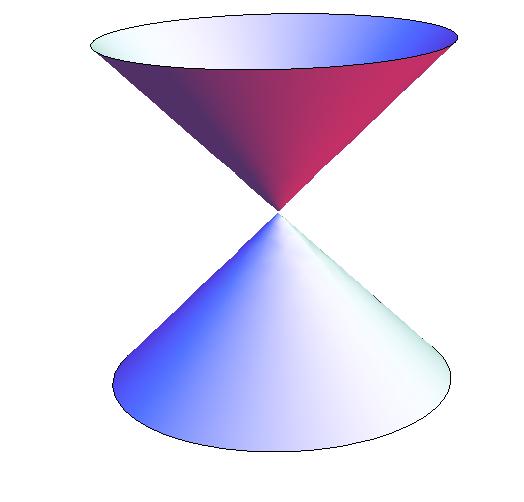}\includegraphics[width=3cm]{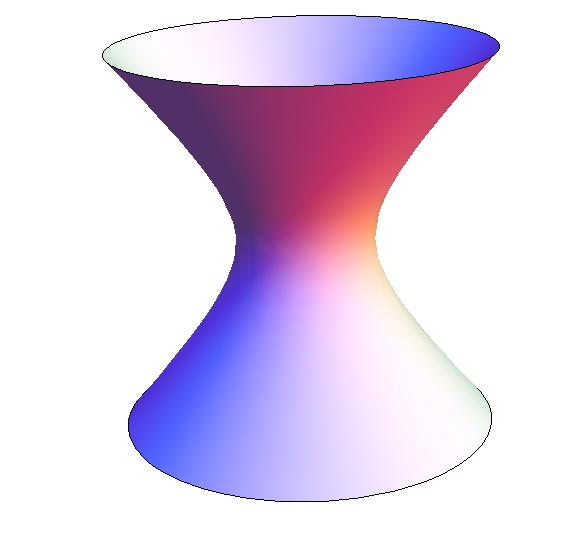}
\begin{minipage}{0.8\textwidth}
\caption{The three possible types of non-zero momentum, from left to right: elliptic, parabolic and hyperbolic, determined by the sign of the Casimir \eqref{eq:Casimir for L^2}.}
\label{fig:hyperboloids}
\end{minipage}
\end{figure}

\medskip

\noindent\emph{Elliptic momentum.} Consider  a solution $(\boldsymbol{ m}(t),q(t),p(t))$ of the reduced system \eqref{eq2.4} having $m_z(t)>0$.
In this case, by a choice of inertial frame, we can assume $\boldsymbol M= (0,0,M_0)^T$ with $M_0> 0$, and using \eqref{eq:Euler for L2}
we get
\begin{equation} \label{eq:m for L2}
\boldsymbol m(t)  = g(t)^{-1}\boldsymbol M = M_0 \begin{pmatrix} -\sinh\theta \sin\varphi \\ - \sinh\theta\cos\varphi \\ \cosh\theta  \end{pmatrix} .
\end{equation}
Therefore,
\begin{equation}
\label{eq:reconstruction-sphere-thetaphi}
\cosh\theta=\frac{m_{z}(t)}{M_0},\quad
\tan\varphi=\frac{m_{x}(t)}{m_{y}(t)}.
\end{equation}
The quadrature for the evolution of~$\psi$ is in this case obtained by using~\eqref{eq:ang vel L2} to write
\reversemarginpar
\begin{equation}
\begin{gathered}
\label{eq1.9ehyp}
\dot{\psi}=\frac{\sin\varphi \, \omega_x+\cos\varphi\, \omega_y}{\sinh\theta}=
-\frac{M_0\left(m_{x}(t)\omega_{x}(t)+
m_{y}(t)\omega_{y}(t)\right)}{M^{2}_{0}-m_{z}^2(t)},\\
\end{gathered}
\end{equation}
where 
\begin{equation*}
\omega_x=\frac{\partial H}{\partial m_x},\quad
\omega_y=\frac{\partial H}{\partial m_y}.
\end{equation*}

At a relative equilibrium of elliptic type, since $\boldsymbol m$, $q,p$ are constant, we deduce that $\varphi,\theta$ are constant, while $\psi$ varies uniformly. Thus an elliptic relative equilibrium consists of a uniform rotation about the $\boldsymbol M$-axis, and hence is a periodic orbit.

\bigskip

\noindent\emph{Hyperbolic momentum} 
In this case, by a choice of inertial frame, we can assume $\boldsymbol M= (M_0,0,0)^T$ with $M_0> 0$.  For the reconstruction it is convenient to use an alternative set of hyperbolic Euler angles defined by writing $g=R^X_\kappa R^Z_\psi R^X_\theta$, where the
rotations with respect to $\kappa$ and $\theta$ are hyperbolic.  This leads to:
\begin{equation}
\label{eq:Euler Angle L2 New}
g=\begin{pmatrix}
  \cos \psi &-\sin\psi \cosh\theta &-\sin\psi \sinh\theta \\ 
  \cosh \kappa  \sin \psi &
  \cosh \kappa \cos \psi \cosh \theta +\sinh \kappa \sinh \theta &
  \cosh \kappa \cos\psi \sinh \theta +\sinh \kappa \cosh \theta \\ 
  \sinh\kappa \sin \psi &
  \sinh \kappa\cos \psi \cosh \theta +\cosh \kappa \sinh \theta &
  \sinh \kappa \cos \psi \sinh \theta +\cosh \kappa \cosh \theta 
\end{pmatrix},
\end{equation}
in place of \eqref{eq:Euler for L2}  (note that, unlike (\ref{eq:Euler for L2}), this representation is not global). Putting $\m(t)=g^{-1}(t)\M$ with $\M=(M_0,0,0)$ one finds
\begin{equation}
 \label{eq:m for L2 New}
\m(t)=g^{-1}(t)\M = M_0\begin{pmatrix} \cos\psi \\ -\sin\psi \cosh\theta \\ \sin\psi
\sinh\theta\end{pmatrix}.
\end{equation}
This choice of Euler angles only allows reconstruction for $|m_x(t)|\leq M_0$, but that will suffice for reconstructing the relative equilibria we find for this system. Therefore,
\begin{equation*}
\cos\psi =\frac{m_{x}(t)}{M_0},\quad
\tanh\theta=-\frac{m_{z}(t)}{m_{y}(t)}.
\end{equation*}
The quadrature for the evolution of~$\kappa$ is in this case obtained by noting that the angular velocity $\boldsymbol \omega$ (following the same construction leading to \eqref{eq:ang vel L2}) is given by
\begin{equation*}
\boldsymbol{\omega}=\left (\dot \theta +\dot \kappa \cos \psi , \, \dot \psi \sinh \theta -\dot \kappa  \cosh \theta \sin \psi,\, \dot \psi \cosh \theta - \dot \kappa \sinh \theta \sin \psi \right ),
\end{equation*}
and therefore,
\reversemarginpar
\begin{equation}
\label{eq:dkappa}
\begin{gathered}
\dot{\kappa}=\frac{- \cosh \theta  \, \omega_y+\sinh \theta \, \omega_z}{\sin \psi}=
\frac{M_0\left(m_{y}(t)\omega_{y}(t)+
m_{z}(t)\omega_{z}(t)\right)}{M^{2}_{0}-m_{x}^2(t)},\\
\end{gathered}
\end{equation}
where 
\begin{equation*}
\omega_y=\frac{\partial H}{\partial m_x},\quad
\omega_z=\frac{\partial H}{\partial m_y}.
\end{equation*}

\newpage

\section{Relative equilibria for the  2-body problem on the Lobachevsky plane}

Consider two masses $\mu_1$ and $\mu_2$ on the Lobachevsky plane $L^2$, interacting via an attracting conservative force with potential energy $U(q)$
as introduced in Section~\ref{S:Red L2} where $q$ is the Riemannian distance between the masses and where by attracting we mean $U'(q)>0$ for all $q\in(0,\infty)$.  
Recall that the mass ratio is denoted by $\mu=\mu_1/\mu_2$.

\subsection{Classification of relative equilibria}
\label{SS:Classification L2}

The symmetry group of this problem is $SO(2,1)$ (or equivalently $SL(2,\R)$) and has three types of non-trivial 1-parameter subgroup 
(as is well known from hyperbolic geometry), and correspondingly 3 types of relative equilibrium (see Definition\,\ref{def:RE}):
\begin{itemize}
\item \emph{elliptic subgroup}: here the subgroup is compact and isomorphic to $SO(2)$, and hence the motion is periodic; REs of this type can only occur if $C>0$.
\item \emph{hyperbolic subgroup}: here the subgroup is not compact, consists of semisimple matrices, and is isomorphic simply to $\R$ and hence the motion is unbounded; REs of this type can only occur if $C<0$.
\item \emph{parabolic subgroup}: this is also non-compact, so unbounded trajectories, but unlike the first two, the elements are not semisimple; and an RE of parabolic type would require $C=0$ (although we show below there are none for the 2-body system).
\end{itemize}
The relation between the type of subgroup and the sign of the Casimir $C$ is described in Section\,\ref{SS:reconstruction-L2}.

\begin{theorem}\label{thm:existence on L2}
In the 2-body problem on the Lobachevsky plane, governed by an attractive force, for each value of $q>0$ there are precisely two relative equilibria where the particles are a distance $q$ apart, one of elliptic and one of hyperbolic type.  
\end{theorem}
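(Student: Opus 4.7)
The plan is to enumerate the equilibria of the reduced system \eqref{eq2.4} at each fixed $q>0$. Setting $\partial H/\partial p = 0$ immediately gives $p = m_x/(1+\mu)$, and then the body-frame angular velocity $\boldsymbol\omega = \partial H/\partial\boldsymbol m$ reads
\begin{equation*}
\omega_x = \frac{m_x}{\mu_1+\mu_2},\qquad \omega_y = \frac{m_y - \coth q\,m_z}{\mu_1},\qquad \omega_z = \frac{-\coth q\,m_y + \bigl(1 + (\mu+1)/\sinh^2 q\bigr)m_z}{\mu_1}.
\end{equation*}
The equation $(\mathrm{K}\boldsymbol m)\times\boldsymbol\omega = 0$ says precisely that $\boldsymbol\omega$ is parallel to $\mathrm{K}\boldsymbol m = (m_x,m_y,-m_z)^T$, and $\partial H/\partial q = 0$ supplies one further scalar equation.

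First I would show that $m_x = 0$ at every RE. If $m_x\ne 0$, the $x$-component of the parallelism condition forces the proportionality factor to be $\lambda = 1/(\mu_1+\mu_2)$. Substituting into the $y$-component yields $m_y = (\mu+1)\coth q\,m_z$; feeding this back into the $z$-component and simplifying using $\cosh^2 q - 1 = \sinh^2 q$ leads to the algebraic identity $(\mu+1)^2 = 2\mu+1$, equivalently $\mu^2 = 0$, contradicting $\mu>0$. The degenerate branches $\boldsymbol m = 0$ and $\boldsymbol\omega = 0$ are excluded because they force $\partial H/\partial q = U'(q)\ne 0$. Hence $m_x = 0$ and $p = 0$ at every RE.

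With $m_x = 0$, the parallelism condition collapses to the single scalar equation $m_y\omega_z + m_z\omega_y = 0$, which simplifies to
\begin{equation*}
\coth q\,(m_y^2+m_z^2) = m_y m_z\!\left(2 + \frac{\mu+1}{\sinh^2 q}\right).
\end{equation*}
Setting $t = m_z/m_y$ (the case $m_y = 0$ forces $m_z = 0$ and then contradicts $\partial H/\partial q = 0$), this becomes the quadratic $\alpha t^2 - \beta t + \alpha = 0$ with $\alpha = \coth q$ and $\beta = 2 + (\mu+1)/\sinh^2 q$. A direct calculation yields $\beta - 2\alpha = (\mu + e^{-2q})/\sinh^2 q > 0$, so the discriminant is positive; since the product of the roots equals $1$ and the sum equals $\beta/\alpha > 0$, there are two positive roots $t_- < 1 < t_+$ with $t_+t_- = 1$. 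The remaining equation $\partial H/\partial q = 0$ determines $m_y^2$ in terms of $t$ and $q$, and positivity of $m_y^2$ is equivalent to $t > \tanh q/(\mu+1)$. I would verify this by evaluating the polynomial $g(x) = x^2 - (\beta/\alpha)x + 1$ at $x = \tanh q/(\mu+1)$ and at $x = 1$: the former gives $\mu^2\tanh^2 q/(\mu+1)^2 > 0$, while the latter gives $2 - \beta/\alpha < 0$, which together force $\tanh q/(\mu+1) < t_- < 1 < t_+$.

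Finally, since $m_x = 0$, the Casimir \eqref{eq:Casimir for L^2} becomes $C = m_z^2 - m_y^2 = m_y^2(t^2-1)$: the root $t_+ > 1$ produces the elliptic RE ($C>0$) and $t_- < 1$ produces the hyperbolic RE ($C<0$); no parabolic RE exist because neither root equals $1$, in accordance with Section\,\ref{SS:reconstruction-L2}. The sign change $(m_y,m_z)\mapsto(-m_y,-m_z)$ corresponds to time reversal and is therefore identified, so each root gives one RE up to the symmetries specified in the introduction. The main obstacle I anticipate is the Case~B computation establishing $m_x = 0$: the cancellations that collapse three separate equations into the single contradiction $\mu^2 = 0$ are not evident in advance and depend critically on the structure of $\mathbf{A}(q)$ in \eqref{eq:defAHyper}.
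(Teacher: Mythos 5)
Your proposal is correct in substance and arrives at the same classification, but it completes the computation by a genuinely different device than the paper. Both arguments share the same skeleton: equilibria of the reduced system \eqref{eq2.4}, elimination of $p$ via \eqref{eq:REH-p}, and reduction to $m_x=0$. From there the paper splits according to the sign of the Casimir, parametrising $(m_y,m_z)$ as $M(\sinh\alpha,\cosh\alpha)$ (elliptic) or $M(\cosh\alpha,\sinh\alpha)$ (hyperbolic), solving the common relation $\sinh 2(q-\alpha)=\mu\sinh 2\alpha$ explicitly for $\alpha$, checking positivity of the resulting $M^2$ separately in each case (the hyperbolic one needing an extra factorisation), and excluding parabolic RE by a separate substitution $m_y=\pm m_z$. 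You instead collapse the first component of \eqref{eq:REH-m} into one quadratic in the ratio $t=m_z/m_y$ whose roots have product $1$; your discriminant computation (in your notation $\beta-2\alpha=(\mu+e^{-2q})/\sinh^2 q>0$) gives exactly one root on each side of $1$, which produces the elliptic and hyperbolic families and the non-existence of parabolic RE all at once, and positivity of the scale $m_y^2$ fixed by \eqref{eq:REH-q} is checked uniformly by evaluating the quadratic at $\tanh q/(\mu+1)$ and at $1$. I checked your formulas for $\boldsymbol\omega$, the threshold $t>\tanh q/(\mu+1)$, and the value $\mu^2\tanh^2 q/(\mu+1)^2$; they are correct, and your Case-B identity is indeed equivalent to $\mu^2=0$. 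The uniform treatment is arguably cleaner; its only cost is that it does not produce the explicit solution $\alpha(q)$ of \eqref{eq:alphaq}, which the paper reuses later for reconstruction and for the stability analysis.

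There is one small hole. Your Case-B contradiction requires dividing by $m_z$: after the $y$-component gives $m_y=(\mu+1)\coth q\,m_z$, the subcase $m_z=0$ (hence $m_y=0$) with $m_x\ne 0$ satisfies the parallelism condition with $\lambda=1/(\mu_1+\mu_2)$ and produces no contradiction from \eqref{eq:REH-m}. This configuration is not covered by your listed degenerate branches $\boldsymbol m=0$ and $\boldsymbol\omega=0$. It is excluded exactly as in the paper's case 1): when $m_y=m_z=0$ the $q$-dependent part of the kinetic energy vanishes, so \eqref{eq:REH-q} reduces to $U'(q)=0$, impossible for an attractive potential. With that one line added, your argument is complete.
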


Elliptic and hyperbolic RE are illustrated as bifurcation curves on the energy-momentum diagram in Fig.\,\ref{F:DiagramL2}.
Elliptic RE correspond the  curve with a cusp on the half-plane $C>0$  whereas hyperbolic RE 
correspond to the smooth bifurcation curve on the half-plane $C<0$. These curves meet each other
at the punctured point when $C=0$ which corresponds to the non-existence of parabolic RE.

The qualitative properties of the motion along these RE is explained in subsection~\ref{ss:reconstructionRE-L2}. The results of 
Theorem~\ref{thm:existence on L2} were given before in   \cite{DiacuPCh,LGN2016} for the gravitational potential \eqref{eq:grav-potential-L2}. Here
we extend the classification to arbitrary attractive potentials.

\subsection*{Proof of Theorem\,\ref{thm:existence on L2}}

 Relative equilibria correspond to equilibrium points of the reduced system~\eqref{eq2.4}, so they are  solutions of the following set of equations:
\begin{subequations}\label{eq:REH}
\begin{align}
\label{eq:REH-p}
\frac{\partial H}{\partial p }&=0, \\
\label{eq:REH-m}
 (  {\mathrm{K}} \boldsymbol   m) \times \frac{\partial H}{\partial \boldsymbol  m}& ={\bf 0}, \\
\label{eq:REH-q}
\frac{\partial H}{\partial q} & =0,
\end{align}
\end{subequations}
where the Hamiltonian $H$ is given by~\eqref{eq:Hamiltonian on L2}, the matrix ${\mathrm{K}}$ is defined in~\eqref{eq1.7} and $\times$ is the vector product in $\R^3$.
The condition \eqref{eq:REH-p} yields
\begin{equation}
\label{eq:pH}
p=\frac{m_x}{1+\mu}.
\end{equation}
Substituting this expression into the last two components of \eqref{eq:REH-m} yields two possibilities:
\begin{enumerate}
\item[{1)}] $m_y=m_z=0$. In this case \eqref{eq:REH-q} implies  $U'(q)=0$, and  there is no solution for $q$ by our assumption that $U$ is attractive.
\item[{2)}] $m_x=0$. We analyse  this case in what follows assuming that $m_y$ and $m_z$ do not vanish simultaneously, since otherwise we are back in case (i).
\end{enumerate}

\emph{Elliptic relative equilibria.} Recall that we are considering the case $m_x=0$. We parametrise the open region of the reduced phase space having
$C(\boldsymbol m)>0$
with the
parameters $M \neq 0$ and $\alpha \in \R$, by putting
\begin{equation}
\label{eq:mymzEH}
m_y=M \sinh \alpha, \qquad m_z=M\cosh \alpha.
\end{equation}
Then $C(\boldsymbol m)=M^2$ and \eqref{eq:REH-m} is satisfied provided that
\begin{equation}
\label{eq:CofMH}
\sinh 2(q-\alpha)=\mu \sinh 2\alpha.
\end{equation}
Equation \eqref{eq:CofMH} admits the unique solution for $\alpha$
\begin{equation}
\label{eq:alphaq}
\alpha = \frac{q}{2} +\frac{1}{4}  \ln \left ( \frac{\mu+e^{2q}}{1+\mu e^{2q}}\right ).
\end{equation}
With the above value of $\alpha$, equation \eqref{eq:REH-q} is satisfied provided that $M$ is such that
\begin{equation}
\label{eq:M0EH}
M^2=\frac{\mu_1 \sinh^3 q \,U'(q)}{\mu \cosh^2\alpha \cosh q + \cosh\alpha \cosh (q- \alpha)}.
\end{equation}
For an attractive potential, the right hand side of this expression is strictly positive. The choice of sign for $M$ corresponds to two solutions related by  the  time reversing
symmetry of the system, and we do not distinguish them in our classification. 

\emph{Hyperbolic relative equilibria.} The analysis is analogous to the above. This time we put
\begin{equation}
\label{eq:mymzHH}
m_y=M\cosh \alpha, \qquad m_z=M\sinh \alpha,
\end{equation}
so $C(\boldsymbol  m)=-M^2<0$. Taking into account that $m_x=0$, then  \eqref{eq:REH-m} is satisfied provided that \eqref{eq:CofMH},  and hence also  \eqref{eq:alphaq}, hold.
The condition for $M$ in this case is
\begin{equation}
\label{eq:MoHypH}
M^2=\frac{\mu_1 \sinh^3q U'(q)}{\mu \sinh^2\alpha \cosh q - \sinh\alpha \sinh (q- \alpha )}.
\end{equation}
Using \eqref{eq:CofMH} to eliminate $\mu$, one can write the denominator of the above expression as
\begin{equation*}
\begin{split}
 \frac{\sinh^2\alpha \sinh(q-\alpha)}{\sinh 2\alpha} \left ( 2 \sinh^2(q-\alpha)\cosh \alpha + \sinh \alpha \sinh(2(q-\alpha)) \right ).
\end{split}
\end{equation*}
Considering that \eqref{eq:alphaq}  implies $0<\alpha<q$, it is immediate to check that all of the terms on the right hand side of this expression are positive. Therefore, the right hand side of~\eqref{eq:MoHypH} is also positive and 
 there is a unique solution for $M$ (modulo the time reversibility symmetry of the problem).

\emph{Parabolic relative equilibria.} Finally we show that there are no solutions of \eqref{eq:REH-m} having $C(\boldsymbol  m)=0$. Substituting $m_x=0$ and  $m_y= \pm  m_z$
into the first component of \eqref{eq:REH-m} yields, after a simple calculation,
\begin{equation*}
\cosh 2q\pm \sinh 2q+\mu=0,
\end{equation*}
which clearly has no solutions for $q$ since $\mu>0$.

\subsection{Reconstruction} \label{ss:reconstructionRE-L2}

Here we relate the (relative) equilibria described above in the reduced space to the corresponding motion in the original unreduced space.  
The qualitative properties of these unreduced RE depend on the sign that $C$ takes along them.
\begin{enumerate}
\item Fix $q>0$ and consider the corresponding elliptic RE. From the discussion above, the angular momentum $\boldsymbol m$ satisfies $m_x=0$ and 
$m_y$ and $m_z$ are given by \eqref{eq:mymzEH} where $\alpha>0$ is determined by \eqref{eq:alphaq} and $M$ by \eqref{eq:M0EH}.
We use the generalised Euler angle convention given by \eqref{eq:Euler for L2} for the reconstruction.
The equations  \eqref{eq:m for L2}  are satisfied for the constant value of  the angles $\varphi=\pi$ and $\theta =\alpha$ (the other possibility,
namely that $\varphi=0$ and $\theta =-\alpha$, leads to a solution in the same group orbit of the RE - see the discussion after Definition~\ref{def:RE}). 
The  angle $\psi =\omega t$ with constant $\omega:=\dot \psi$  given by \eqref{eq1.9ehyp}. Using \eqref{eq:M0EH} and \eqref{eq:CofMH} it is possible to simplify 
\begin{equation*}
\omega^2 = \zeta^{-1}U'(q),
\end{equation*}
 where $\zeta=\frac{\mu_1\sinh 2\alpha}{2}=\frac{\mu_2\sinh 2(q-\alpha)}{2}$. According to the conventions of section~\ref{SS:L2 conventions}, the
  position of the particles in the fixed frame along the RE is
  \begin{equation*}
{\boldsymbol R}_1(t)= \begin{pmatrix}  -\sinh \alpha \sin \omega t \\ \sinh \alpha \cos \omega t \\ \cosh \alpha \end{pmatrix}, \qquad
 {\boldsymbol R}_2(t)=\begin{pmatrix}  \sinh (q-\alpha) \sin \omega t \\ - \sinh (q-\alpha) \cos \omega t \\ \cosh (q-\alpha) \end{pmatrix}.
\end{equation*}
The motion is then periodic with constant angular speed 
 $\omega$. We note that throughout the motion, the particle $\mu_1$ (respectively $\mu_2$) has constant distance $\alpha$ (respectively $q-\alpha$) to the  
point with space coordinates $(0,0,1)$. This point corresponds to the {\em centre of mass} as defined in \cite{LGN2016}.
 Using \eqref{eq:alphaq} one may check that this point is closer to the heavier mass. Hence, elliptic RE in $L^2$ generalise the RE 
 in flat space where the centrifugal forces are balanced by gravitational attraction.
 
 In the next section we show that these motions are stable if $0<q<q^*$ and unstable if $q>q^*$. The critical distance $q^*$ depends on the mass ratio $\mu$ as 
 indicated in the statement of Theorem~\ref{S:Stability-hyperbolic} below.

\item Now consider the hyperbolic RE corresponding to $q>0$.  The angular momentum $\boldsymbol m$ satisfies $m_x=0$ and 
$m_y$ and $m_z$ are given by \eqref{eq:mymzHH} where $\alpha>0$ is determined by \eqref{eq:alphaq} and $M$ by \eqref{eq:MoHypH}.
This time we use the generalised Euler angle convention given by \eqref{eq:Euler Angle L2 New}
for the reconstruction. Equation \eqref{eq:m for L2 New} implies that  $\psi$ and $\theta$ have the constant
values $\theta=-\alpha$, $\psi=3\pi/2$ along the motion. On the other hand $\kappa=\omega t$ where $\omega:=\dot \kappa$ is
defined by  \eqref{eq:dkappa}. Using 
\eqref{eq:MoHypH} and \eqref{eq:CofMH}, one may simplify
\begin{equation*}
\omega^2 = \zeta^{-1}U'(q),
\end{equation*}
where, as before, $\zeta=\frac{\mu_1\sinh 2\alpha}{2}=\frac{\mu_2\sinh 2(q-\alpha)}{2}$. 
This time, the  position of the particles in the fixed frame along the RE is
  \begin{equation*}
{\boldsymbol R}_1(t)= \begin{pmatrix}  \sinh \alpha  \\ \cosh \alpha \sinh \omega t \\ \cosh \alpha \cosh \omega t \end{pmatrix}, \qquad
 {\boldsymbol R}_2(t)=\begin{pmatrix} \sinh (q+\alpha)  \\ \cosh (q+\alpha) \sinh \omega t \\ \cosh (q+\alpha) \cosh \omega t \end{pmatrix}.
\end{equation*}

Note that the motion along these RE is unbounded.  This type of relative equilibrium does not exist in the positive or zero curvature case. As explained in \cite{LGN2016} their existence is due to the property of the Lovachevsky space that makes parallel geodesics ``separate''. This separating effect is balanced by the attractive forces in a very delicate manner.   We show below that these RE are all unstable. 
\end{enumerate}

\subsection{Stability analysis of the relative equilibria on the Lobachevsky plane}
\label{S:Stability-hyperbolic}
The stability of the relative equilibria depends on the form of the potential $U(q)$. We restrict attention here to the gravitational potential given in \eqref{eq:grav-potential-L2}.
We assume that $\mu_1=1$ and $G\mu_1\mu_2=1$,\footnote{This assumption is done without loss of generality
since one may eliminate these quantities from the equations of motion \eqref{eq2.4} by  rescaling time $t\to \frac{\sqrt{\varkappa} t}{\mu_1}$,
and the momenta  $p\to \frac{p}{\sqrt{\varkappa}} , \; {\boldsymbol m}\to \frac{ \boldsymbol m}{\sqrt{\varkappa}}$, where $\kappa=\mu_1\sqrt{G\mu_2}$.} so
the only external parameter left in the problem is the mass ratio $\mu$.

\begin{theorem} \label{thm:stability on L^2}
Of the relative equilibria described in Theorem\,\ref{thm:existence on L2} above, and assuming a potential of the form \eqref{eq:grav-potential-L2}:
 \begin{enumerate}
\item  All hyperbolic RE are  (linearly) unstable. 
\item  There is a critical distance $q^*>0$ depending on the mass ratio $\mu$ such that when the distance between the particles is less than $q^*$ the elliptic 
RE is nonlinearly  stable, and when they are more than $q^*$ apart it is unstable. \newline
The critical value $q^*$ is given by 
\begin{equation*}
q^*=\alpha^* + \frac{1}{2}\mathrm{ arcsinh} (\mu( \sinh 2\alpha^*))
\end{equation*}
where $\alpha^*$ is the unique positive solution of the equation
\begin{equation*}
2\sinh^2 \alpha +1= 2\sinh^2 \alpha \sqrt{1+\mu^2\sinh^2 2\alpha}.
\end{equation*}
\end{enumerate}
\end{theorem}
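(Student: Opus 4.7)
The plan is to apply the Dirichlet--Arnold energy-Casimir method for nonlinear stability and the standard linear Hamiltonian spectral criterion for instability. Both ingredients reduce to analysing the Hessian of the reduced Hamiltonian $H$ of \eqref{eq:Hamiltonian on L2} restricted to the 4-dimensional symplectic leaves $\{C=c\}$ of the Poisson structure \eqref{eq:PB on L2}. At every RE one already has $m_x=0=p$ (from \eqref{eq:pH} together with $m_x=0$ in the proof of Theorem\,\ref{thm:existence on L2}), and $(\alpha,q,M)$ constrained by \eqref{eq:alphaq} together with \eqref{eq:M0EH} or \eqref{eq:MoHypH}, so the computation is entirely local around an explicit point.

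To work on the leaf I introduce local coordinates $(m_x,\beta,q,p)$ with $m_y=M\sinh\beta$, $m_z=M\cosh\beta$ for the elliptic case (the hyperbolic case uses $m_y=M\cosh\beta$, $m_z=M\sinh\beta$), so that the RE sits at $(0,\alpha,q,0)$. A direct inspection of \eqref{eq:Hamiltonian on L2} and \eqref{eq:defAHyper} shows that at this point the $4\times 4$ Hessian $H''$ is block-diagonal in $\{(m_x,p),(\beta,q)\}$: the kinetic cross term $-2m_xp/\mu_1$ couples only $m_x$ and $p$, and the off-diagonal entries in the first row/column of $\mathbf{A}(q)$ vanish, so all mixed partials between the two blocks are zero. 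The $(m_x,p)$ block equals $\mu_1^{-1}\bigl(\begin{smallmatrix} 1 & -1\\ -1 & 1+\mu\end{smallmatrix}\bigr)$, which is positive definite since $\mu>0$. Hence the signature of $H''$ on the leaf is governed entirely by the $(\beta,q)$ block: if that block is positive definite, energy-Casimir gives nonlinear stability, while if it is indefinite with signature $(1,1)$ the total Hessian has signature $(3,1)$, which by the usual argument for linear Hamiltonian systems (spectrum symmetric under $\lambda\mapsto -\lambda$ and under complex conjugation) forces a real eigenvalue pair $\pm\lambda\neq 0$ of the linearisation $JH''$ and hence linear instability.

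For the elliptic RE I would compute the three entries $H_{\beta\beta}$, $H_{\beta q}$, $H_{qq}$, eliminate $M^2$ and $U'(q)$ using \eqref{eq:M0EH} and \eqref{eq:CofMH}, and specialise to $U(q)=-1/\tanh q$ so that $U''(q)=-2\cosh q/\sinh^3 q$. Positive-definiteness of the $(\beta,q)$ block reduces to a single inequality on its determinant; the critical distance $q^*$ is identified with the value at which this determinant first vanishes as $q$ grows from $0$. I expect the simplification, after repeated use of \eqref{eq:alphaq} and \eqref{eq:CofMH} to eliminate $q$ in favour of $\alpha$, to collapse $\det=0$ precisely to the stated transcendental equation $2\sinh^2\alpha+1=2\sinh^2\alpha\sqrt{1+\mu^2\sinh^2 2\alpha}$ for $\alpha=\alpha^*$, with $q^*$ recovered through $q^*=\alpha^*+\tfrac{1}{2}\mathrm{arcsinh}(\mu\sinh 2\alpha^*)$, i.e.\ \eqref{eq:alphaq}. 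For $q>q^*$ the determinant is negative, giving signature $(3,1)$ and hence instability by the spectral argument above. For the hyperbolic RE the analogous computation should yield a $(\beta,q)$-block determinant of fixed negative sign for every admissible pair $(q,\mu)$, producing unconditional linear instability by the same mechanism.

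The main obstacle is the algebraic collapse in the elliptic case: reducing the raw expressions $H_{\beta\beta},H_{\beta q},H_{qq}$ (which involve $\partial_q\mathbf{A}$, $\partial_q^2\mathbf{A}$ and $U''(q)$) to the clean condition on $\alpha^*$ requires identifying the right combinations so that \eqref{eq:alphaq} and \eqref{eq:CofMH} can be invoked to eliminate $q$. A secondary but essential check is that the $(\beta,q)$ determinant in the hyperbolic case is strictly negative for all $q>0$, $\mu>0$, so that no analogous bifurcation interrupts the instability.
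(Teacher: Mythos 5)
Your overall strategy is exactly the paper's: restrict the reduced Hamiltonian to a symplectic leaf, compute the Hessian at the RE in coordinates adapted to $m_x=p=0$, use definiteness (Lyapunov/energy--Casimir) for nonlinear stability and the odd-index/negative-determinant spectral argument for instability, and read off $q^*$ from the vanishing of the $(\alpha,q)$-block determinant, which for the gravitational potential does collapse to $1-4\sinh^2\alpha\,\sinh^2(q-\alpha)=0$ and hence to the stated transcendental equation, while the corresponding determinant in the hyperbolic family has a fixed sign. However, there is a concrete error in your setup: the chart $m_y=M\sinh\beta$, $m_z=M\cosh\beta$ with $m_x$ kept as an independent coordinate and $M$ constant does \emph{not} parametrise the symplectic leaf. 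The Casimir is $C=-m_x^2-m_y^2+m_z^2$, so on the leaf $\{C=M^2\}$ the hyperbolic radius must depend on $m_x$: the correct (and in fact Darboux) coordinates are $m_x=z$, $m_y=\sqrt{M^2+z^2}\sinh\alpha$, $m_z=\sqrt{M^2+z^2}\cosh\alpha$ (and $\sqrt{M^2-z^2}$ in the hyperbolic case). Your surface $\{m_z^2-m_y^2=M^2\}$ agrees with the leaf only where $m_x=0$ and is not invariant under the dynamics, so positive-definiteness of $H$ restricted to it proves nothing about Lyapunov stability, and the matrix you would feed into $J H''$ is not the linearisation on the leaf.

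The error shows up precisely in the $(m_x,p)$ block: because $dH\neq0$ at the RE and the two surfaces differ at second order in $m_x$, the true $(m_x,m_x)$ entry is not $\mu_1^{-1}$ but (elliptic case, $\mu_1=1$) $\cosh\alpha\bigl(\cosh(2q-\alpha)+\mu\cosh\alpha\bigr)/\sinh^2q$, the extra term being $\bigl(\cosh^2(q-\alpha)+\mu\cosh^2\alpha\bigr)/\sinh^2q$ coming from the second derivative of $\sqrt{M^2+z^2}$. In this particular problem the corrected block is still positive definite and the cross terms between the blocks still vanish at $z=0$, so your signature conclusions — $(++++)$ for elliptic RE with $q<q^*$, $(+++-)$ beyond $q^*$ and for all hyperbolic RE — survive, but that is a fortunate accident you have not verified; as written the stability half of your argument is invalid and must be redone with the leaf coordinates above (after which it coincides with the paper's proof). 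Two smaller points to add: you need the leaf coordinates to be canonical for the $JH''$ spectral step (they are, being generalised Andoyer variables), and to get a single critical distance $q^*$ you must also check monotonicity, namely that $f(q(\alpha),\alpha)$, rewritten via $u=\sinh^2\alpha$ as $\tilde f(u)=2u\bigl(1-\sqrt{1+4\mu^2(1+u)u}\bigr)+1$, decreases from $1$ to $-\infty$, and that $q(\alpha)=\alpha+\tfrac12\operatorname{arcsinh}(\mu\sinh2\alpha)$ is increasing in $\alpha$.
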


We prove this theorem\footnote{An equivalent proof of Theorem\,\ref{thm:stability on L^2} was  given before in~\cite{LGN2016} by working on a symplectic slice in the unreduced system, since the reduced equations of motion were not known at that time.  Although equivalent, the approach that we follow in this paper is elementary.}
 by analyzing the signature of the Hessian matrix of the Hamiltonian function restricted to to the symplectic leaves.
According to Lyapunov's Theorem, if a RE is a local maximum or minimum of  the Hamiltonian, then it is nonlinearly stable.
This happens in particular if the corresponding Hessian matrix  at the equilibrium is positive or negative definite.
On the
other hand, if the Hessian matrix  has an odd number of negative eigenvalues, then the linearised system
has at least one eigenvalue with positive real part and  the corresponding RE
is unstable. The proof  of Theorem\,\ref{thm:stability on L^2} follows at once from these observations and the following proposition.
\begin{proposition}
\label{P:Signature-L2}
Let  $q^*$  defined  as in the statement of Theorem~\ref{thm:stability on L^2}. The Hessian matrix of the reduced Hamiltonian  (restricted to the corresponding symplectic leaf)
at the RE of the problem described in Theorem\,\ref{thm:existence on L2}, has the following signature:
\begin{enumerate}
\item All hyperbolic RE  have signature $(+++-)$.
\item Elliptic RE with $0<q<q^*$ have signature $(++++)$.
\item Elliptic RE with $q^*<q $ have signature $(+++-)$.
\end{enumerate}
Moreover, the Casimir function $C$ restricted to the branch of elliptic RE has a maximum at $q=q^*$.
\end{proposition}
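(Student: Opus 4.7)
The plan is to reduce the signature analysis to an elementary monotonicity study of $M^2(q)$ along each branch of relative equilibria. The main ingredients are a parametrisation of the symplectic leaves that block-diagonalises the Hessian at the RE, and an identity relating the determinant of the nontrivial $2\times 2$ block to $dM^2/dq$.

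\textbf{Block decomposition.} Near an RE we parametrise the symplectic leaf $\{C=\pm M^2\}$ by coordinates $(m_x,p,q,s)$ via $(m_y,m_z)=(\sqrt{M^2+m_x^2}\sinh s,\sqrt{M^2+m_x^2}\cosh s)$ in the elliptic case, with the obvious swap $\cosh\leftrightarrow\sinh$ and $M^2+m_x^2\to M^2-m_x^2$ in the hyperbolic case, so that $s=\alpha$ at the RE. Substituting into \eqref{eq:Hamiltonian on L2} gives
\[
H=\frac{1}{2\mu_1}\bigl[(1\pm B(q,s))m_x^2+M^2B(q,s)-2m_xp+(1+\mu)p^2\bigr]+U(q),
\]
where $B(q,s)$ is independent of $m_x$. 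Since $\partial m_y/\partial m_x=\partial m_z/\partial m_x=0$ at $m_x=0$, the mixed partials $H_{m_xq},H_{m_xs},H_{pq},H_{ps}$ all vanish at the RE, so the Hessian decomposes into an $(m_x,p)$-block and a $(q,s)$-block $\mathcal H_{qs}$. Positive-definiteness of $A(q)$ gives $B(q,\alpha)>0$; the $(m_x,p)$-block is then immediately positive definite in the elliptic case, and for the hyperbolic case one verifies using \eqref{eq:CofMH} the inequality $B(q,\alpha)<\mu/(1+\mu)$ (equivalent to $\mu\sinh^2q>(1+\mu)[\sinh^2(q-\alpha)+\mu\sinh^2\alpha]$), which again yields positive definiteness. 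Thus in both cases the $(m_x,p)$-block contributes signature $(++)$.

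\textbf{Key identity.} Let $F(q,s;M^2):=H|_{m_x=p=0}=\frac{M^2}{2\mu_1}B(q,s)+U(q)$, so $\mathcal H_{qs}=\operatorname{Hess}(F)|_{(q,\alpha)}$. Along the branch of RE, $s=\alpha(q)$ and $M^2=M^2(q)$. Since $F$ is linear in $M^2$, at the RE one has $F_{s,M^2}=F_s/M^2=0$ (as $F_s=0$) and $F_{q,M^2}=-U'(q)/M^2$ (from $F_q=0$). Differentiating the equilibrium conditions along the branch yields $\alpha'(q)=-F_{qs}/F_{ss}$ and
\[
\det\mathcal H_{qs}=F_{qq}F_{ss}-F_{qs}^2=\frac{F_{ss}\,U'(q)}{M^2}\,\frac{dM^2}{dq}.
\]
A direct computation (simplified via the constraint) gives $F_{ss}=\frac{M^2}{\mu_1\sinh^2q}[\mu\cosh 2\alpha+\cosh 2(q-\alpha)]>0$ in both cases. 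With $U'(q)>0$ and $M^2>0$, the sign of $\det\mathcal H_{qs}$ coincides with the sign of $dM^2/dq$; in particular, a signature change forces $dM^2/dq=0$, establishing the last assertion of the proposition.

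\textbf{Sign analysis.} For $U(q)=-\coth q$ one has $U'(q)=1/\sinh^2q$. Substituting into \eqref{eq:M0EH} and \eqref{eq:MoHypH} and differentiating with respect to $q$ (using $\alpha'(q)$ obtained by implicit differentiation of \eqref{eq:CofMH}), one verifies the following: on the hyperbolic branch, $M^2(q)\to+\infty$ as $q\to 0^+$ and $M^2(q)\to 0$ as $q\to\infty$, with $dM^2/dq<0$ for all $q>0$, so $\det\mathcal H_{qs}<0$ throughout, giving total signature $(+++-)$; on the elliptic branch, $M^2(q)\to 0$ at both endpoints, so $M^2(q)$ attains an interior maximum, and setting $dM^2/dq=0$ and using $\cosh 2(q-\alpha)=\sqrt{1+\mu^2\sinh^2 2\alpha}$ (from \eqref{eq:CofMH}) reduces, after algebraic simplification, to $\cosh 2\alpha=2\sinh^2\alpha\sqrt{1+\mu^2\sinh^2 2\alpha}$, which is the stated equation. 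Uniqueness of the positive solution $\alpha^*$ follows since the left-hand side $\cosh 2\alpha/(2\sinh^2\alpha)=1+1/(2\sinh^2\alpha)$ decreases from $+\infty$ to $1$ on $(0,\infty)$ while $\sqrt{1+\mu^2\sinh^2 2\alpha}$ increases from $1$ to $+\infty$, so they cross exactly once. This gives the unique $q^*=\alpha^*+\tfrac12\mathrm{arcsinh}(\mu\sinh 2\alpha^*)$, with signature $(++++)$ for $q<q^*$ and $(+++-)$ for $q>q^*$. The main obstacle is this final step: establishing strict monotonicity on the hyperbolic branch and recasting $dM^2/dq=0$ on the elliptic branch into the advertised closed form via systematic use of \eqref{eq:CofMH}; the earlier steps are essentially formal once the parametrisation is fixed.
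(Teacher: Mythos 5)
Your proposal is correct, and although it uses the same leaf parametrisation and the same block splitting of the Hessian as the paper (your $(m_x,p)$ and $(q,s)$ blocks are precisely the paper's ${\bf N}^{(2)},{\bf L}^{(2)}$ and ${\bf N}^{(1)},{\bf L}^{(1)}$), your treatment of the hard $(q,s)$ block is genuinely different. The paper evaluates $\det{\bf N}^{(1)}$ and $\det{\bf L}^{(1)}$ by direct computation, factorising the elliptic one through $f(q,\alpha)=1-4\sinh^2\alpha\sinh^2(q-\alpha)$, studying its sign via $u=\sinh^2\alpha$, and only afterwards computing $dM^2/d\alpha$ and observing it is proportional to $f$, which yields the last assertion about the Casimir. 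You instead use the linearity of $F=H|_{m_x=p=0}$ in $M^2$ together with implicit differentiation of the equilibrium conditions along the branch to obtain the a priori identity $\det\mathcal{H}_{qs}=F_{ss}\,U'(q)M^{-2}\,dM^2/dq$ with $F_{ss}>0$, so that the signature classification and the statement that $C$ has a maximum at $q^*$ follow from a single monotonicity analysis of $M^2(q)$; this also explains structurally why the signature change must occur exactly at the fold of the elliptic branch, which in the paper appears as an after-the-fact coincidence of two computations. What the paper's route buys is that every sign comes from a closed-form factorisation, with no need for global information about $M^2$ on the hyperbolic branch; your route needs the facts you assert rather than prove: positive definiteness of the hyperbolic $(m_x,p)$ block, $dM^2/dq<0$ along the whole hyperbolic branch, and the reduction of $dM^2/dq=0$ on the elliptic branch to the advertised equation. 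These do check out and are of the same ``straightforward but lengthy'' type the paper also leaves partly to the reader: for instance, with $\mu_1=1$ and $U=-\coth q$, the constraint \eqref{eq:CofMH} turns \eqref{eq:MoHypH} into $M^2=\coth\alpha/\sinh^2(q-\alpha)$, and since \eqref{eq:alphaq} makes both $\alpha$ and $q-\alpha$ increasing functions of $q$, the hyperbolic monotonicity is immediate; and on the elliptic branch $dM^2/dq=0$ is exactly the paper's condition $f=0$, i.e.\ $\cosh 2\alpha=2\sinh^2\alpha\sqrt{1+\mu^2\sinh^2 2\alpha}$, with your crossing argument giving uniqueness of $\alpha^*$. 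So there is no gap, only a few computations flagged rather than executed.
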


The results described above are illustrated in the energy-momentum diagram Fig.\,\ref{F:DiagramL2} where we have indicated
the signature of the Hessian along the branches of RE. The critical value $q^*$ at which the elliptic RE undergo a saddle-node bifurcation corresponds to the cusp where
$C$ has a maximum. Fig.\,\ref{F:Bif-diag-ellipticRE} shows the stability region on the $q$-$\mu$ plane for elliptic RE. Such region is delimited by the curve $q=q^*(\mu)$ .
\begin{figure}[h]
\centering
\includegraphics[width=7cm]{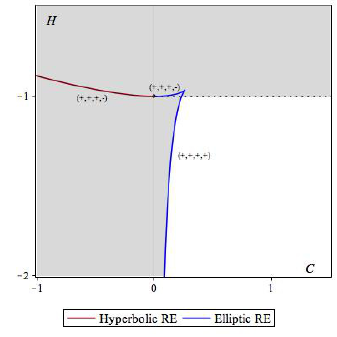}
\begin{minipage}{0.8\textwidth}
\caption{Energy-Momentum bifurcation diagram of relative equilibria on $L^2$ for the gravitational potential
with $\mu=1/2$. The shaded area on the $C$-$H$ plane shows all possible values of $(C,H)$. We also indicate the signature of the Hessian matrix of the Hamiltonian along each branch of relative equilibria.
Notice the change in signature at the cusp of the elliptic relative equilibria where a saddle-node bifurcation takes place. }
\label{F:DiagramL2}
\end{minipage}
\end{figure}

\begin{figure}[h]
\centering
\includegraphics[width=5cm]{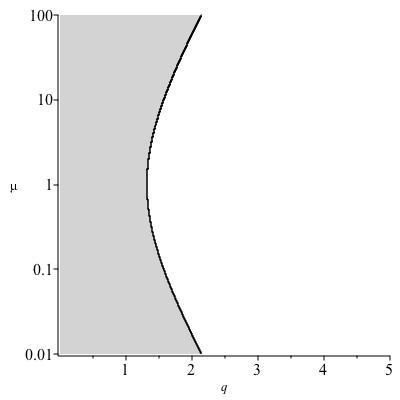}
\begin{minipage}{0.8\textwidth}
\caption{The shaded region corresponds to stable elliptic RE on the Lobachevsky plane, as a function of the distance $q$ (horizontally) and  mass ratio $\mu=\mu_1/\mu_2$ (vertically). The curve delimiting the two regions is $q=q^*(\mu)$ as defined in the statement of Theorem~\ref{thm:stability on L^2}. Note that the vertical $\mu$-axis on the figure above has a logarithmic scale.}
\label{F:Bif-diag-ellipticRE}
\end{minipage}
\end{figure}

\subsection*{Proof of Proposition\,\ref{P:Signature-L2}}

{\it (i)}: Fix $M\neq 0$ and  consider the restriction of the reduced  system to the symplectic leaf $\mathcal{M}_{-M^2}$ defined by $C( \boldsymbol  m , \boldsymbol  m ) =-M^2$. 
Denote by $H_{-M^2}$ the restriction of the reduced Hamiltonian~\eqref{eq:Hamiltonian on L2} to $\mathcal{M}_{-M^2}$. The RE with 
$C=-M^2$ correspond to critical points of $H_{-M^2}$. 

By the time-reversibility of the problem it is sufficient to consider the stability of hyperbolic RE for which $m_y>0$. 
We introduce coordinates\footnote{In fact $(\alpha,q,z,p)$ as defined  are Darboux coordinates that generalize the Andoyer variables on $T^*SO(3)$.}  $(\alpha,q,z,p)$
on  $\mathcal{M}_{-M^2}$ by setting:
$$
m_x=z, \quad m_y=\sqrt{M^2-z^2}\cosh\alpha, \quad m_z=\sqrt{M^2-z^2}\sinh\alpha.
$$
In view of~\eqref{eq:Hamiltonian on L2} and~\eqref{eq:grav-potential-L2},  we have
\begin{equation*}
\begin{split}
H_{-M^2}(\alpha,q,z,p)&=\dfrac{1}{2}\Bigg[(1+\mu)p^2-2pz+\dfrac{\cosh 2q-\cosh 2(q-\alpha)-2\mu \sinh^2\alpha}{2\sinh^2q }z^2 \\
&\qquad \qquad -\dfrac{1}{\sinh^2 q}\Big(\sinh 2q-\dfrac{M^2}{2}\big(\cosh (2(q-\alpha)) -1 +\mu(\cosh 2\alpha -1)\big)\Big)\Bigg].
\end{split}
\end{equation*}

From the proof of Theorem~\ref{thm:existence on L2}, we know that the critical points of $H_{-M^2}$ occur   at points where $z=p=0$, and
$q$ and $\alpha$ are such that \eqref{eq:CofMH} and \eqref{eq:MoHypH} hold. With our assumptions on the potential,   \eqref{eq:MoHypH}  simplifies to 
\begin{equation}
\label{eq:M2-L2-hyper}
M^2=\frac{\sinh q}{\sinh \alpha ( \mu \sinh \alpha \cosh q - \sinh(q-\alpha)}.
\end{equation}
The Hessian matrix of $ H_{-M^2}$  at such points is given by 
\begin{equation*}
D^2H_{-M^2} (\alpha,q,0,0) =:{\bf N} = \begin{pmatrix}
{\bf N}^{(1)} & 0 \\
0 & {\bf N}^{(2)} 
\end{pmatrix},
\end{equation*}
where ${\bf N}^{(1)}$ and ${\bf N}^{(2)}$ are symmetric $2\times 2$ matrices. The  entries of ${\bf N}^{(1)}$ may be written as
\begin{equation}
\label{eq:N1Hyper}
\begin{split}
&{\bf N}^{(1)}_{11}=\frac{M^2(\cosh (2(q -\alpha))+\mu\cosh 2\alpha) }{\sinh^2q}, \qquad {\bf N}^{(1)}_{22}= \dfrac{M^2(1+\mu)\sinh^2\alpha }{\sinh^4 q}, \\
&{\bf N}^{(1)}_{12}={\bf N}^{(1)}_{21}=- \dfrac{M^2(-\sinh q  \cosh 2\alpha +(1+\mu)\cosh q \sinh 2\alpha)}{\sinh^3 q} ,
\end{split}
\end{equation}
where we have used \eqref{eq:M2-L2-hyper} to simplify ${\bf N}^{(1)}_{22}$. On the other hand
\begin{equation}
\label{eq:N2Hyper}
{\bf N}^{(2)}=
\begin{pmatrix}
\dfrac{- \cosh(2(q-\alpha)) + \cosh 2q  -2\mu \sinh^2\alpha  }{\displaystyle\cosh 2 q -1 } & -1 \\
-1 & 1+\mu
\end{pmatrix}.
\end{equation}
A long but straightforward calculation  using \eqref{eq:CofMH} shows that 
\begin{equation*}
\det ({\bf N}^{(1)})=-\frac{M^4\sinh^2(q-\alpha)}{\cosh^2\alpha \sinh^4\alpha}\left ( 4 \cosh^2 \alpha \cosh^2 (q-\alpha) -1 \right ),
\end{equation*}
that is clearly negative. Hence ${\bf N}^{(1)}$ has one positive and one negative eigenvalue. On the other hand, by a calculation that uses again 
 \eqref{eq:CofMH} we obtain
\begin{equation*}
\det ({\bf N}^{(2)})=\frac{\sinh^2(q-\alpha)}{\cosh^2\alpha}>0.
\end{equation*}
Considering that ${\bf N}^{(2)}_{22}>0$ we conclude that ${\bf N}^{(2)}$ is positive definite. Therefore, the signature of ${\bf N}$
is $(+++-)$ as stated.

\noindent {\it (ii)} and \noindent {\it (iii)}: We proceed in an analogous fashion. Let $H_{M^2}$ be the restriction of the reduced Hamiltonian~\eqref{eq:Hamiltonian on L2} to
the symplectic leaf $\mathcal{M}_{M^2}$ defined by $C=M^2$. By the time-reversibility of the problem it is sufficient to consider the stability of hyperbolic RE for which $m_z>0$. Introduce coordinates\footnote{As before, $(\alpha,q,z,p)$ 
  are Darboux coordinates that generalize the Andoyer variables on $T^*SO(3)$.}   $(\alpha,q,z,p)$ on  $\mathcal{M}_{M^2}$ by:
$$
m_x=z, \quad m_y=\sqrt{M^2+z^2}\sinh\alpha, \quad m_z=\sqrt{M^2+z^2}\cosh\alpha.
$$
Then, in view of~\eqref{eq:Hamiltonian on L2} and~\eqref{eq:grav-potential-L2},  we have
\begin{equation*}
\begin{split}
H_{M^2}(\alpha,q,z,p)&=\dfrac{1}{2}\Bigg[(1+\mu)p^2-2pz+\dfrac{\cosh 2q+\cosh 2(q-\alpha)+\mu (1+\cosh 2\alpha)}{2\sinh^2q }z^2 \\
&\qquad \qquad -\dfrac{1}{\sinh^2 q}\Big(\sinh 2q-\dfrac{M^2}{2}\big(1+ \cosh (2(q-\alpha))  +\mu(1+ \cosh 2\alpha )\big)\Big)\Bigg].
\end{split}
\end{equation*}
From the proof of Theorem~\ref{thm:existence on L2}, we know that the critical points of $H_{M^2}$ occur   at points where $z=p=0$, and
$q$ and $\alpha$ are such that \eqref{eq:CofMH} and \eqref{eq:M0EH} hold. With our assumptions on the potential,   \eqref{eq:M0EH}  simplifies to 
\begin{equation}
\label{eq:M2-L2-ellip}
M^2=\frac{\sinh q}{\cosh \alpha ( \mu \cosh \alpha \cosh q +\cosh(q-\alpha)}.
\end{equation}
The Hessian matrix of $ H_{M^2}$  at such points has again block diagonal form 
\begin{equation*}
D^2H_{M^2} (\alpha,q,0,0) =:{\bf L} = \begin{pmatrix}
{\bf L}^{(1)} & 0 \\
0 & {\bf L}^{(2)} 
\end{pmatrix}.
\end{equation*}
The  elements of the matrices ${\bf L}^{(i)}$, $i=1,2$, coincide respectively with those of ${\bf N}^{(i)}$  given by \eqref{eq:N1Hyper} and  \eqref{eq:N2Hyper}
except for the entries $ {\bf L}^{(1)}_{22}$ and $ {\bf L}^{(2)}_{11}$ that may be written as 
\begin{equation*}
{\bf L}^{(1)}_{22}=\dfrac{M^2(1+\mu)\cosh^2\alpha }{\sinh^4 q}, \qquad  {\bf L}^{(2)}_{11}=\frac{\cosh \alpha (\cosh (2q-\alpha) + \mu \cosh \alpha)}{\sinh^2 q}.
\end{equation*}
The simplification of  $ {\bf L}^{(1)}_{22}$ given above is obtained with the help of \eqref{eq:M2-L2-ellip}. 

We have 
\begin{equation*}
\det ({\bf L}^{(2)}) = \frac{\mu^2\cosh^2 \alpha + 2\mu \cosh \alpha \cosh(q-\alpha)+ \cosh^2(q-\alpha)}{\sinh^2q}
\end{equation*}
which is clearly positive. Considering that $ {\bf L}^{(2)}_{22}>0$ it follows that ${\bf L}^{(2)}$ is positive definite. 

Next, given that ${\bf L}^{(1)}_{22}>0$ it follows that ${\bf L}^{(1)}$ has one positive eigenvalue and the signature of the other one coincides with the
signature of its determinant. On the other hand, using  \eqref{eq:CofMH} to eliminate $\mu$, we can factorize
\begin{equation*}
\det ({\bf L}^{(1)}) =\frac{4M^4\cosh^2\alpha \cosh^2(q-\alpha)}{\sinh^4q \sinh^22\alpha}f(q,\alpha),
\end{equation*}
with 
\begin{equation*}
f(q,\alpha):=1-4\sinh^2\alpha \sinh^2(q-\alpha).
\end{equation*}
The signature of $f$ along the RE can be determined by using  \eqref{eq:CofMH} to write 
\begin{equation}
\label{eq:q as fcn alpha hyper}
q=q(\alpha)=\alpha + \frac{1}{2}\mathrm{ arcsinh} (\mu( \sinh 2\alpha )).
\end{equation}
This expression leads to the identity $4\sinh^2(q-\alpha)=2\sqrt{1+\mu^2\sinh^2\alpha}-1$. Setting  $u=\sinh^2\alpha$ and using this identity  
 allows one to write
\begin{equation*}
f(q(\alpha),\alpha)=\tilde f(u):=2u\left ( 1-\sqrt{1+4\mu^2(1+u)u} \right ) +1.
\end{equation*}
It is now elementary to check that $\tilde f$ is decreasing for $u>0$, and satisfies $\tilde f(0)=1$ and $\lim_{u\to \infty} \tilde f(u)=-\infty$. Therefore, there exists a unique $u^*>0$ such that $\tilde f(u)$ is positive  for $0<u<u^*$ and  
 negative for $u^*<u$. Considering  that $u$ is an  increasing function of $\alpha$, 
 we conclude that $f(q(\alpha),\alpha)$ is positive for $0<\alpha <\alpha^*$ and negative for $\alpha^*<\alpha$ where $\alpha^*$ satisfies $\tilde f(\sinh^2\alpha^*)=0$.
The proof of items  {\it (ii)} and \noindent {\it (iii)} in the proposition is completed by noting that \eqref{eq:q as fcn alpha hyper} defines $q$ as an increasing function of $\alpha$.

It only remains to show that $M^2$ restricted to the branch of elliptic RE has a maximum at $q=q^*$. To show this we parametrize the branch by $\alpha>0$ using 
\eqref{eq:q as fcn alpha hyper}. Differentiating \eqref{eq:CofMH} implicitly with respect to $\alpha$ leads to 
\begin{equation*}
\frac{dq}{d\alpha} =\frac{\sinh(2(q-\alpha))\cosh 2 \alpha + \cosh(2(q-\alpha))\sinh 2\alpha}{\cosh(2(q-\alpha))\sinh 2\alpha}.
\end{equation*}
Starting from \eqref{eq:M2-L2-ellip}, using the above expression, and then  \eqref{eq:CofMH} to eliminate $\mu$, one may simplify
\begin{equation*}
\frac{dM^2}{d\alpha}=\frac{f(q,\alpha)}{\cosh^2\alpha \cosh^2 (q-\alpha)\cosh(2(q-\alpha))}.
\end{equation*}
Therefore, $dM^2/d\alpha$ same signature as $f(q,\alpha)$ (i.e. the same signature as $\det  ({\bf L}^{(1)})$ and $\det  ({\bf L})$). The same is true about 
$dM^2/dq$ since  \eqref{eq:q as fcn alpha hyper} defines $q$ as an increasing function of $\alpha$.
\hfill $\Box$

\subsection{Topology of the energy-momentum level surfaces}

Denote by $\mathcal{Z} \cong \R^4 \times \R^+$ the reduced space of the system with global coordinates $\boldsymbol  m,p,q$. 
In this section we consider the topology of the fibres of the energy-momentum map
$(C,H):\mathcal{Z}\to \R^2$. 
We continue working with the gravitational potential \eqref{eq:grav-potential-L2} under the assumption that $\mu_1=1$ and $G\mu_1\mu_2=1$. In particular note that 
\begin{equation*}
\lim_{q\to \infty}U_{\mathrm{ grav}}(q)=-1.
\end{equation*}

We shall prove that the  energy-momentum fibre over the point $(c_0,h_0)$ is compact only for $c_0>0$ and  $h_0< -1$, and that in this case it is homeomorphic to the disjoint union of two 3-spheres. 
We start by noticing that the energy-momentum map
$(C,H):\mathcal{Z}\to \R^2$ is not proper. It has the following properties:

\begin{proposition}
Let $\boldsymbol  \zeta_n=(\boldsymbol  m_n,p_n,q_n)$ be a sequence of points in $\mathcal{Z}$ and suppose that $(C,H)(\boldsymbol  \zeta_n)$ converges to $(c_0,h_0)\in \R^2$. Then
\begin{enumerate}
\item If $q_n\to \infty$ then $h_0\geqslant -1$.
\item If $q_n\to 0$ then $c_0\leqslant 0$.
\item If $\varepsilon \leqslant q_n \leqslant \frac{1}{\varepsilon}$ for some $\varepsilon>0$, then $\boldsymbol  m_n$ and $p_n$ are bounded.
\end{enumerate}
\end{proposition}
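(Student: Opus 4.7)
The unifying idea is to rewrite the reduced kinetic energy as a sum of squares, from which all three statements follow almost immediately. Completing the square first in $p$ and then in $(m_y,m_z)$, the expression
\begin{equation*}
2\mu_1 T = (\boldsymbol m,\mathbf A(q)\boldsymbol m)-2m_x p+(1+\mu)p^2
\end{equation*}
can be rewritten in the manifestly non-negative form
\begin{equation*}
2\mu_1 T \;=\; \frac{\mu}{1+\mu}\,m_x^2 \;+\; \Bigl(m_y-\tfrac{\cosh q}{\sinh q}\,m_z\Bigr)^{\!2} \;+\; \frac{\mu}{\sinh^{2}q}\,m_z^2 \;+\; (1+\mu)\Bigl(p-\tfrac{m_x}{1+\mu}\Bigr)^{\!2}.
\end{equation*}
I would first verify this identity by direct calculation using the form of $\mathbf A(q)$ in \eqref{eq:defAHyper}. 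In particular $T\ge 0$ on $\mathcal Z$, which already reflects that the kinetic energy comes from the Riemannian metric on $L^2\times L^2$.

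For (i), if $q_n\to\infty$ then $U_{\mathrm{grav}}(q_n)=-1/\tanh q_n\to -1$. Since $H_n=T_n+U(q_n)\ge U(q_n)$, passing to the limit gives $h_0\ge -1$.

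For (ii), the crucial term is $\mu\, m_z^2/\sinh^{2}q$. From the sum-of-squares identity,
\begin{equation*}
m_z^2 \;\le\; \frac{\sinh^2 q_n}{\mu}\cdot 2\mu_1 T_n \;=\; \frac{\sinh^2 q_n}{\mu}\bigl(2\mu_1 H_n - 2\mu_1 U(q_n)\bigr).
\end{equation*}
Since $H_n$ is bounded and $-U(q_n)=1/\tanh q_n$, the right-hand side is $O(\sinh q_n\cosh q_n)\to 0$. Hence $m_z^2\to 0$. Now $m_x^2+m_y^2=-C_n+m_z^2\to -c_0$, and since the left-hand side is non-negative this forces $c_0\le 0$.

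For (iii), with $q_n\in[\varepsilon,1/\varepsilon]$ all coefficients in the sum-of-squares identity are bounded above and bounded away from zero, and $T_n=H_n-U(q_n)$ is bounded because $U$ is continuous on $[\varepsilon,1/\varepsilon]$. Extracting the four summands in turn gives a bound on $m_x^2$, then on $m_z^2$, then (since $\tfrac{\cosh q_n}{\sinh q_n}m_z$ is bounded) on $m_y^2$, and finally on $p$.

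The only genuinely delicate step is the rate estimate in (ii): one must check that the blow-up of $1/\sinh^2 q$ really dominates the blow-up of $-U(q)\sim 1/q$ as $q\to 0^+$, which is exactly why the gravitational potential is at most a simple pole at $q=0$ while the coefficient of $m_z^2$ has a double pole. The remainder is routine completion of squares and continuity.
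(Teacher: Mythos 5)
Your proof is correct, and it follows essentially the same route as the paper: part (i) is the same observation $H\ge U(q)$, part (ii) rests on the same estimate that the kinetic energy dominates $\tfrac{\mu}{2\mu_1}\,m_z^2/\sinh^2 q$ while $\sinh^2 q\cdot(-U(q))\to 0$ (the paper phrases it as a contradiction via a lower bound on $H$, you argue directly that $m_z\to0$), and part (iii) is the paper's ``bound $H$ below by a positive definite quadratic form'' made explicit. Your sum-of-squares identity is a clean, correct way of packaging the bounds the paper uses implicitly, but it is not a different method.
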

\begin{proof}
\begin{enumerate}
\item Since the kinetic energy is positive, we have $H(\boldsymbol  \zeta_n)\geqslant U_{\mathrm{grav}}(q_n)$, and the result follows by letting $n\to \infty$.
\item Suppose $C({\boldsymbol m}_n)=-(m_x)_n^2-(m_y)_n^2+ (m_z)_n^2\to c_0>0$. Then, for $n$ large enough, $(m_z)^2_n\geqslant \dfrac{c_0}{2}$ and we have
\begin{equation*}
H(\boldsymbol  \zeta_n)\geqslant  \frac{c_0\mu}{4\mu_1 \sinh^2 q_n } +U_{\mathrm{grav}}(q_n)= \frac{c_0\mu -4\mu_1 \cosh q_n \sinh q_n}{4\mu_1 \sinh^2 q_n},
\end{equation*}
that grows without bound as $q_n\to 0$. This contradicts our hypothesis that $H(\boldsymbol  \zeta_n)$ converges to $h_0$.
\item For  $\varepsilon \leqslant q \leqslant \frac{1}{\varepsilon}$ it is possible to bound  $H$ from below by a constant, positive definite
quadratic form on $\boldsymbol m_n$ and $p_n$ with constant coefficients, plus a constant value (the minimum of $U$ for $\varepsilon \leqslant q \leqslant \frac{1}{\varepsilon}$).
Hence, the only way in which $H$ can remain bounded if $q$ is bounded is if
$\boldsymbol m_n$ and $p_n$ are also  bounded.
\end{enumerate}
\end{proof}

Based on these observations we describe the topology of  the energy-momentum fibre over the point $(c_0,h_0)$ with $c_0>0$.  
\begin{theorem}\label{thm:EM fibres}
Assume $c_0>0$. The energy-momentum fibre over the point $(c_0,h_0)$ is:
\begin{itemize}
\item empty , 2 points, or homeomorphic to the disjoint union of two 3-spheres if $h_0< -1$,
\item for $h_0>-1$ and $(c_0,h_0)$ inside the small cusp region in Fig.\,\ref{F:DiagramL2}, it is homeomorphic to the disjoint union of two 3-spheres and two 3-dimensional open balls,
\item for $h_0>-1$ and $(c_0,h_0)$ outside the small cusp region in Fig.\,\ref{F:DiagramL2}, it is homeomorphic to the disjoint union of two 3-dimensional open balls.
\end{itemize}
\end{theorem}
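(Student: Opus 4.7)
The strategy is to reduce the topology of the 3-dimensional fibre to a Morse-theoretic analysis of a 1-dimensional effective potential on the $q$-axis, organised in three steps. In Step 1 I would pick one sheet of $\{C=c_0\}$, say $m_z>0$, and parametrise it by $(m_x,m_y,p)\in\R^3$ with $m_z=\sqrt{c_0+m_x^2+m_y^2}$. Completing squares in \eqref{eq:Hamiltonian on L2}--\eqref{eq:defAHyper} gives
\begin{equation*}
T=\frac{1}{2\mu_1}\!\left[(m_x-p)^2+\mu p^2+(m_y-\coth q\cdot m_z)^2+\frac{\mu}{\sinh^2 q}\,m_z^2\right],
\end{equation*}
which is manifestly non-negative and proper in $(m_x,m_y,p)$ at each fixed $q>0$ (the last term controls $m_x,m_y$ through $m_z$, the second controls $p$). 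The same calculation used in the proof of Theorem~\ref{thm:existence on L2} shows that on this sheet $T$ has a unique, nondegenerate critical point, namely a minimum at $m_x=p=0$, $(m_y,m_z)=(\sqrt{c_0}\sinh\alpha(q),\sqrt{c_0}\cosh\alpha(q))$ with $\alpha(q)$ the unique solution of \eqref{eq:CofMH}. Writing the resulting minimum value of $H$ as
\begin{equation*}
V_\mathrm{eff}(q):=U(q)+\frac{c_0}{2\mu_1}\cdot\frac{\cosh^2(q-\alpha(q))+\mu\cosh^2\alpha(q)}{\sinh^2 q},
\end{equation*}
elementary Morse theory on $\R^3$ implies that the fibre slice at a fixed $q$ on this sheet is empty when $V_\mathrm{eff}(q)>h_0$, a single point when $V_\mathrm{eff}(q)=h_0$, and diffeomorphic to $S^2$ when $V_\mathrm{eff}(q)<h_0$.

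Step 2 describes the shape of $\mathcal{I}:=\{q>0:V_\mathrm{eff}(q)\le h_0\}$. Because the critical points of $V_\mathrm{eff}$ coincide with the values of $q$ at which the reduced system admits an elliptic RE with Casimir $c_0$, Proposition~\ref{P:Signature-L2} tells us that for $c_0>c^*:=\max_q M^2(q)$ the function $V_\mathrm{eff}$ has no critical points, while for $c_0<c^*$ it has a local minimum at some $q_1<q^*$ and a local maximum at some $q_2>q^*$. Combining with the asymptotics of $\alpha$ from \eqref{eq:alphaq} and $U(q)=-\coth q$ yields $V_\mathrm{eff}(q)\to+\infty$ as $q\to 0^+$ and $V_\mathrm{eff}(q)\to -1^+$ as $q\to\infty$ (the correction term is $O(e^{-q})$ and dominates $\coth q-1=O(e^{-2q})$). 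It follows that $\mathcal{I}$ is either empty, a singleton, a single compact interval $[a,b]$, a single ray $[a,\infty)$, or a disjoint union $[a_1,b_1]\sqcup[a_2,\infty)$.

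Step 3 assembles the topology. The fibre on one sheet is the total space of the 2-sphere family of Step~1 over $\mathcal{I}$, with each transverse boundary point of $\mathcal{I}$ being a cone point at which the sphere collapses. A standard suspension argument then shows that a compact interval $[a,b]$ with transverse endpoints contributes a 3-sphere, a singleton contributes a point, and a ray $[a,\infty)$ contributes an open 3-ball. The involution $(m_y,m_z)\mapsto(-m_y,-m_z)$ is a symmetry of $H$ that exchanges the two sheets of $\{C=c_0\}$, so the full fibre is the disjoint double of what one computes on a single sheet. The three cases of the theorem then follow directly from the possible shapes of $\mathcal{I}$: for $h_0<-1$ the set $\mathcal{I}$ cannot reach either $0$ or $\infty$, giving empty / 2 points / two $S^3$; for $h_0>-1$ the set always contains a tail $[a,\infty)$, to which an extra compact component (and hence two extra 3-spheres) is attached exactly when $V_\mathrm{eff}(q_1)<h_0<V_\mathrm{eff}(q_2)$, which is the defining inequality of the cusp region in Fig.~\ref{F:DiagramL2}.

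The main technical obstacle is the ray case in Step~3: one must confirm that the non-compact fibre piece really is an open 3-ball, despite the fact that $T$ ceases to be proper in the limit $q\to\infty$ (along $m_y\to+\infty$ with $m_x,p$ bounded the quantity $m_y-\coth q\cdot m_z$ tends to $0$, so the limiting kinetic energy has a finite infimum). However, for every \emph{finite} $q$ the slice remains a genuine smooth 2-sphere by Step~1, and so the ray fibre is an honest ascending union $\bigcup_{R>a}\pi^{-1}([a,R])$ of closed 3-balls, hence homeomorphic to $\R^3$.
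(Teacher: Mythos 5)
Your plan is correct, and it reaches the statement by a genuinely different route from the paper. The paper's argument is a direct Morse-theoretic discussion on the four-dimensional leaf $C^{-1}(c_0)$: it uses the preceding properness proposition (items (i)--(iii)) to rule out escape to $q\to0$ or $q\to\infty$ when $h_0<-1$, takes the relative equilibria as the only critical points of $H$ on the leaf with the signatures $(+,+,+,+)$ and $(+,+,+,-)$ from Proposition~\ref{P:Signature-L2}, and then describes informally how two $3$-spheres appear above the minimum, how two open balls arrive ``from infinity'' once $h_0>-1$, and how the spheres coalesce with the balls at the index-one critical value. You instead fibre the sheet $\{m_z>0\}$ of the leaf by $q$, minimise the (manifestly positive, fibrewise proper) kinetic energy over each slice — your completed-square form of $T$ and the value $V_{\mathrm{eff}}(q)=U(q)+\frac{c_0}{2\mu_1}\frac{\cosh^2(q-\alpha(q))+\mu\cosh^2\alpha(q)}{\sinh^2 q}$ are both correct — and reduce everything to the shape of the one-dimensional sublevel set $\{V_{\mathrm{eff}}\le h_0\}$, whose critical points are exactly the elliptic RE with Casimir $c_0$ and whose asymptotics ($+\infty$ at $0$, $-1^+$ at $\infty$, with the $O(e^{-q})$ kinetic correction dominating $\coth q-1$) you identify correctly. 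This buys a more quantitative control of the non-compact direction, which the paper handles only through the properness proposition and the informal phrase about sets arising from infinity, and it makes the identification of the cusp region with $V_{\mathrm{eff}}(q_1)<h_0<V_{\mathrm{eff}}(q_2)$ explicit. Two points in your sketch are asserted rather than checked, but both are routine: the nondegeneracy (positive definiteness) of the slice Hessian at the fibrewise minimum, which follows from the same block computations as for ${\bf L}^{(1)}_{11}$ and ${\bf L}^{(2)}$ in the proof of Proposition~\ref{P:Signature-L2} (only \eqref{eq:CofMH}, not \eqref{eq:M0EH}, is needed there), and the local triviality of the $S^2$-family over the interior of $\mathcal I$, which follows since a critical point of $q$ restricted to the fibre can only occur at a slice minimum lying on the fibre, i.e.\ where $V_{\mathrm{eff}}(q)=h_0$. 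With those two remarks added, your argument is complete and, if anything, tighter than the paper's discussion/proof.
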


On the other hand, all of the fibres having $c_0<0$ are unbounded but we were unable to  determine more information about their topology.  
The transition from $c_0>0$ to $c_0<0$ is surprisingly complicated.

\noindent\textbf{Discussion/proof:} We refer to Fig.\,\ref{F:DiagramL2}.   If $h_0<-1$ and $c_0$ is sufficiently large then the fibre is empty. Now assume $0<c_0<c_*$ where $c_*>0$ is the value of the Casimir where the saddle-node bifurcation takes place. For $h_0<<-1$ the fibre is empty, and there is a local minimum of $H$ on the level set of $C$ (since the signature of the critical point is $(+,+,+,+)$), and in fact two local minima related by the time-reversal symmetry.  In the region where this local minimum is less than $-1$, the minima are in fact global minima, and the fibre consists of just two isolated points.   These critical points are non-degenerate so by elementary Morse theory the $h_0$-level sets of $H$ on $C^{-1}(c_0)$ is the union of two 3-spheres, for $H_{\textrm{min}}<h_0<-1$. 

As $h_0$ becomes larger than $-1$, there are two open sets arising `from infinity' (large $q$). This gives a contribution to the fibre of two open balls. As $(c_0,h_0)$ leaves the small cusp region by increasing $h_0$, there is a critical point of $H$ of signature $(+,+,+,-)$. This corresponds to the spheres meeting the interior of the open ball and then coalescing, giving rise to just two open balls.

\newpage
\section{Relative equilibria for the  2-body problem on the sphere}

Consider two masses $\mu_1$ and $\mu_2$ on the unit sphere, interacting via an attracting conservative force with potential energy $U(q)$, where $q$ is the angular separation of the masses, where by attracting we mean $U'(q)>0$ for all $q\in(0,\pi)$.  

\subsection{Classification of relative equilibria}
\label{SS:Classification S2}

The symmetry group of this problem is $SO(3)$, and every 1-parameter subgroup consists of rotations about a fixed axis. Thus,  all of the relative equilibria of the problem for $\omega\neq0$ 
are periodic solutions, and the 2 masses simultaneously rotate about a fixed axis of rotation at a steady angular speed $\omega$.  Given any configuration with $q\in(0,\pi)$ there is a uniquely defined shortest geodesic between the two masses.  Any point lying on this arc is said to lie \emph{between} the masses.

Recall from the introduction that in counting relative equilibria, we identify any relative equilibria that merely differ by a symmetry (including time-reversing symmetry).

\begin{theorem}\label{thm:existence on S^2}
In the 2-body problem on the sphere, governed by an attractive force, the set of relative equilibria (RE) depends on whether the masses are equal or not as follows. In every case, the axis of rotation  lies between the masses in the sense described above. 
\begin{description}
\item[$\mu_2\neq\mu_1$:] For each $q\in(0,\pi)$, $q\neq\pi/2$ there is a unique RE where the masses are separated by an angle $q$. The axis of rotation  subtends an angle $\theta_j\in(0,\pi/2)$ with the mass $\mu_j$ (so $q=\theta_1+\theta_2$) which are related by
\begin{equation}
\label{eq:RE-cond1}
\mu_1\sin (2\theta_1) = \mu_2\sin(2\theta_2).
\end{equation}
We call these \emph{acute} and \emph{obtuse RE}, accordingly as $q<\pi/2$ or $q>\pi/2$.  There is no RE for $q=\pi/2$.  In the acute RE, the larger mass is closer to the axis of rotation, while in the obtuse RE, the smaller mass is closer.  See Fig.\,\ref{fig:acute-and-obtuse-S2}.

\item[$\mu_2=\mu_1$:] In this case there are two classes of RE, isosceles and right-angled (see Fig.\,\ref{fig:isosceles+RA}):
\begin{enumerate}
\item Given any $q\in(0,\pi)$, $q\neq\pi/2$, there is a unique RE where the masses are separated by an angle $q$. In this case the axis of rotation passes through the sphere midway between the masses; these we call  \emph{isosceles RE}. 
\item Given any $\theta\in(0,\pi/2)$ there is a unique RE with angular separation $q=\pi/2$, called a \emph{right-angled RE}, where $\theta$ is the smaller of the angles between the axis of rotation and the masses. 
\end{enumerate}
Note that when $q=\pi/2$ and $\theta=\pi/4$ these two families meet, giving just one RE. 
\end{description}
For all the RE, the speed of rotation $\omega$ is given by 
\begin{equation}\label{eq:omegaSphere}
\omega^2=\zeta^{-1}U'(q),
\end{equation}
where $\zeta=\frac12\mu_1\sin(2\theta_1)=\frac12\mu_2\sin(2\theta_2)$ and of course $q=\theta_1+\theta_2$.
\end{theorem}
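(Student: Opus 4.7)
The plan is to follow the strategy used for the Lobachevsky case (Theorem~\ref{thm:existence on L2}): solve the reduced equilibrium equations $\partial H/\partial p = 0$, $\boldsymbol m \times \partial H/\partial \boldsymbol m = \boldsymbol 0$, $\partial H/\partial q = 0$ directly, with $H$ given by \eqref{eq:Hamiltonian on sphere}. The first equation yields $p = -m_x/(1+\mu)$, and substituting into the last two components of the cross-product equation reproduces the same dichotomy as in the hyperbolic case: either $m_y=m_z=0$, which combined with $\partial H/\partial q = 0$ would force $U'(q)=0$ and is excluded by the attractivity assumption, or else $m_x=0$ and hence $p=0$. Under the latter assumption I parametrise
\begin{equation*}
\boldsymbol m = (0,\,M\sin\alpha,\,M\cos\alpha), \qquad M>0,\quad \alpha\in(0,\pi),
\end{equation*}
so that $\alpha$ is the angle in the body frame between $\boldsymbol m$ and the direction $(0,0,1)$ of $\mu_1$.

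Computing the remaining first component of $\boldsymbol m\times\partial H/\partial \boldsymbol m$ using $\mathbf{A}(q)$ from \eqref{eq:defA} and simplifying trigonometrically reduces the RE equation to
\begin{equation*}
\sin 2(q-\alpha) = \mu\sin 2\alpha.
\end{equation*}
Setting $\theta_1 := \alpha$ and $\theta_2 := q - \alpha$ gives $q=\theta_1+\theta_2$ and recovers exactly \eqref{eq:RE-cond1}. For $\mu_1\neq \mu_2$ I rewrite this as $\tan 2\theta_1 = \mu_2\sin 2q/(\mu_1+\mu_2\cos 2q)$, which for any $q\neq\pi/2$ pins down a unique $2\theta_1\in(0,\pi)$ and hence a unique $\theta_1\in(0,\pi/2)$; it then remains to check that $\theta_2=q-\theta_1$ also lies in $(0,\pi/2)$ (so that the axis lies on the minimizing geodesic between the particles) and to compare $\theta_1$ with $\theta_2$ via \eqref{eq:RE-cond1} to conclude that the heavier mass is closer to the axis precisely when $q<\pi/2$. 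The case $q=\pi/2$ with unequal masses is excluded because \eqref{eq:RE-cond1} with $\theta_1+\theta_2=\pi/2$ forces $\sin 2\theta_1=\sin 2\theta_2$, contradicting $\mu_1\neq\mu_2$.

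When $\mu_1=\mu_2$, equation \eqref{eq:RE-cond1} becomes $\sin 2\theta_1 = \sin 2\theta_2$, whose positive solutions split into the isosceles branch $\theta_1=\theta_2=q/2$ (defined for every $q\in(0,\pi)$) and the right-angled branch $\theta_1+\theta_2=\pi/2$ (so $q=\pi/2$ with $\theta_1\in(0,\pi/2)$ free); the two branches intersect transversally at $\theta_1=\theta_2=\pi/4$, yielding the claimed pitchfork. Finally, the scalar equation $\partial H/\partial q = 0$ determines $M^2$ via an expression of the form
\begin{equation*}
M^2 = \frac{\mu_1\sin^3 q\,U'(q)}{\mu\cos^2\alpha\cos q + \cos\alpha\cos(q-\alpha)},
\end{equation*}
whose denominator, after using \eqref{eq:RE-cond1} to eliminate $\mu$, collapses to $\sin q\,\cos\alpha\,\cos^2(q-\alpha)/\sin\alpha$ and is therefore positive along every branch. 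The angular velocity $\omega$ is then extracted from $\boldsymbol\omega = \mu_1^{-1}\mathbf{A}(q)\boldsymbol m$ combined with the reconstruction formula \eqref{eq1.9e} for $\dot\psi$, and substituting the computed value of $M^2$ simplifies the result to $\omega^2 = \zeta^{-1}U'(q)$ with $\zeta=\tfrac12\mu_1\sin 2\theta_1=\tfrac12\mu_2\sin 2\theta_2$, which is \eqref{eq:omegaSphere}. The main obstacle I anticipate is the unequal-mass case: verifying uniqueness of $\theta_1$ for every $q\neq\pi/2$, tracking the switch between ``heavier closer'' (acute) and ``lighter closer'' (obtuse) as $q$ crosses $\pi/2$, and keeping $\theta_1,\theta_2\in(0,\pi/2)$ throughout the admissible range; the equal-mass branches, the positivity of $M^2$, and the derivation of $\omega$ are then essentially routine trigonometric manipulations.
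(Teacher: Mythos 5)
Your plan follows the paper's own route almost step for step: the same reduced equilibrium equations, the same dichotomy ($m_y=m_z=0$ excluded by attractivity, else $m_x=0$, $p=0$), the same polar parametrisation of $(m_y,m_z)$, the same pair of conditions \eqref{eq:RE-Mom} and \eqref{eq:M0cond}, and the same extraction of $\omega$ from \eqref{eq1.9e}. The one genuinely different ingredient is how you certify positivity of the denominator $F_\mu$ of \eqref{eq:M0cond}: you use the factorisation $F_\mu(q,\alpha)=\sin q\,\cos\alpha\,\cos^2(q-\alpha)/\sin\alpha$, valid on solutions of \eqref{eq:RE-Mom} (I checked this identity; it is correct), whereas the paper parametrises the solution set by $\alpha$ through the two branches $q_\pm(\alpha)$ of \eqref{eq:q+-} and proves the sign statement by the trigonometric estimates of Lemma~\ref{lemma:distinct masses}. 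Your identity is cleaner and in fact yields that lemma immediately, since it shows that along solutions of \eqref{eq:RE-Mom} the sign of $F_\mu$ is the sign of $\cos\alpha$; similarly, your solve-for-$\theta_1(q)$ viewpoint replaces the paper's solve-for-$q_\pm(\alpha)$ viewpoint, which is a matter of taste.

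There is, however, a gap in the uniqueness claim as you state it. Solving $\tan 2\theta_1=\mu_2\sin 2q/(\mu_1+\mu_2\cos 2q)$ for $2\theta_1\in(0,\pi)$ only gives uniqueness \emph{within the ansatz} $\alpha=\theta_1\in(0,\pi/2)$; the reduced coordinates allow (up to time reversal) any $\alpha\in[0,\pi)$, i.e.\ the axis could a priori lie outside the arc between the masses, and \eqref{eq:RE-Mom} does admit such solutions. To get the theorem's ``unique RE'' you must exclude $\alpha\in[\pi/2,\pi)$ and the boundary case $m_y=0$ (where \eqref{eq:RE-Mom} forces $q=\pi/2$); this is exactly what Lemma~\ref{lemma:distinct masses} does in the paper, and your own factorisation closes it at once because $\cos\alpha\le 0$ makes $F_\mu\le 0$, so \eqref{eq:M0cond} has no solution $M_0^2>0$ --- but you never invoke it for that purpose. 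Smaller loose ends to tie up: the tangent rewriting degenerates when $\mu_1+\mu_2\cos 2q=0$ (there the solution is $\theta_1=\pi/4$ and must be treated separately); the deferred check that $\theta_2=q-\theta_1\in(0,\pi/2)$ does follow easily, since $\sin 2\alpha>0$ in \eqref{eq:RE-Mom} forces $\sin 2(q-\alpha)>0$ and hence $0<q-\alpha<\pi/2$; and the acute/obtuse comparison of $\theta_1$ with $\theta_2$ (heavier mass closer iff $q<\pi/2$) still needs the short argument via $\sin 2\theta_1-\sin 2\theta_2=2\cos(\theta_1+\theta_2)\sin(\theta_1-\theta_2)$, which the paper obtains instead from $q_-(\alpha)<2\alpha<q_+(\alpha)$.
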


Previous existence results for the RE of the problem for the gravitational potential \eqref{eq:gravity on S2} were given before in \cite{3}, \cite{PCh}. The theorem above
completes their classification for arbitrary attractive potentials.

\begin{figure}
\centering
\includegraphics{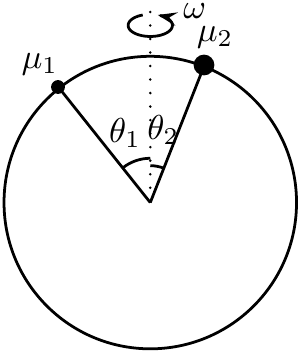} 
	\qquad\qquad\qquad 
\includegraphics{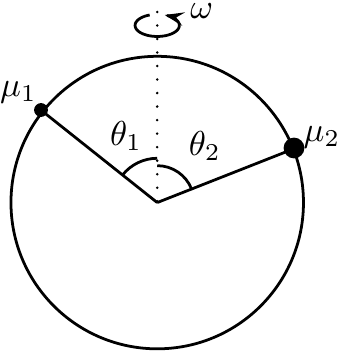}
\begin{minipage}{0.8\textwidth}
\caption{Acute and obtuse relative equilibria for distinct masses (here shown for $\mu=0.7$, and $q=\pi/3$ and $2\pi/3$ respectively).
}
\label{fig:acute-and-obtuse-S2}
\end{minipage}
\end{figure}

\begin{figure}
\centering
\includegraphics{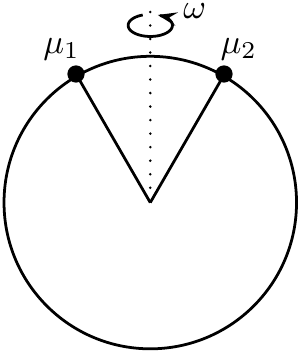} 
	\qquad\qquad\qquad 
\includegraphics{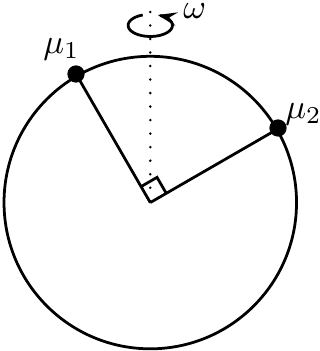}
\begin{minipage}{0.8\textwidth}
\caption{Isosceles and right-angled relative equilibria respectively, for a pair of equal masses.
}
\label{fig:isosceles+RA}
\end{minipage}
\end{figure}

\subsection*{Proof of Theorem\,\ref{thm:existence on S^2}}

Relative equilibria are equilibrium points  of the reduced system \eqref{eq5.2} so they correspond to solutions of  the following equations:
\begin{subequations}\label{eq:RE}
\begin{align}
\label{eq:RE-p}
\frac{\partial H}{\partial p}&=0, \\
\label{eq:RE-m}
\frac{\partial H}{\partial {\boldsymbol m}}\times {\boldsymbol m} & ={\bf 0}, \\
\label{eq:RE-th}
\frac{\partial H}{\partial q} & =0,
\end{align}
\end{subequations}
where the reduced Hamiltonian $H$ is given in \eqref{eq:Hamiltonian on sphere}.

The condition \eqref{eq:RE-p} is equivalent to
\begin{equation}
\label{eq:p}
p=-\frac{m_x}{1+\mu}.
\end{equation}
Substituting \eqref{eq:p} into  the last two components of \eqref{eq:RE-m} yields two possibilities:
\begin{enumerate}
\item $m_y=m_z=0$. In this case \eqref{eq:RE-th} implies  $U'(q)=0$. Since the potential is attractive there is no solution of this form.
\item $m_x=0$. We focus on this case in what follows. Note that \eqref{eq:p} implies that $p=0$.
\end{enumerate}

Introduce polar coordinates for non-zero points in the  $m_y$--$m_z$ plane by putting 
\begin{equation}
\label{eq:PolarSphere}
m_z=M_0\cos \alpha,  \qquad m_y= M_0\sin \alpha,
\end{equation}
for $\alpha \in [0, 2\pi)$ and $M_0>0$ (we interpret $\alpha$ in terms of the configuration geometry below). The first component of  \eqref{eq:RE-m} and \eqref{eq:RE-th} may be rewritten in these coordinates as:
\begin{subequations}\label{eq:RE-cond2}
\begin{align}
\label{eq:RE-Mom}
&\mu\sin (2\alpha) - \sin(2(q -\alpha))=0, \\
 \label{eq:M0cond}
&M_0^2 =
\frac{\mu_1 \sin^3q\, U'(q)}
{F_\mu(q,\alpha)}.
\end{align}
\end{subequations}
where 
\begin{equation}
\label{eq:def of F}
F_\mu(q,\alpha):=\cos \alpha (\mu \cos q \cos \alpha +  \cos (q- \alpha)).
\end{equation}
It follows from \eqref{eq:M0cond} that $M_0$ has a real value if and only if $F_\mu(q,\alpha)>0$.  Note also that if $(q,\alpha,M_0)$ is a solution to \eqref{eq:RE-cond2}, then so is $(q,\alpha+\pi,M_0)$.   This corresponds to changing the sign of $\boldsymbol m$, which is the time-reversing symmetry described in the introduction.  Since we do not count such pairs of solutions separately, we restrict attention from now on to $\alpha\in[0,\pi)$.  

The analysis of the solutions of \eqref{eq:RE-cond2} depends on whether or not the masses are equal.

\bigskip

\noindent\textbf{Equal masses:} In this case $\mu=1$ and it follows from \eqref{eq:RE-Mom} that either 
$q=\pi/2$ or $q=2\alpha (\bmod\pi)$.  

Firstly, if $q=\pi/2$ then $F_1(\pi/2,\alpha) = \cos\alpha\sin\alpha$ and this is strictly positive if and only if $\alpha\in(0,\pi/2)$.  

Now suppose $q=2\alpha (\bmod\pi)$.  Then again, $F_1(q,\alpha) >0$ if and only if $\alpha\in(0,\pi/2)$; in particular, if $\alpha>\pi/2$ then $q=2\alpha-\pi$ and $F_1(q,\alpha) =-\cos^2\alpha(\cos2\alpha+1)<0$.

For equal masses, the equations therefore have a solution if and only if $\alpha\in(0,\pi/2)$, with either $q=\pi/2$ or $q=2\alpha$.  We return to this after the analogous discussion for distinct masses. 

\bigskip

\noindent\textbf{Distinct masses:}  Without loss of generality we suppose $0<\mu<1$. The relation between $q$ and $\alpha$ from \eqref{eq:RE-Mom} is shown in Fig.\,\ref{fig:q-alpha-S2}. For each value of $\alpha\in(0,\pi)$ it is clear that there are precisely two values of $q\in[0,\pi)$ satisfying  equation \eqref{eq:RE-cond2}, namely
\begin{equation} \label{eq:q+-}
\begin{split}
q_-(\alpha)&:=\alpha + \frac{1}{2}\arcsin (\mu( \sin 2\alpha)) \bmod\pi, \\
q_+(\alpha)&:=\alpha +\frac{\pi}{2} - \frac{1}{2}\arcsin (\mu( \sin 2\alpha))\bmod \pi.
\end{split}
\end{equation}

\begin{figure}
\centering
\includegraphics{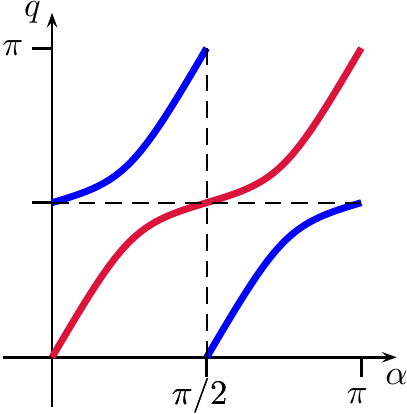}
\begin{minipage}{0.8\textwidth}
\caption{The relation between $q$ and $\alpha$ given by \eqref{eq:RE-Mom} for $\mu\neq1$ (here   $\mu=0.7$). The red curve represents $q=q_-(\alpha)$ while the blue one represents $q=q_+(\alpha)$.}
\label{fig:q-alpha-S2}
\end{minipage}
\end{figure}

The following lemma is proved at the end of the section.

\begin{lemma} \label{lemma:distinct masses}
Let $g_{\pm}:[0,\pi )\to \R$ be the functions defined by
\begin{equation*}
g_{\pm}(\alpha)=F_\mu(q_\pm(\alpha),\alpha),
\end{equation*}
Then, for any value of $\mu\in(0,1)$,  each function $g_{\pm}$ is continuous, and moreover it is strictly positive if and only if $\alpha\in (0,\pi/2)$, and in this case $q_\pm(\alpha)>\alpha$. 
\end{lemma}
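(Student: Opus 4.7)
My plan is to first establish a key factorization of $F_\mu$ on the constraint curve $\mu\sin 2\alpha = \sin 2(q-\alpha)$. A direct trigonometric manipulation (expanding $\cos q = \cos\alpha\cos(q-\alpha) - \sin\alpha\sin(q-\alpha)$ in the definition \eqref{eq:def of F}, grouping terms, and substituting $\sin 2(q-\alpha) = \mu\sin 2\alpha$) should yield the identity
\begin{equation*}
F_\mu(q,\alpha) = \cos\alpha\,\cos(q-\alpha)\bigl[\cos^2(q-\alpha) + \mu\cos^2\alpha\bigr],
\end{equation*}
valid whenever $(q,\alpha)$ satisfies \eqref{eq:RE-Mom}. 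Since the bracket is non-negative and vanishes only at isolated points (where $\cos\alpha = \cos(q-\alpha)=0$), the sign of $g_\pm(\alpha)$ is then controlled by the simpler product $\cos\alpha\cdot\cos(q_\pm(\alpha)-\alpha)$.

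Next I would carry out the case analysis on $\alpha$ using the explicit formulas \eqref{eq:q+-}. For $\alpha\in(0,\pi/2)$, $\sin 2\alpha>0$ and $\tfrac12\arcsin(\mu\sin 2\alpha)\in(0,\pi/4)$, so $q_-(\alpha)-\alpha\in(0,\pi/4)$ and $q_+(\alpha)-\alpha\in(\pi/4,\pi/2)$ (no wrapping needed). In both cases $\cos(q_\pm-\alpha)>0$ and $\cos\alpha>0$, hence $g_\pm(\alpha)>0$, and the bounds immediately give $q_\pm(\alpha)>\alpha$. For $\alpha\in(\pi/2,\pi)$, $\sin 2\alpha<0$, so $q_-(\alpha)-\alpha\in(-\pi/4,0)$ and, after reducing mod $\pi$, $q_+(\alpha)-\alpha\in(-\pi/2,-\pi/4)$; in both cases $\cos(q_\pm-\alpha)>0$ while $\cos\alpha<0$, giving $g_\pm(\alpha)<0$. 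The endpoints $\alpha=0$ and $\alpha=\pi/2$ give boundary zeros that can be computed directly.

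The main subtlety to resolve is the continuity of $g_+$ at $\alpha=\pi/2$, where $q_+$ itself has a jump of $-\pi$ due to the $\bmod\,\pi$ reduction, and $F_\mu$ is \emph{anti}-periodic under $q\mapsto q+\pi$. I would handle this by observing that the factor $\cos\alpha$ in $F_\mu$ vanishes precisely at $\alpha=\pi/2$: from either side the bracket in the factorization remains bounded, so $g_+(\alpha)\to 0$ as $\alpha\to\pi/2$, and the spurious sign-flip coming from the modular jump is exactly cancelled by $\cos\alpha$ passing through zero. The same argument (or a direct computation from the formula) handles the continuity of $q_-$ and $g_-$, which do not require any wrap-around on $[0,\pi)$. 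This completes both the sign characterization and the continuity claim.
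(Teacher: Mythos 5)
Your proposal is correct, and it takes a genuinely different route from the paper's. Your key step is the identity, valid on the constraint curve \eqref{eq:RE-Mom}, $F_\mu(q,\alpha)=\cos\alpha\,\cos(q-\alpha)\,\bigl[\cos^2(q-\alpha)+\mu\cos^2\alpha\bigr]$, which indeed follows from $\mu\sin\alpha\cos\alpha=\sin(q-\alpha)\cos(q-\alpha)$ and is compatible with the mod-$\pi$ reduction (both sides change sign under $q\mapsto q\pm\pi$, and the constraint is unchanged). This reduces the whole sign question to that of $\cos\alpha\,\cos(q_\pm(\alpha)-\alpha)$, which your interval estimates for $q_\pm(\alpha)-\alpha$ settle immediately, and it also yields $q_\pm(\alpha)>\alpha$ and the continuity of $g_+$ across the single wrap point $\alpha=\pi/2$, where the factor $\cos\alpha$ forces both one-sided limits and the value to be $0$. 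The paper argues piecemeal instead: positivity of $g_-$ on $(0,\pi/2)$ by inspecting the terms of \eqref{eq:def of F}, positivity of $g_+$ via the auxiliary inequality $\mu\sin\alpha\cos\alpha<\sin\beta$ with $\beta=\tfrac12\arcsin(\mu\sin2\alpha)$, non-positivity off $(0,\pi/2)$ from the odd symmetry $g_\pm(\pi/2+\alpha)=-g_\pm(\pi/2-\alpha)$, and continuity from $F_\mu(q,\pi/2)\equiv0$. Your factorization is more unified and makes the sign and vanishing locus of $g_\pm$ transparent at a glance; the paper's symmetry trick disposes of the range $\alpha\notin(0,\pi/2)$ in one line without tracking the mod-$\pi$ wrap explicitly. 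One small caveat, which your write-up shares with the paper's own statement rather than introduces: at $\alpha=0$ one has $g_+(0)=0$ but $g_-(0)=F_\mu(0,0)=1+\mu>0$, so ``boundary zeros at both endpoints'' is not literally true for $g_-$; this single degenerate point corresponds to $q=0$ (collision) and is immaterial for the classification, but you should state it as an exception rather than a zero.
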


If $q=\pi/2$ then $\alpha=0 \bmod \pi/2$ and $g_\pm(\alpha)=0$. Consequently, $M_0^2$ in \eqref{eq:M0cond} is undefined, showing there are no solutions with $q=\pi/2$. 

Now for any $q\in(0,\pi/2)$, there is just one value of $\alpha\in(0,\pi/2)$ (as required by the lemma above) which is related by $q=q_-(\alpha)$.  These correspond to acute relative equilibria. 

Similarly, for any $q\in(\pi/2,\pi)$, there is again a single value of $\alpha\in(0,\pi/2)$ satisfying \eqref{eq:RE-Mom}, but now $q=q_+(\alpha)$. These are obtuse relative equilibria. 

\bigskip

There now remains to relate $(q,\alpha)$ to the angles $\theta_1,\theta_2$ of the theorem.  Note that for all the solutions described above (with $\alpha\in(0,\pi/2)$) we have $q>\alpha$ (for equal masses this is clear, for distinct masses it follows from the lemma).  From \eqref{eq:PolarSphere} and the definition of the moving frame it follows that $\alpha=0$ corresponds to the axis containing the mass $\mu_1$, and then $\alpha$ increases in the direction towards $\mu_2$.  Thus $\alpha$ represents the angle between $\mu_1$ and the axis of rotation (the axis containing $\boldsymbol m$), so $\alpha=\theta_1$. Moreover, since $q>\alpha$, the axis of rotation lies between the masses, and the angle between the axis of rotation and $\mu_2$ is $\theta_2=q-\alpha$. This shows that \eqref{eq:RE-cond1} is equivalent to \eqref{eq:RE-Mom}.

In view of the above discussion, and of the inequality
\begin{equation*}
q_-(\alpha)<2\alpha < q_+(\alpha), \qquad \alpha\in(0,\pi/2),
\end{equation*}
that is easily established from \eqref{eq:q+-},  it follows that $\theta_2<\theta_1$ for acute RE, with the opposite inequality, $\theta_1<\theta_2$, holding for obtuse RE. This proves the claim
about which mass is closer to the axis of rotation.

Finally, the equation \eqref{eq:omegaSphere} for $\omega$ is obtained by starting with  \eqref{eq1.9e} and using  $m_x=0$, \eqref{eq:PolarSphere} and
 \eqref{eq:RE-cond2} to simplify the resulting expression.

\begin{proof}[Proof of Lemma\,\ref{lemma:distinct masses}.]
We begin by showing the continuity of $g_\pm$. Since $F_\mu(q,\alpha)$ is continuous in $(q,\alpha)$ the only candidate for discontinuities of $g_\pm$ is where $q_\pm(\alpha)$ is discontinuous in $\alpha$, and this can only occur where $q_\pm=0,\pi$. For $q_-$ this only occurs for $\alpha=0,\pi$ so does not give rise  to a discontinuity. For $q_+(\alpha)$, which is increasing on each subinterval $(0,\pi/2)$ and  $(\pi/2,\pi)$, this can only occur when $\alpha=\pi/2$, at which point $F_\mu(q,\pi/2)=0$ and the discontinuity of $q_+$ has no effect on $g_+$.

Next, for $\alpha\in (0,\pi/2)$ it is clear that $0< q_-(\alpha)<\pi/2$ and $0<  q_-(\alpha)-\alpha<\pi/2$. Therefore  $g_-(\alpha)>0$ since all terms in its expression are positive.

Now let $\alpha\in (0,\pi/2)$ and let us show that $g_+(\alpha)>0$.  Let $\beta=\frac{1}{2}\arcsin(\mu\sin 2\alpha)$, then $\beta\in (0,\pi/4)$ and we have
$\sin^2\beta =\frac{1}{2}-\frac{1}{2}\sqrt{1-\mu^2\sin^22\alpha}$. Hence,
 $$0<\sin^4\beta =\sin^2\beta- \mu^2 \sin^2\alpha\cos^2\alpha,$$ which implies 
\begin{equation}
\label{eq:auxprooflemma}
 \mu  \sin \alpha\cos \alpha< \sin \beta .
\end{equation}
Given that $q_+(\alpha)=\alpha -\beta +\pi/2$ we have
\begin{equation*}
\frac{g_+(\alpha)}{\cos \alpha}= \mu \sin \beta \cos^2 \alpha - \mu \cos \beta \sin \alpha \cos \alpha + \sin \beta \geq - \mu \cos \beta \sin \alpha \cos \alpha + \sin \beta,
\end{equation*}
and one can easily prove that the quantity on the right is positive using \eqref{eq:auxprooflemma}.

Now note that for $\alpha\in (0,\pi/2)$ one has
\begin{equation*}
q_{\pm}(\pi/2 +\alpha) = \pi -q_{\pm}(\pi/2-\alpha).
\end{equation*}
Using this relation in the definition of $g_{\pm}$ shows that $g_{\pm}$ is ``odd" with respect to $\alpha=\pi/2$. Namely,
\begin{equation*}
g_\pm(\pi/2 +\alpha)=-g_\pm(\pi/2 -\alpha).
\end{equation*}
This last expression  shows  that  $g_{\pm}(\alpha)\leq 0$ for values of $\alpha\in [0,\pi)$ that do not lie on $(0,\pi/2)$.

The final inequality in the statement of the lemma follows immediately from \eqref{eq:RE-Mom} given that $\alpha\in(0,\pi/2)$. 
\end{proof}

\subsection{Stability of the relative equilibria on the sphere}
As for the Lobachevsky plane, the stability of the relative equilibria found above depends on the precise form of the potential, and in this section we restrict attention to the gravitational potential \eqref{eq:gravity on S2}.
Similar to our treatment in $L^2$, in our analysis we assume that the constants $\mu_1$ and $G  \mu_1\mu_2$
 equal one. In this way, the gravitational potential $U(q)=-\cot(q)$, and the Hamiltonian depends on the parameters
of the problem only through the mass ratio $\mu$ that we will continue to assume to be $0<\mu<1$.

The following theorem synthesises the results of the  linear stability analysis of the problem.
In its statement  \emph{elliptic} means that the linearisation of the reduced equations restricted to the symplectic leaf only has non-zero,
purely imaginary eigenvalues.

\begin{theorem} \label{thm:stability on S^2}
For the relative equilibria described in Theorem\,\ref{thm:existence on S^2}, the linear stability analysis of the RE 
depends on whether the masses are equal or not as follows.
\begin{description}
\item[$\mu_2\neq\mu_1$:] \begin{enumerate}
\item All acute RE are elliptic. 
\item There exists a critical obtuse angle $q^*\in (\pi/2,\pi)$,
 which depends on the mass ratio $\mu$, such that obtuse RE are elliptic for $\pi/2<q<q^*$, and are unstable  for $q^*<q<\pi$. \newline
The critical angle $q^*$ is given by $q^*=q_+(\alpha^*)$ where $\alpha^*$ is defined implicitly as the unique solution in $(0,\pi/2)$ of the equation
 $$
\cos 2\alpha=2\sin^2\alpha \sqrt{1-\mu^2\sin^2 2\alpha }.
$$
 Moreover, along the branch of obtuse RE,  the momentum $M^2$ has a minimum at  $q^*$.
 
\end{enumerate}
\item[$\mu_2=\mu_1$:] 

 \begin{enumerate}
\item All right-angled RE with $\theta \neq \pi/4$ are elliptic. 
\item All isosceles RE subtending an acute angle $q\in (0,\pi/2)$ are elliptic.
\item All isosceles RE subtending an obtuse angle $q\in (\pi/2, \pi)$ are unstable.
\end{enumerate}
\end{description}
\end{theorem}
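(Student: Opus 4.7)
The plan is to parallel the strategy of Proposition~\ref{P:Signature-L2} by restricting the reduced Hamiltonian to the symplectic leaf $\mathcal{M}_{M^2}=\{C=M^2\}$ (with $M^2>0$, since $C=\|\boldsymbol m\|^2$ is positive on $S^2$) and analyzing the Hessian of this restriction at each RE. On $\mathcal M_{M^2}$ I introduce Andoyer-type coordinates $(\alpha,q,z,p)$ by
\begin{equation*}
m_x=z,\qquad m_y=\sqrt{M^2-z^2}\sin\alpha,\qquad m_z=\sqrt{M^2-z^2}\cos\alpha,
\end{equation*}
which reduces the critical-point conditions at $z=p=0$ to exactly the RE equations \eqref{eq:RE-Mom}--\eqref{eq:M0cond} identified in Theorem~\ref{thm:existence on S^2}. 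As in the hyperbolic case, the Hessian $\mathbf L:=D^2H_{M^2}(\alpha,q,0,0)$ will be block-diagonal, with a block $\mathbf L^{(1)}$ in $(\alpha,q)$ and a block $\mathbf L^{(2)}$ in $(z,p)$, and the first task is to compute these blocks explicitly using the potential $U(q)=-\cot q$ together with \eqref{eq:M0cond} to eliminate $M^2$.

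Once the Hessian is in hand, the instability statements follow from the principle already used: whenever $\mathbf L$ has an odd number of negative eigenvalues, the linearised Hamiltonian vector field on $\mathcal M_{M^2}$ must have at least one eigenvalue with positive real part. I expect this to handle immediately the unstable isosceles RE with $q\in(\pi/2,\pi)$ and the unstable obtuse RE in the range $q^*<q<\pi$, by showing $\det\mathbf L^{(1)}<0$ (and $\mathbf L^{(2)}$ positive definite) via a factorisation analogous to the one leading to $f(q,\alpha)=1-4\sinh^2\alpha\sinh^2(q-\alpha)$ in the hyperbolic proof; the zero locus of the spherical analogue $\tilde f$ will cut out precisely the implicit equation $\cos 2\alpha=2\sin^2\alpha\sqrt{1-\mu^2\sin^2 2\alpha}$ that defines $\alpha^*$, hence $q^*=q_+(\alpha^*)$.

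The real obstacle, as flagged in the introduction, is that for the stable regimes (acute, sub-critical obtuse, acute isosceles, non-bifurcating right-angled) the Hessian is \emph{indefinite} and so cannot serve as a Lyapunov function. Linear stability (``elliptic'' in the sense of the statement) must therefore be proved by direct spectral analysis of the linearised system on $\mathcal M_{M^2}$. Because the linearised flow is Hamiltonian on a $4$-dimensional symplectic vector space, its characteristic polynomial takes the form
\begin{equation*}
\lambda^4+b\lambda^2+c=0,
\end{equation*}
and ellipticity is equivalent to $b>0$, $c>0$ and discriminant $b^2-4c>0$. I will compute $b$ and $c$ as rational trigonometric functions of $(q,\alpha,\mu)$ via $b=-\operatorname{tr}(J\mathbf L)^2/2$ and $c=\det(J\mathbf L)=\det\mathbf L$ (where $J$ is the symplectic form on $\mathcal M_{M^2}$ written in the Andoyer coordinates), then verify the three inequalities on each stable branch. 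The hardest step will be controlling the discriminant $b^2-4c$: positivity of $c$ matches the Hessian computation (and must be reconciled with the change of sign of $\det\mathbf L^{(1)}$ at $q^*$, which is where $c$ vanishes and the linearisation picks up a zero eigenvalue, consistent with $M^2$ having a minimum along the obtuse branch), but $b^2-4c\ge 0$ requires an independent trigonometric estimate that I do not expect to follow from anything already established.

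Finally, the equal-mass case needs a small amount of extra care at $q=\pi/2$, $\theta=\pi/4$, where the two families collide: here $\det\mathbf L$ and the discriminant both vanish and the linearisation degenerates, reflecting the pitchfork bifurcation. Away from this point the right-angled and isosceles branches can be treated by the same Andoyer-coordinate Hessian/linearisation scheme above, specialising to $\mu=1$ and using the two distinct solution branches $q=\pi/2$ and $q=2\alpha$ of \eqref{eq:RE-Mom} identified in the proof of Theorem~\ref{thm:existence on S^2}; the claimed ellipticity for $\theta\ne\pi/4$ and the unstable/elliptic dichotomy for isosceles RE will follow from the same inequalities on $b,c,b^2-4c$ evaluated along each branch.
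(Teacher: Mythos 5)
Your overall strategy coincides with the paper's: restrict to the leaf $\mathcal{M}_{M^2}$ in Andoyer coordinates, use the signature of the block-diagonal Hessian $\mathbf N$ to get instability when there is an odd number of negative eigenvalues, and prove ellipticity by analysing the characteristic polynomial $\lambda^4+a\lambda^2+b$ of $J\mathbf N$. However, the step you yourself flag as unresolved — positivity of the discriminant — is exactly the crux, and leaving it as ``an independent trigonometric estimate that I do not expect to follow from anything already established'' is a genuine gap. The paper closes it with an identity, not an estimate: after simplifying with \eqref{eq:Mom-grav-S2} and \eqref{eq:RE-Mom}, the $\lambda^2$-coefficient is $a=\frac{M^2(1+\cos(2(q-\alpha)))}{\sin^2 q\,\sin^2\alpha}>0$, and the determinant factorises as $b=\det\mathbf N=\tfrac{a^2}{4}f(q,\alpha)$ with $f(q,\alpha)=1-4\sin^2\alpha\sin^2(q-\alpha)$. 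Hence the discriminant quantity is
\begin{equation*}
R_1=\tfrac14 a^2-b=\tfrac14 a^2\bigl(1-f(q,\alpha)\bigr)=a^2\sin^2\alpha\,\sin^2(q-\alpha)>0
\end{equation*}
identically along all RE, so no branch-by-branch inequality for $b^2-4c$ is needed: ellipticity is governed solely by the sign of $b$, i.e.\ of $f$, which is the same function whose zero locus gives $\alpha^*$ and $q^*=q_+(\alpha^*)$. Without this factorisation your programme stalls precisely where you say it does.

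Two further items in the statement are not actually proved in your proposal. First, the sign of $f$ along each branch still requires work beyond the analogy with the hyperbolic case: on the acute branch $f=1-2\sin^2\alpha\bigl(1-\sqrt{1-\mu^2\sin^2 2\alpha}\bigr)$ and its positivity for all $\alpha\in(0,\pi/2)$, $\mu\in(0,1)$ needs the explicit estimate (bounding the square root by $|\cos2\alpha|$ and checking $\alpha=\pi/4$ separately), while the equal-mass branches reduce to the factorisations $f=(2\cos^2\alpha-1)^2$ and $f=(3-2\cos^2\alpha)(2\cos^2\alpha-1)$. Second, the claim that $M^2$ has a minimum at $q^*$ along the obtuse branch is part of the theorem; you only remark that it is ``consistent'' with the vanishing of $\det\mathbf N$. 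The paper proves it by differentiating \eqref{eq:Mom-grav-S2} along $q=q_+(\alpha)$, obtaining $\frac{dM^2}{d\alpha}=\frac{f(q,\alpha)}{\cos^2(q-\alpha)\cos(2(q-\alpha))\cos^2\alpha}$ and using $\cos(2(q-\alpha))<0$ on the obtuse branch together with monotonicity of $q_+$; some such argument must be supplied.
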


Contrary to the case in $L^2$ the Hamiltonian function cannot be used as a Lyapunov function to guarantee the nonlinear stability of the elliptic
RE of the problem. 
This is due to the non-definiteness of the Hamiltonian at these points. More precisely we have
\begin{proposition}
\label{P:Signature}
The Hessian matrix of the reduced Hamiltonian  (restricted to the corresponding symplectic leaf)
at the RE of the problem described in Theorem\,\ref{thm:existence on S^2}, has the following signature:
\begin{description}
\item[$\mu_1\neq \mu_2$.]
\begin{enumerate}
\item Acute RE have signature $(++--)$.
\item Obtuse RE with $\pi/2<q<q^*$ have signature $(++--)$.
\item Obtuse RE with $q^*<q<\pi$ have signature $(+++-)$.
\end{enumerate}
\item[$\mu_1\neq \mu_2$.]
\begin{enumerate}
\item Right-angled RE which are not isosceles have signature $(++--)$.
\item Isosceles  RE subtending an acute angle have signature $(++--)$.
\item Isosceles  RE subtending an obtuse angle have signature   $(+++-)$.
\end{enumerate}
\end{description}
\end{proposition}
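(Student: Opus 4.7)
The plan is to mirror the strategy of Proposition~\ref{P:Signature-L2}. Fix $M>0$ and introduce coordinates $(\alpha, q, z, p)$ on the symplectic leaf $\mathcal{M}_{M^2} = \{C=M^2\}$ by
\[
m_x = z, \qquad m_y = \sqrt{M^2-z^2}\,\sin\alpha, \qquad m_z = \sqrt{M^2-z^2}\,\cos\alpha,
\]
which extends \eqref{eq:PolarSphere} off the RE. Substituting into \eqref{eq:Hamiltonian on sphere} and using the identity $(\boldsymbol m, \mathbf{A}(q)\boldsymbol m) = M^2\psi(q,\alpha) + (1-\psi(q,\alpha))z^2$, where
\[
\psi(q,\alpha) := \frac{\cos^2(q-\alpha)+\mu\cos^2\alpha}{\sin^2 q},
\]
one expresses $H_{M^2}(\alpha,q,z,p)$ as a quadratic polynomial in $(z,p)$ plus $U(q)$. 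By Theorem~\ref{thm:existence on S^2}, the RE are precisely the critical points with $z=p=0$, $\partial_\alpha\psi=0$, and $\tfrac{M^2}{2\mu_1}\partial_q\psi=-U'(q)$.

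Next, compute the Hessian at a RE. Because $H_{M^2}$ depends on $(z,p)$ only quadratically with coefficients smooth in $(\alpha,q)$, the mixed partials $\partial^2_{\alpha z}$, $\partial^2_{qz}$, $\partial^2_{\alpha p}$, $\partial^2_{qp}$ all vanish at $z=p=0$. Thus $D^2 H_{M^2}|_{\mathrm{RE}} = \mathrm{diag}(\mathbf{L}^{(1)},\mathbf{L}^{(2)})$, with
\[
\mathbf{L}^{(2)} = \frac{1}{\mu_1}\begin{pmatrix} 1-\psi(q,\alpha) & 1 \\ 1 & 1+\mu\end{pmatrix},
\]
and $\mathbf{L}^{(1)}$ the $(\alpha,q)$-Hessian of $\tfrac{M^2}{2\mu_1}\psi+U$, simplified using the RE equations. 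A direct calculation using the explicit formula for $\psi$ together with \eqref{eq:RE-Mom} is expected to give $\det\mathbf{L}^{(2)}<0$ at every RE in the list; combined with $\mathbf{L}^{(2)}_{22}=(1+\mu)/\mu_1>0$, this shows $\mathbf{L}^{(2)}$ has signature $(+-)$ throughout, accounting for two of the four eigenvalues.

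For $\mathbf{L}^{(1)}$ one uses \eqref{eq:RE-cond2} to eliminate $U'$, $U''$ and $\mu$, expecting the factorisation $\det\mathbf{L}^{(1)} \propto f(q,\alpha)$ with $f(q,\alpha) := 4\sin^2\alpha\sin^2(q-\alpha)-1$, the spherical analogue of the function appearing in the hyperbolic proof. The sign of $f$ along each branch is then read off by using \eqref{eq:RE-Mom} to write $\cos 2(q-\alpha) = \pm\sqrt{1-\mu^2\sin^2 2\alpha}$ (plus sign on the acute branch, minus on the obtuse branch) and setting $u=\sin^2\alpha$: on the obtuse branch the condition $f=0$ reduces exactly to the equation defining $\alpha^*$ in Theorem~\ref{thm:stability on S^2}, giving $f<0$ for $\alpha<\alpha^*$ and $f>0$ for $\alpha>\alpha^*$; on the acute branch one checks $f<0$ throughout. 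Combined with the verification that $\mathbf{L}^{(1)}_{11}>0$ at the RE (so that positive determinant forces positive definiteness), this yields the claimed signatures. The equal-mass case is then handled by specialising to $\mu=1$ with $q=2\alpha$ for the isosceles branch (where $q\lessgtr\pi/2$ iff $\alpha\lessgtr\pi/4$), and by a direct ad hoc computation at the fixed value $q=\pi/2$ for the right-angled branch, where \eqref{eq:RE-Mom} becomes identically satisfied and the general framework must be adapted.

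The main obstacle is the algebraic bookkeeping: verifying the factorisation $\det\mathbf{L}^{(1)}\propto f(q,\alpha)$ despite its superficially unwieldy form, and securing $\det\mathbf{L}^{(2)}<0$ uniformly on the list. A secondary subtlety is the right-angled family in the equal-mass case, which falls outside the generic framework and requires a direct calculation at $q=\pi/2$.
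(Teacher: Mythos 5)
Your proposal is correct and follows essentially the same route as the paper: the same Andoyer coordinates on $\mathcal{M}_{M^2}$, the same block-diagonal Hessian with the $(z,p)$-block of signature $(+-)$, and the sign of the $(\alpha,q)$-block determinant governed (up to your opposite sign convention) by the same function $1-4\sin^2\alpha\,\sin^2(q-\alpha)$ with the threshold $\alpha^*$, the equal-mass branches handled by specialisation. Two small points: \eqref{eq:RE-cond2} cannot eliminate $U''$ — you must substitute the gravitational potential explicitly (the paper uses $U=-\cot q$ with $\mu_1=G\mu_1\mu_2=1$, leading to \eqref{eq:Mom-grav-S2}) — and the right-angled family needs no ad hoc treatment, since \eqref{eq:RE-Mom} is satisfied there and the same factorisation applies, giving a determinant proportional to $\cos^2 2\alpha$.
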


A proof of the results of Theorem~\ref{thm:stability on S^2} and Proposition~\ref{P:Signature} is given in Section~\ref{SS:LinearS2} below.\footnote{An alternative proof
 is given in \cite{MRO-2017} by working on a symplectic slice of the unreduced system.}
The results  are conveniently summarised in the  energy-momentum diagram of the system given in Fig.\,\ref{F:EM-S2-DiffMass}.
\begin{figure}[h]
\centering
\subfigure[The case of different masses, \,$\mu_2=2\mu_1$.
Notice the change in signature at the cusp of the obtuse relative equilibria corresponding to $q=q^*$, where $M^2$ has a minimum. ]{\includegraphics[width=6cm]{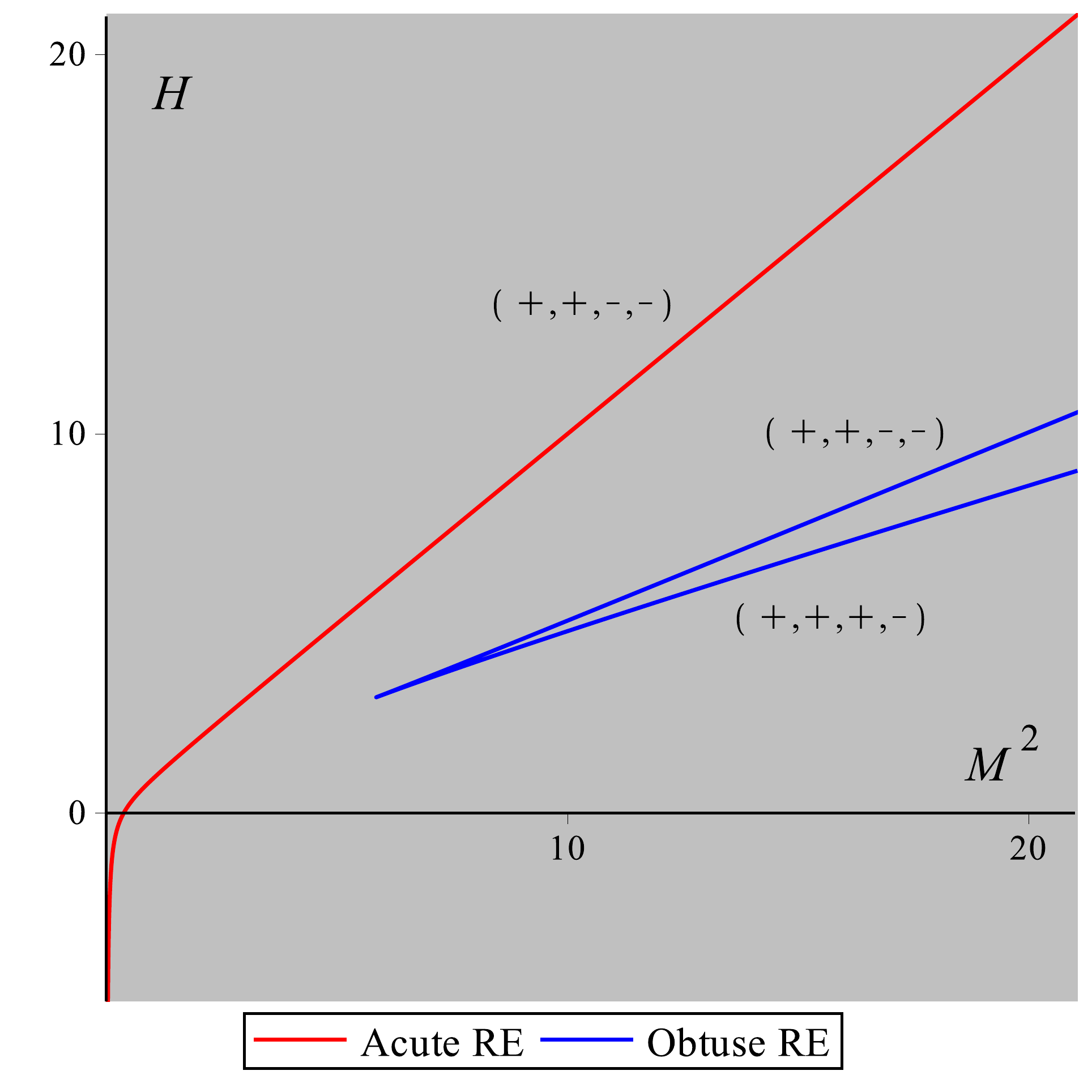}} \qquad
\subfigure[The case of equal masses, \,$\mu_1= \mu_2$.
There is a change of   signature passing from acute  to obtuse
isosceles RE when the two families of RE intersect. ]{\includegraphics[width=6cm]{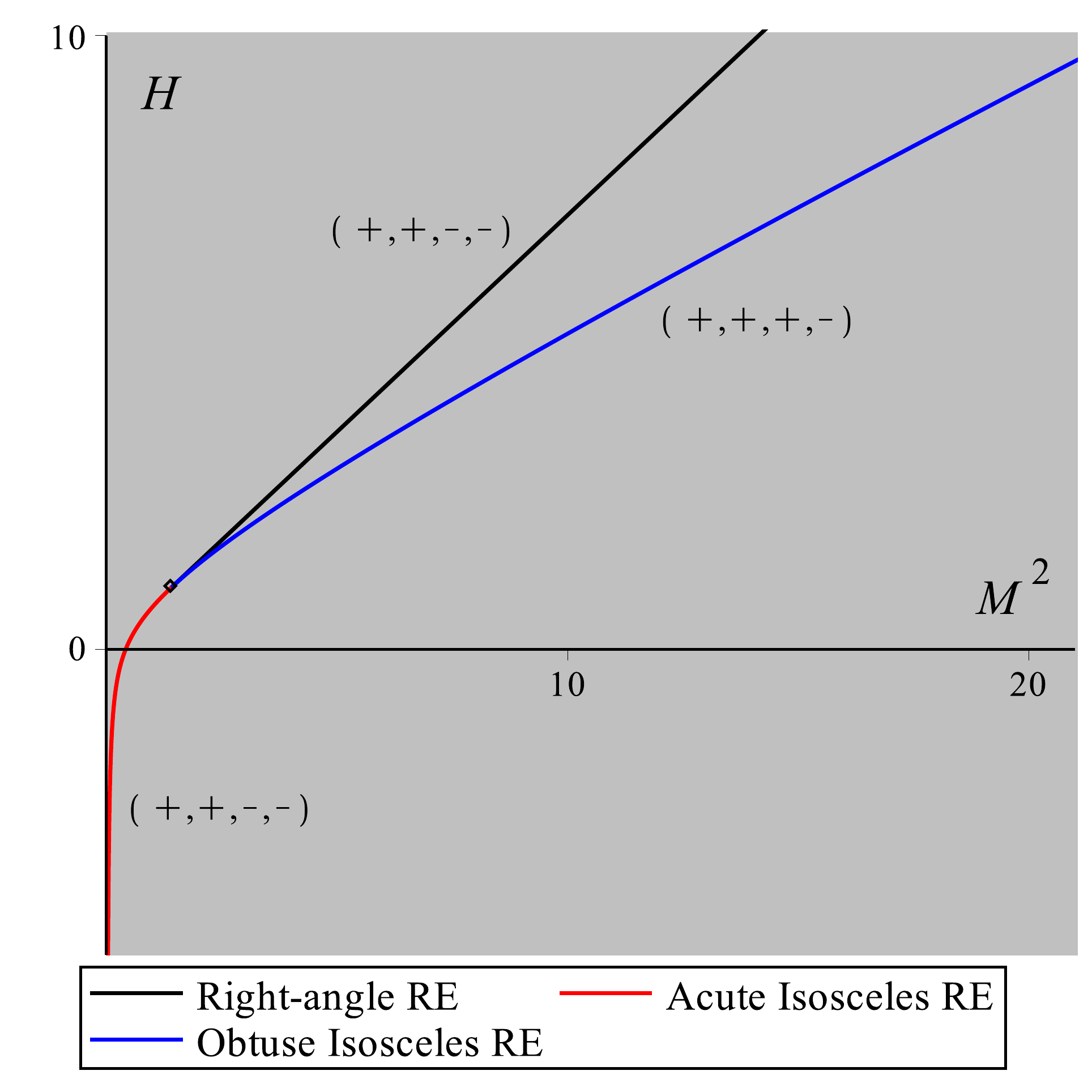}}
\begin{minipage}{0.8\textwidth}
\caption{Energy-Momentum bifurcation diagram of relative equilibria  on the sphere for the gravitational potential. 
The shaded area on the $M^2$-$H$ plane show all possible values of $(M^2,H)$. All RE with signature $(++--)$ are elliptic.   }
\label{F:EM-S2-DiffMass}
\end{minipage}
\end{figure}

\begin{remark}
Considering that the gravitational potential $U$ satisfies  $\lim_{q\to 0}U(q)=-\infty$, it is easy to construct a sequence $q_n, p_n$ with
$q_n\to 0$, and $p_n\to \infty$, such that $H$ evaluated at $(\boldsymbol m, q, p)=(0,M_0,0,q_n,p_n)$ converges to any arbitrary value $h_0\in \R$.
This shows that the energy-momentum map $(M^2,H)$ is not proper and  that all of its fibres are non-compact.
\end{remark}

The results presented above are insufficient to show the nonlinear stability of the RE of the problem.
By nonlinear stability we mean the  stability in the sense of Lyapunov in the reduced space of the corresponding equilibrium. Since the restriction
of the reduced system to the symplectic leaves defines a 2-degree of freedom Hamiltonian system, a nonlinear stability
analysis may be performed using Birkhoff normal forms and applying KAM theory. We do this in Section~\ref{KAM}, but
we restrict our attention to the acute RE. These are parametrised by the mass ratio $\mu\in (0,1)$ (an `external' parameter) and the
arc $q\in (0,\pi/2)$ (an `internal' parameter). In our treatment in Section~\ref{KAM} we
  give numerical evidence of the validity of the following (which we are calling a `theorem' although we have not carried out an analytic proof).
  
\begin{theorem} \label{thm:KAM}
Acute RE are nonlinearly stable in an open dense set of the parameter space $(\mu,q)\in (0,1)\times (0,\pi/2)$.
\end{theorem}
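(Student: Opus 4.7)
The plan is to apply Arnold's KAM stability theorem for elliptic equilibria of two-degree-of-freedom Hamiltonian systems. Proposition~\ref{P:Signature} together with Theorem~\ref{thm:stability on S^2} tells us that at an acute RE the linearisation of the reduced system restricted to the symplectic leaf $\mathcal{M}_{M^2}$ has two pairs of non-zero purely imaginary eigenvalues $\pm\mathrm{i}\omega_1, \pm\mathrm{i}\omega_2$, where $\omega_1,\omega_2$ are real-analytic functions of $(\mu,q)\in(0,1)\times(0,\pi/2)$. The Hessian of the restricted Hamiltonian has signature $(++--)$, so $H$ cannot serve as a Lyapunov function, and one must resort to normal-form analysis.

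First, I would introduce symplectic (Darboux) coordinates on $\mathcal{M}_{M^2}$ as in the proof of Proposition~\ref{P:Signature-L2} (adapted to the spherical case), centre them at the acute RE, and diagonalise the quadratic part $H_2$ via a linear symplectic transformation bringing it into the standard form
\begin{equation*}
H_2 = \tfrac12\,\sigma_1\omega_1(\xi_1^2+\eta_1^2) + \tfrac12\,\sigma_2\omega_2(\xi_2^2+\eta_2^2),
\qquad \sigma_j\in\{\pm 1\},
\end{equation*}
the signs $\sigma_j$ being dictated by the $(++--)$ signature. Passing to action-angle variables $I_j=\tfrac12(\xi_j^2+\eta_j^2)$ and computing the fourth-order Birkhoff normal form yields
\begin{equation*}
H = \sigma_1\omega_1 I_1 + \sigma_2\omega_2 I_2
+ \tfrac12\bigl(a_{11}I_1^2 + 2a_{12} I_1 I_2 + a_{22} I_2^2\bigr) + O(|I|^{5/2}),
\end{equation*}
with coefficients $a_{ij}=a_{ij}(\mu,q)$ real-analytic in the parameters.

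Second, I would identify the two obstructions to applying Arnold's theorem. The \emph{resonance set} $\mathcal{R}\subset(0,1)\times(0,\pi/2)$ consists of parameters where $k_1\omega_1+k_2\omega_2=0$ for some integer pair $(k_1,k_2)$ with $0<|k_1|+|k_2|\le 4$; since the frequency ratio $\omega_1/\omega_2$ is a non-constant real-analytic function of $(\mu,q)$ (to be verified, e.g.\ by checking its variation along one curve in parameter space), each resonance locus is a proper real-analytic subvariety and $\mathcal{R}$ is a countable union of nowhere-dense curves. The \emph{twist set} $\mathcal{T}$ is the zero set of the isoenergetic non-degeneracy determinant
\begin{equation*}
D(\mu,q) := -a_{11}\omega_2^2 + 2a_{12}\omega_1\omega_2 - a_{22}\omega_1^2 .
\end{equation*}
Provided $D$ is not identically zero on the parameter space, $\mathcal{T}$ is likewise a proper real-analytic subvariety and hence nowhere dense. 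On the open dense complement of $\mathcal{R}\cup\mathcal{T}$, Arnold's theorem then yields invariant KAM tori of positive measure accumulating on the equilibrium, which confine trajectories and imply nonlinear (Lyapunov) stability in the reduced phase space.

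The main obstacle, and the reason for restricting to numerical evidence, is the explicit verification that $D\not\equiv 0$. The Birkhoff coefficients $a_{ij}$ require carrying out two successive near-identity symplectic transformations (elimination of cubic terms and of non-resonant quartic terms) on a Hamiltonian whose explicit form, after substitution of \eqref{eq:RE-cond2} and \eqref{eq:omegaSphere}, is algebraically unwieldy, so a closed-form expression for $D(\mu,q)$ is out of reach. The strategy is instead to compute $\omega_1,\omega_2$ and the $a_{ij}$ numerically on a dense grid in $(0,1)\times(0,\pi/2)$, verify that $D(\mu,q)\neq 0$ except on a thin set, and similarly to locate the low-order resonance curves numerically; the open dense parameter region where both non-degeneracies hold is then the stability region asserted in Theorem~\ref{thm:KAM}.
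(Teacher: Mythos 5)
Your proposal follows essentially the same route as the paper: Birkhoff normal form at the elliptic acute RE, exclusion of the low-order resonances, non-vanishing of the Arnold (twist) determinant, and KAM confinement on the four-dimensional symplectic leaf, with the exceptional set identified only through numerical computation of the loci $R_2=0$, $R_3=0$ and $D=0$ in the parameter plane. The paper itself stops at exactly the same point — it explicitly states that no analytic proof is carried out and presents these curves numerically — so your concluding reliance on a numerical verification that $D\not\equiv 0$ and that the resonance set is thin matches the paper's own treatment.
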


\subsubsection{Linear analysis}
\label{SS:LinearS2}

Fix $M\neq 0$ and  consider the restriction of the reduced  system to the symplectic leaf $\mathcal{M}_{M^2}$ defined by $C( \boldsymbol  m , \boldsymbol  m ) =M^2$. 
Denote by $H_{M^2}$ the restriction of the reduced Hamiltonian~\eqref{eq:Hamiltonian on sphere} to $\mathcal{M}_{M^2}$. The RE with 
$C=M^2$ correspond to critical points of $H_{M^2}$.

Introduce canonical (Andoyer) coordinates on $\mathcal{M}_{M^2}$ by setting:
$$
m_x=z, \quad m_y=\sqrt{M^2-z^2}\sin\alpha, \quad m_z=\sqrt{M^2-z^2}\cos\alpha.
$$
Then $(\alpha,q,z,p)$ are Darboux coordinates  on $\mathcal{M}_{M^2}$ and the restriction of the reduced equations of motion to $\mathcal{M}_{M^2}$
takes the canonical form
$$
 \dot{\alpha}=\dfrac{\partial  H_{M^2}}{\partial z}, \quad \dot{q}=\dfrac{\partial  H_{M^2}}{\partial p}, \quad   \dot{z}=-\dfrac{\partial  H_{M^2}}{\partial \alpha}, \quad
 \dot{p}=-\dfrac{\partial  H_{M^2}}{\partial q}.
$$
where, in view of~\eqref{eq:Hamiltonian on sphere},  we have
\begin{equation*}
\begin{split}
H_{M^2}(\alpha,q,z,p)&=\dfrac{1}{2}\Bigg[(1+\mu)p^2+2pz-\dfrac{\cos 2q+\cos 2(q-\alpha)+\mu(1+\cos 2\alpha)}{2\sin^2 q}z^2 \\
&\qquad \qquad -\dfrac{1}{\sin^2 q}\Big(\sin 2q-\dfrac{M^2}{2}\big(1+\cos 2(q-\alpha)+\mu(1+\cos 2\alpha)\big)\Big)\Bigg].
\end{split}
\end{equation*}

From our discussion in  Section~\ref{SS:Classification S2}, we know that the critical points of $H_{M^2}$ occur   at points where $z=p=0$, and
$q$ and $\alpha$ are such that \eqref{eq:RE-cond2} holds. Note that under our assumptions on the potential $U$, equation~\eqref{eq:M0cond} simplifies to
\begin{equation}
\label{eq:Mom-grav-S2}
M^2=\frac{\sin q}{\cos \alpha (\cos (q-\alpha)+ \mu \cos \alpha \cos q)}.
\end{equation}
The Hessian matrix of $ H_{M^2}$  along the equilibrium points is given by 
\begin{equation*}
D^2H_{M^2} (\alpha,q,0,0) =:{\bf N} = \begin{pmatrix}
{\bf N}^{(1)} & 0 \\
0 & {\bf N}^{(2)} 
\end{pmatrix},
\end{equation*}
where ${\bf N}^{(1)}$ and ${\bf N}^{(2)}$ are symmetric $2\times 2$ matrices. The  entries of ${\bf N}^{(1)}$ may be written as
\begin{equation}
\label{112}
\begin{split}
&{\bf N}^{(1)}_{11}=-\frac{M^2(\cos (2(q -\alpha))+\mu\cos 2\alpha) }{\sin^2q}, \qquad {\bf N}^{(1)}_{22}= \dfrac{M^2(1+\mu)\cos^2\alpha }{\sin^4 q}, \\
&{\bf N}^{(1)}_{12}={\bf N}^{(1)}_{21}= \dfrac{M^2(-\sin q  \cos 2\alpha +(1+\mu)\cos q \sin 2\alpha)}{\sin^3 q} ,
\end{split}
\end{equation}
where we have used \eqref{eq:Mom-grav-S2} to simplify ${\bf N}^{(1)}_{22}$. On the other hand
\begin{equation}
{\bf N}^{(2)}=
\begin{pmatrix}
-\dfrac{\cos\alpha \big(\cos (2q-\alpha)  +\mu \cos\alpha)  }{\displaystyle\sin^2 q } & 1 \\
1 & 1+\mu
\end{pmatrix}.
\end{equation}

\begin{lemma}
The matrix $\bf N$ is indefinite for all RE of the problem. It has at least 2 positive eigenvalues and at least 1 negative eigenvalue.
\end{lemma}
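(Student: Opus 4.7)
The matrix ${\bf N}$ is block-diagonal with $2\times 2$ blocks ${\bf N}^{(1)}$ and ${\bf N}^{(2)}$, so its spectrum is the union of the spectra of the two blocks. My plan is to analyse each block separately: the positivity of the two $(2,2)$ entries will furnish the two positive eigenvalues, and I will then show that $\det({\bf N}^{(2)}) < 0$, which furnishes the required negative eigenvalue.

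For the two positive eigenvalues, note that at any RE the existence result of Theorem~\ref{thm:existence on S^2} gives $\alpha\in(0,\pi/2)$, and therefore, from \eqref{112},
\begin{equation*}
{\bf N}^{(1)}_{22}=\frac{M^2(1+\mu)\cos^2\alpha}{\sin^4 q}>0, \qquad {\bf N}^{(2)}_{22}=1+\mu>0.
\end{equation*}
Each block is then positive on the one-dimensional subspace spanned by its second standard basis vector, so each block has at least one positive eigenvalue. This gives at least two positive eigenvalues of ${\bf N}$.

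For the negative eigenvalue I would compute $\det({\bf N}^{(2)})$ using the RE condition~\eqref{eq:RE-Mom}, written as $\mu=\sin 2\beta/\sin 2\alpha$ with $\beta := q-\alpha$. A short trigonometric manipulation — expanding $\cos(2q-\alpha)=\cos(\alpha+2\beta)$ and using the product-to-sum formula together with $\sin 2\alpha = 2\sin\alpha\cos\alpha$ — collapses ${\bf N}^{(2)}_{11}$ to the clean expression $-\cot\alpha\cot q$. Combining this with the identity $1+\mu = 2\sin q \cos(\alpha-\beta)/\sin 2\alpha$ (another sum-to-product), the calculation of $\det({\bf N}^{(2)}) = (1+\mu){\bf N}^{(2)}_{11} -1$ reduces to $-(\cos q \cos(\alpha-\beta)+\sin^2\alpha)/\sin^2\alpha$. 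Applying then the identity
\begin{equation*}
\cos(\alpha+\beta)\cos(\alpha-\beta)=\cos^2\beta-\sin^2\alpha
\end{equation*}
to the numerator (recall $q=\alpha+\beta$), everything telescopes to the remarkably simple formula
\begin{equation*}
\det({\bf N}^{(2)})=-\frac{\cos^2\beta}{\sin^2\alpha}.
\end{equation*}
Since both $\alpha$ and $\beta=q-\alpha$ lie in $(0,\pi/2)$ at any RE of the system (by Theorem~\ref{thm:existence on S^2}), this is strictly negative; hence ${\bf N}^{(2)}$ has signature $(+,-)$ and contributes the negative eigenvalue.

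The only real obstacle is recognising the right sequence of trigonometric identities; once the RE condition is used to eliminate $\mu$, all the apparent complexity disappears and the computation is entirely elementary. I note that the lemma does not require any analysis of ${\bf N}^{(1)}$ beyond the positivity of its $(2,2)$ entry — the finer signature analysis distinguishing $(++--)$ from $(+++-)$ in Proposition~\ref{P:Signature} does require studying $\det({\bf N}^{(1)})$, but the indefiniteness statement claimed here follows cleanly from the ${\bf N}^{(2)}$ block alone.
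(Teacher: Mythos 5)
Your proposal is correct and follows essentially the same route as the paper: two positive eigenvalues from the block-diagonal form and the positivity of ${\bf N}^{(1)}_{22}$ and ${\bf N}^{(2)}_{22}$, and a negative eigenvalue from $\det({\bf N}^{(2)})=-\cos^2(q-\alpha)/\sin^2\alpha<0$, obtained by eliminating $\mu$ via \eqref{eq:RE-Mom}. The only difference is that you spell out the trigonometric simplification which the paper leaves implicit, and your intermediate steps check out.
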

\begin{proof}
That $\bf N$ has at least two positive eigenvalues follows immediately from its block diagonal form and the inequalities
$$
{\bf N}_{22}^{(1)}>0, \qquad {\bf N}_{22}^{(2)}>0.
$$
To complete the proof we  show that $ {\bf N}^{(2)}$ has a negative eigenvalue. This follows directly from the expression
$$
\det ( {\bf N}^{(2)})=-\frac{\cos^2(q-\alpha)}{\sin^2\alpha}<0
$$
that is obtained by eliminating $\mu$ using~\eqref{eq:RE-Mom}.
\end{proof}

Let us now introduce the quantity
\begin{equation}
\label{eq:defa}
a:={ \bf N}_{11}^{(1)}{ \bf N}_{11}^{(2)}+2{ \bf N}_{12}^{(1)}{ \bf N}_{12}^{(2)}+{ \bf N}_{22}^{(1)}{ \bf N}_{22}^{(2)}
=\frac{M^2(1+\cos(2(q-\alpha))}{\sin^2q\sin^2\alpha},
\end{equation}
that will be relevant in what follows. To obtain the simplified expression  on the right, one needs to 
use \eqref{eq:Mom-grav-S2} and then \eqref{eq:RE-Mom} to eliminate $\mu$.  Using again these conditions, one may show by a lengthy but straightforward
calculation that
\begin{equation}
\label{eq:determinantS2}
b:=\det{ \bf N} =\frac{a^2f(q,\alpha)}{4}.
\end{equation}
where
\begin{equation}
\label{eq:f-forS2}
f(q,\alpha):=1-4\sin^2\alpha\sin^2(q-\alpha ).
\end{equation}

\begin{lemma}
\label{L:DetS2}
The determinant of the  matrix $\bf N$ along the RE is
\begin{description}
\item[$\mu_1\neq \mu_2$]
\begin{enumerate}
\item positive for all acute RE and for the obtuse RE with $\pi/2<q<q^*$,
\item negative for obtuse RE with $q^*<q<\pi$.
\end{enumerate}
In items (i) and (ii) above, $q^*$ is defined in the statement of Theorem~\ref{thm:stability on S^2}.
\item[$\mu_1= \mu_2$]
\begin{enumerate}
\item positive for right angled RE with $\alpha\neq \pi/4$ and for isosceles RE subtending acute angle,
\item negative for isosceles RE subtending an obtuse angle.
\end{enumerate}
\end{description}
\end{lemma}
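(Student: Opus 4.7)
The plan is to exploit the factorisation \eqref{eq:determinantS2}, $\det{\bf N} = a^2 f(q,\alpha)/4$, which reduces the problem to the sign of $f(q,\alpha) = 1 - 4\sin^2\alpha\sin^2(q-\alpha)$ provided $a \neq 0$. Writing \eqref{eq:defa} as $a = 2M^2\cos^2(q-\alpha)/(\sin^2 q\,\sin^2\alpha)$, the required nonvanishing is immediate because at every RE of Theorem \ref{thm:existence on S^2} one has $q - \alpha \in (0, \pi/2)$. The remaining task is to determine the sign of $f$ on each branch, using the constraint \eqref{eq:RE-Mom}, $\sin 2(q - \alpha) = \mu\sin 2\alpha$.

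The equal-mass cases follow at once. On isosceles RE ($q = 2\alpha$), $f = 1 - 4\sin^4\alpha = \cos 2\alpha\,(1 + 2\sin^2\alpha)$ has the sign of $\cos 2\alpha$: positive for acute ($q<\pi/2$) and negative for obtuse. On right-angled RE ($q = \pi/2$), $f = 1 - \sin^2 2\alpha = \cos^2 2\alpha \geq 0$, with equality only at $\alpha = \pi/4$.

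For distinct masses (WLOG $\mu \in (0,1)$), the constraint yields $2\sin^2(q - \alpha) = 1 \mp \sqrt{1 - \mu^2\sin^2 2\alpha}$, with the minus sign on acute RE (where $q - \alpha \in (0,\pi/4)$) and plus on obtuse RE ($q - \alpha \in (\pi/4, \pi/2)$). Substituting gives
\[
f = \cos 2\alpha \pm 2\sin^2\alpha\,\sqrt{1 - \mu^2\sin^2 2\alpha},
\]
with plus on acute, minus on obtuse. For acute RE, $f > 0$ is clear when $\alpha \leq \pi/4$ (both terms nonnegative); for $\alpha \in (\pi/4, \pi/2)$, the inequality $\sqrt{1 - \mu^2\sin^2 2\alpha} > |\cos 2\alpha|$ (strict because $\mu < 1$) combined with $2\sin^2\alpha > 1$ yields $f > \cos 2\alpha + |\cos 2\alpha| = 0$.

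The obtuse case is the crux. Here $f > 0$ forces $\cos 2\alpha > 0$, so $\alpha < \pi/4$; squaring $f = 0$ in the variable $u = \sin^2\alpha$ produces the polynomial equation $g(u) := 1 - 4u + 16\mu^2 u^3(1-u) = 0$. The key step is to prove that $g$ is strictly decreasing on $(0, 1/2)$: from $g'(u) = -4 + 16\mu^2 u^2(3 - 4u)$ and the elementary maximum $\max_{u \in [0, 1/2]} u^2(3 - 4u) = \tfrac{1}{4}$ (attained at $u = \tfrac{1}{2}$), we obtain $g'(u) \leq -4(1 - \mu^2) < 0$. Since $g(0) = 1 > 0$ and $g(\tfrac{1}{2}) = \mu^2 - 1 < 0$, there is a unique root $u^* \in (0, 1/2)$, and $\alpha^* := \arcsin\sqrt{u^*}$ satisfies the implicit equation of Theorem \ref{thm:stability on S^2}. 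Then $f$ is positive on $(0, \alpha^*)$, negative on $(\alpha^*, \pi/4)$, and negative on $[\pi/4, \pi/2)$ (where $\cos 2\alpha \leq 0$ makes both terms nonpositive). Translating back to $q$ via the strict monotonicity of the continuous bijection $\alpha \leftrightarrow q_+(\alpha)$ from Theorem \ref{thm:existence on S^2} yields the stated sign of $\det{\bf N}$. The main obstacle is this polynomial monotonicity step; everything else is book-keeping around the constraint.
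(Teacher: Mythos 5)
Your proposal is correct, and its skeleton coincides with the paper's: reduce to the sign of $f(q,\alpha)=1-4\sin^2\alpha\sin^2(q-\alpha)$ via \eqref{eq:determinantS2}, substitute the branch relations coming from \eqref{eq:RE-Mom}, handle the equal-mass families by direct factorisation, and dispose of the acute branch with the bound $\sqrt{1-\mu^2\sin^2 2\alpha}>|\cos 2\alpha|$ (the paper writes this as $f>1-2\sin^2\alpha(1-|\cos2\alpha|)$ and splits at $\alpha=\pi/4$, which is the same estimate). Where you genuinely diverge is the obtuse branch, which is indeed the crux: the paper simply asserts that $f(q_+(\alpha),\alpha)=1-2\sin^2\alpha\bigl(1+\sqrt{1-\mu^2\sin^2 2\alpha}\bigr)$ is strictly decreasing in $\alpha$ on $(0,\pi/2)$ and changes sign at $\alpha^*$, whereas you pass to $u=\sin^2\alpha$, square, and prove strict monotonicity of the polynomial $g(u)=1-4u+16\mu^2u^3(1-u)$ on $(0,1/2)$ via $\max_{[0,1/2]}u^2(3-4u)=\tfrac14$, so $g'\le -4(1-\mu^2)<0$, with $g(0)=1$, $g(1/2)=\mu^2-1$. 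This buys a verifiable elementary proof of the step the paper leaves as a claim, and it pins down $\alpha^*$ (and hence $q^*=q_+(\alpha^*)$) as the unique root, consistent with Theorem~\ref{thm:stability on S^2}; since $u^*<1/2$ gives $\cos2\alpha^*>0$, the squaring introduces no spurious root, though you might state explicitly that on $(0,\pi/4)$ both $\cos2\alpha$ and $2\sin^2\alpha\sqrt{1-\mu^2\sin^2 2\alpha}$ are positive, so the sign of $f$ equals the sign of $g(u)$. You also make explicit two points the paper treats as obvious: $a>0$ because $q-\alpha\in(0,\pi/2)$ at every RE, and the strict monotonicity of $\alpha\mapsto q_+(\alpha)$ used to restate the result in terms of $q$.
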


\begin{proof}
It is clear from \eqref{eq:determinantS2} that the sign of $\det{ \bf N}$ coincides with the sign of $f(q,\alpha)$. We analyse the latter along the
 different RE of the problem.
\begin{description}
\item[$\mu_1\neq \mu_2$.]
\begin{enumerate}
\item For acute RE we have $q=q_-(\alpha)$ and therefore
$$
f=1-2\sin^2\alpha \Big(1-\sqrt{1-\mu^2 \sin^2 2\alpha }\Big).
$$
Since $\mu^2\sin^2 2\alpha <\sin^2 2\alpha $ for all $\mu\in (0, 1)$, we find that
$$
f>1-2\sin^2\alpha (1-|\cos 2\alpha |)=
\begin{cases}
1-4\sin^4\alpha \quad 0<\alpha \leqslant\pi/4, \\
\cos^2 2\alpha  \quad \pi/4\leqslant\alpha <\pi/2.
\end{cases}
$$
The above function is everywhere greater than zero except at the point $\alpha=\pi/4$, but, as can be verified,
$$
f\big|_{\alpha=\pi/4}=\sqrt{1-\mu^2}>0.
$$

\item For obtuse RE we have $q=q_+(\alpha)$ and the corresponding expression for $f$ is $f=1-2\sin^2\alpha  \left ( 1+\sqrt{1-\mu^2\sin^22\alpha} \right )$.
This is a strictly decreasing function of $\alpha$ on the interval $(0,\pi/2)$, that is positive for $0<\alpha< \alpha_*$ and negative for $\alpha_*< \alpha<\pi/2$, with
 $\alpha_*$ defined in the statement of  Theorem~\ref{thm:stability on S^2}
\end{enumerate}

\item[$\mu_1= \mu_2$.] Suppose now that $\mu=1$.
\begin{enumerate}
\item Along the right-angled RE we have $q=\pi/2$ and 
\begin{equation*}
f=(2\cos^2\alpha -1)^2.
\end{equation*}

\item Along the isosceles RE we have $q=2\alpha$ and 
\begin{equation*}
f=(3-2\cos^2\alpha)(2\cos^2\alpha -1).
\end{equation*}
\end{enumerate}
The statement in the case $\mu_1=\mu_2$ follows immediately from the above two equalities.
\end{description}
\end{proof}

Combining the above lemmas gives a proof of Proposition~\ref{P:Signature}.
The statements about instability in Theorem~\ref{thm:stability on S^2} follow from this proposition. Now we show the elliptic nature of 
the other RE.

\begin{lemma}
All RE of the problem having signature $(++--)$ are elliptic.
\end{lemma}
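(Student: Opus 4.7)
The plan is to turn linear stability into the sign of the roots of a single quadratic. Since we work on a symplectic leaf with canonical (Andoyer) coordinates $(\alpha,q,z,p)$, the linearisation of the Hamiltonian flow at the RE is $\dot{\xi}=A\xi$ with $A=J\,{\bf N}$, where
$J=\left(\begin{smallmatrix}0 & I_2\\ -I_2 & 0\end{smallmatrix}\right)$ and ${\bf N}$ is the block--diagonal Hessian with blocks ${\bf N}^{(1)},{\bf N}^{(2)}$. The block structure gives immediately
\begin{equation*}
A^2=\begin{pmatrix} -{\bf N}^{(2)}{\bf N}^{(1)} & 0 \\ 0 & -{\bf N}^{(1)}{\bf N}^{(2)} \end{pmatrix},
\end{equation*}
so the spectrum of $A$ is $\{\pm i\sqrt{\lambda_1},\pm i\sqrt{\lambda_2}\}$, where $\lambda_{1,2}$ are the eigenvalues of ${\bf N}^{(1)}{\bf N}^{(2)}$. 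The RE is (linearly) elliptic precisely when both $\lambda_{1,2}$ are strictly positive real numbers, i.e.\ when the quadratic $\lambda^2-a\lambda+b=0$ has positive real roots, with $a=\operatorname{tr}\bigl({\bf N}^{(1)}{\bf N}^{(2)}\bigr)$ and $b=\det\bigl({\bf N}^{(1)}{\bf N}^{(2)}\bigr)=\det{\bf N}$ already introduced in \eqref{eq:defa} and \eqref{eq:determinantS2}.

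The three conditions to verify are therefore $a>0$, $b>0$, and $a^2-4b\geq0$. For $b>0$: this is exactly Lemma~\ref{L:DetS2} in the cases the statement refers to, since signature $(++--)$ forces $\det{\bf N}>0$. For $a>0$: the closed form in \eqref{eq:defa} simplifies to $a=2M^2\cos^2(q-\alpha)/(\sin^2q\,\sin^2\alpha)$, which is strictly positive throughout the parameter region of every RE considered, because $\alpha\in(0,\pi/2)$, $q\in(0,\pi)$ and $q-\alpha=\pi/2$ never occurs at an RE (it would force $\mu\sin 2\alpha=0$ by \eqref{eq:RE-Mom}). For the discriminant, I would substitute $b=a^2f/4$ from \eqref{eq:determinantS2} to obtain
\begin{equation*}
a^2-4b=a^2(1-f)=4a^2\sin^2\alpha\,\sin^2(q-\alpha),
\end{equation*}
which is $\geq 0$, and in fact $>0$ since $\alpha>0$ and $q>\alpha$ along all of our RE (see the proof of Theorem~\ref{thm:existence on S^2}).

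Combining the three inequalities, $\lambda_{1,2}$ are real, distinct and strictly positive, and hence all four eigenvalues of $A$ are non-zero and purely imaginary, as required.

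The conceptually delicate step is the discriminant bound. Signature $(++--)$ for a Hamiltonian Hessian is, in general, compatible with a Hamiltonian Hopf collision producing a quadruple of complex eigenvalues and linear instability; it is only the specific algebraic identity $b=a^2 f/4$ between the invariants of ${\bf N}^{(1)}{\bf N}^{(2)}$ that rules this scenario out here. In other words, the nontrivial content of the proof is not the signature per se but the miraculous factorisation already recorded in \eqref{eq:determinantS2}, which forces $a^2\geq 4b$ automatically and so excludes any Krein collision along the families of RE under consideration.
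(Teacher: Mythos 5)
Your proposal is correct and follows essentially the same route as the paper: the linearisation $J{\bf N}$ has characteristic polynomial $\lambda^4+a\lambda^2+b$ with $a$ and $b=\det{\bf N}$ as in \eqref{eq:defa} and \eqref{eq:determinantS2}, and ellipticity reduces to $a>0$, $b>0$ and the discriminant condition, which the paper settles by the same factorisation $b=a^2f/4$ giving $\tfrac14 a^2-b=a^2\sin^2\alpha\sin^2(q-\alpha)>0$. Your extra check that $\cos(q-\alpha)\neq0$ (so $a>0$) is a detail the paper leaves implicit, but the argument is the same.
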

\begin{proof}
The matrix of the linearisation of the system at RE is 
\begin{equation*}
{\bf A}=\begin{pmatrix} 0 & I\\-I &0 \end{pmatrix} {\bf N}=\begin{pmatrix} 0 &  {\bf N}^{(2)}\\- {\bf N}^{(1)} &0 \end{pmatrix} 
\end{equation*}
with characteristic polynomial
$$
P(\lambda)=\lambda^4+a \lambda^2+b,
$$
where $a$ is given by~\eqref{eq:defa} and $b=\det{\bf N}$. Since both $a$ and $b$ are positive, the ellipticity  condition is equivalent to $R_1>0$ where
\begin{equation}
\label{eq:defR1}
R_1:=\frac{1}{4}a^2- b.
\end{equation}
However, using \eqref{eq:determinantS2} we may write
$$
R_1=a^2\sin^2\alpha \sin^2(q-\alpha ),
$$
which is positive along all RE considered in the statement of the lemma.
\end{proof}

Finally, to complete the  proof of Theorem~\ref{thm:stability on S^2},  we show that along the branch of obtuse RE, the momentum $M^2$ has a minimum at $q=q^*$.
We parametrise the branch by $\alpha\in (0,\pi/2)$ by writing $q=q_+(\alpha)$. Differentiating~\eqref{eq:RE-Mom} implicitly with respect to $\alpha$ leads to
\begin{equation*}
\frac{dq}{d\alpha}=1+\tan(2(q-\alpha))\cot (2\alpha).
\end{equation*}
Now, by differentiating~\eqref{eq:Mom-grav-S2} with respect to $\alpha$,  using the above expression for $\frac{dq}{d\alpha}$,  and~\eqref{eq:RE-Mom} to eliminate $\mu$, we find
\begin{equation*}
\frac{dM^2}{d\alpha}=\frac{f(\alpha,q)}{\cos^2(q-\alpha)\cos(2(q-\alpha))\cos^2\alpha},
\end{equation*}
where $f(\alpha,q)$ is given by \eqref{eq:f-forS2}.  It is easy to show, from the definition of $q_+(\alpha)$ in \eqref{eq:q+-},  that $\pi/2<2( q_+(\alpha)-\alpha) <\pi$,
for  $\alpha\in (0,\pi/2)$. Therefore,
$\cos(2(q-\alpha))<0$ in the above formula,  and  $\frac{dM^2}{d\alpha}$ has the opposite sign of $f(q,\alpha)$. Hence, in view of 
  \eqref{eq:determinantS2}, we conclude that    $\frac{dM^2}{d\alpha}$ has the opposite sign of $\det{ \bf N}$.
The result now follows from Lemma~\ref{L:DetS2} and the fact that $\frac{dq_+}{d\alpha}>0$.

\subsubsection{Nonlinear stability of acute  relative equilibria in the case of different masses}
\label{KAM}

In this section we give numerical evidence for the validity of Theorem~\ref{thm:KAM}.

Consider an elliptic  RE of the problem that projects to an (isolated) equilibrium point on $\mathcal{M}_{M^2}$. 
By Theorem~\ref{thm:stability on S^2} the eigenvalues of the linearized system are purely imaginary
$$
\lambda_1=i \Omega_1, \quad \lambda_2=-i\Omega_1, \quad \lambda_3=i\Omega_2, \quad \lambda_4=-i \Omega_2, \quad 0<\Omega_1<\Omega_2.
$$
Our investigation of its 
nonlinear stability will proceed by checking that the following two conditions
are satisfied:
\begin{enumerate}
\item[{$1^\circ$.}] there  are no   second or third-order resonances:
$$
\Omega_2\ne 2\Omega_1, \quad \Omega_2\ne 3\Omega_1.
$$
Under this condition  one may put the Hamiltonian (restricted to  $\mathcal{M}_{M^2}$) in Birkhoff normal form
\begin{equation}
\label{4012}
H=\frac{1}{2}\sum\limits^2_{j=1} \alpha_j I_j +\frac{1}{4}\sum^2_{j, k=1}\beta_{jk}I_j I_k+O_5, \quad I_j=x_j^2+y_j^2, \quad |\alpha_j|=\Omega_j.
\end{equation}
Here, $x_j$ and $y_j$~are suitable canonical coordinates on a neighbourhood of the equilibrium on  $\mathcal{M}_{M^2}$ (i.e., $\{x_j, y_k\}=\delta_{jk}$)  with the equilibrium located
  at $x_j=y_j=0$, $\beta_{jk}$ are constants, and $O_5$ denotes
a power series containing terms of order no less that 5 in $x_j,\, y_j$.

If conditions $1^\circ$ and $2^\circ$ are satisfied, a sufficient condition for nonlinear stability (under perturbations within  $\mathcal{M}_{M^2}$) may be given in terms of the nonlinear terms  in
\eqref{4012}. Specifically, one requires that

\item[{$2^\circ.$}] the {\it Arnold determinant} is different from zero
$$
D:=\det
\begin{pmatrix}
\beta_{11} & \beta_{12} & \alpha_1 \\
\beta_{12} & \beta_{22} & \alpha_2 \\
\alpha_1 & \alpha_2 & 0
\end{pmatrix}=2\beta_{12}\alpha_1\alpha_2-\beta_{11}\alpha^2_2-\beta_{22}\alpha_1^2 \ne 0.
$$
\end{enumerate}
This nonlinear condition allows one to apply the  KAM theorem in such a way that the  invariant tori act as boundaries for the flow on each constant energy surface,  leading to Lyapunov stability
of the equilibrium (see e.g.  \cite{Moser_1968}, \S 35 in~\cite{Siegel-Moser}, or   Section 13 in~\cite{MeyerHall} for proofs and details).

\begin{remark}
If the Arnold determinant $D=0$, one may still obtain sufficient conditions for stability by considering higher order terms in the normal form expansion \eqref{4012} (see e.g.~\cite{MeyerHall}).
On the other hand, the presence of second or third-order resonances may lead to instability.
We shall not consider any of these possibilities here.
\end{remark}

\begin{remark}
We emphasise that the above analysis ensures nonlinear stability of RE only  with respect to perturbations on the initial conditions that
lie on the  momentum surface  $\mathcal{M}_{M^2}$.
\end{remark}

\subsubsection*{Resonances}
Recall that the characteristic polynomial of the linearised system is
$$
P(\lambda)=\lambda^4+a \lambda^2+b.
$$
Condition  $1^\circ$, which requires that there are no second or third-order resonances, is written in terms of the coefficients $a$ and $b$  as
\begin{equation}
\label{cond2-alt}
\begin{gathered}
R_2:=\frac{4}{25}a^2-b\ne 0, \quad R_3:=\frac{9}{100}a^2-b\ne0.
\end{gathered}
\end{equation}
 Fig.\,\ref{fig21} below illustrates the plane $a$-$b$ of coefficients of the characteristic polynomial. The curves $\sigma_1$ and $\sigma_2$ respectively correspond to
the values of $(a,b)$ attained at the acute and obtuse RE of the problem for the fixed value of $\mu=0.95$.
 These RE are conveniently parametrised by $\alpha \in (0,\pi/2)$ by setting $q=q_-(\alpha)$ (acute) and $q=q_+(\alpha)$ (obtuse). 
 The figure also illustrates the parabolae corresponding
to the zero loci of $R_1$ defined by \eqref{eq:defR1}, and of the second and third order resonance polynomials $R_2$ and $R_3$ defined in \eqref{cond2-alt}.
 \begin{figure}[!ht]
	\centering
	\includegraphics{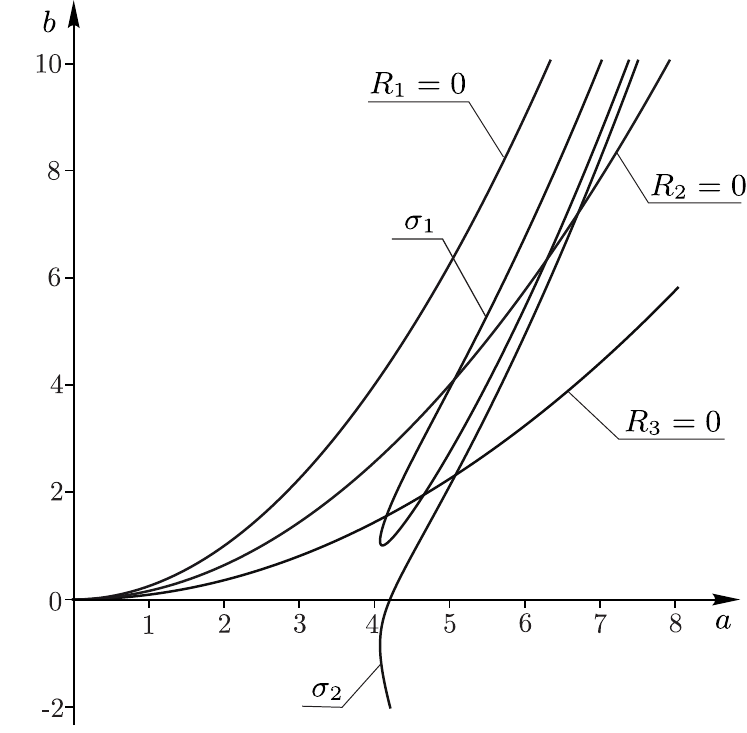}
\begin{minipage}{0.8\textwidth}
	\caption{Curves $\sigma_1$, $\sigma_2$,  corresponding to acute and obtuse RE 
	on the coefficient plane $(a, b)$ for $\mu=0.95$.}
	\label{fig21}
\end{minipage}
\end{figure}

An analytic investigation of condition~\eqref{cond2-alt} for a general $0<\mu<1$ involves very heavy calculations so we present numerical results. 
We restrict our attention to the acute RE that may be
parametrised  by $\alpha \in (0,\pi/2)$ by putting $q=q_-(\alpha)$.  Our  results are then presented in terms of the parameters $(\mu,\alpha )\in (0,1)\times (0,\pi/2)$.

One can express $R_2$ and $R_3$ as functions of $(\mu,\alpha )$ by  substituting $q=q_-(\alpha)$  into  \eqref{eq:defa} and  \eqref{eq:determinantS2}.
 The zero loci of $R_2$ and $R_3$ on the $\alpha$-$\mu$-plane
are the two curves illustrated in Fig.\,\ref{fig10}.

\subsubsection*{Analysis of the Arnold determinant}

As for the resonance condition, we only present numerical results for our investigation of condition  $2^\circ$ for  the acute RE. By using  \eqref{eq:M0cond} and 
setting $q=q_-(\alpha)$,
we express $D=D(\mu,\alpha)$.

The zero locus of $D$  on the plane $\alpha$-$\mu$ consists of the  two curves illustrated in Fig.\,\ref{fig10} that provides numerical evidence for the
validity of Theorem~\ref{thm:KAM}.

\begin{figure}[h!]
\centering
\includegraphics[width=0.31\textwidth]{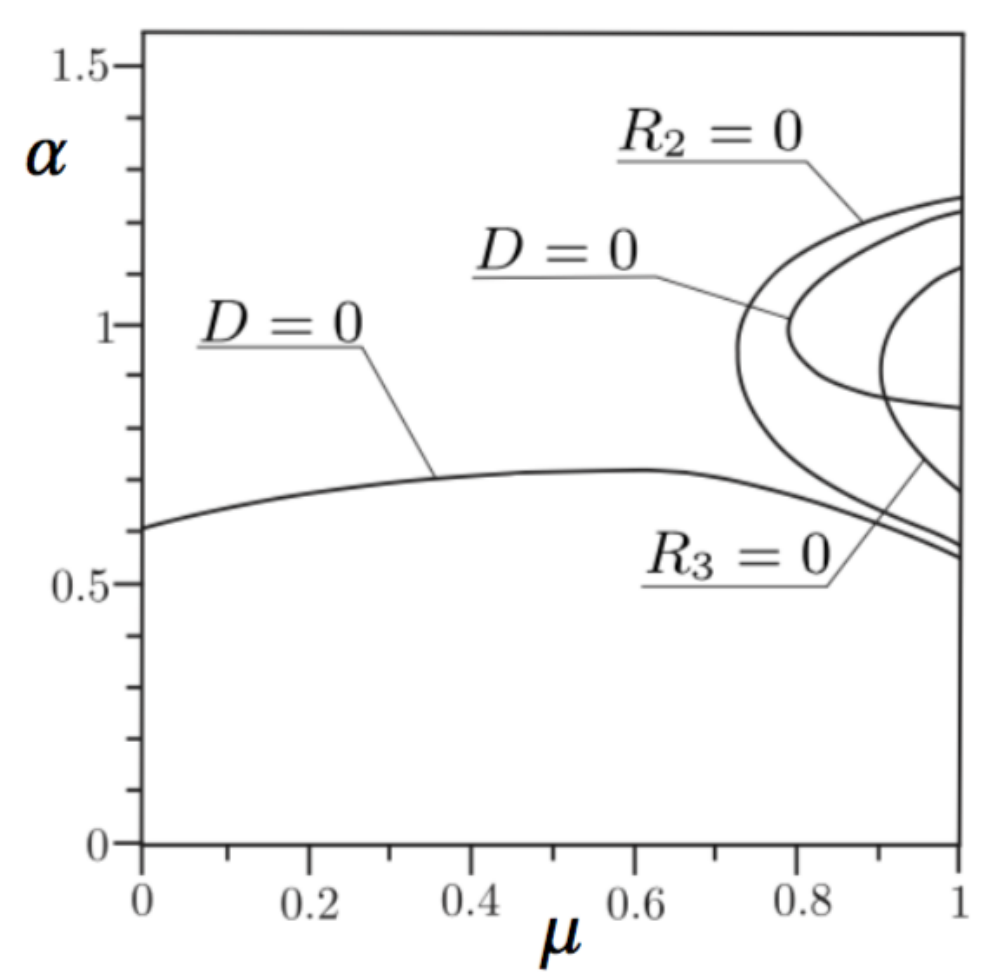}
\caption{{Curves on the plane $\mu$-$\alpha$ plane corresponding to RE with second and third order resonances (respectively $R_2=0$ and $R_3=0$) and where
	the Arnold determinant vanishes ($D=0$).}}
\label{fig10}
\end{figure}

\subsubsection{Open problems}
\addcontentsline{toc}{section}{Open problems}
To conclude, we point out a number of open problems concerning the stability of the  two-body problem in the sphere:
\begin{itemize}
\item[{--}] For different masses, investigate the nonlinear stability of acute RE
 for which there are  resonances $(R_2=0, \, R_3=0)$ and/or the Arnold determinant vanishes $(D=0)$.
\item[{--}] Again for different masses in $S^2$, investigate the nonlinear stability of obtuse RE that are elliptic.
\item[{--}] In the case of equal masses, investigate the nonlinear stability of acute-isosceles RE and right-angled RE.
\item[{--}] Classify and investigate the stability of all RE for the spatial two-body problem on $S^3$ and $L^3$.
\end{itemize}

\section*{Acknowledgements}

We are thankful to both reviewers and the associate editor for their remarks and criticisms which  led to an improvement of our paper.

We are grateful to Miguel Rodr\'iguez-Olmos for discussing his preliminary results of~\cite{MRO-2017} with us.
The authors express their gratitude to B.\,S.~Bardin and I.\,A.~Bizyaev for fruitful discussions and useful comments.

The research contribution of LGN and JM was made possible by a Newton Advanced Fellowship from the Royal Society, ref:~NA140017.

The work of AVB and ISM is supported by the Russian Foundation for Basic Research (project No.\,17-01-00846-a).
The research of AVB was also carried out within the framework of the state assignment of the Ministry of Education and Science of Russia.

\vskip 1cm
\setlength{\parindent}{0pt}

\emph{A.V.\,Borisov} \\
Udmurt State University, \\
ul. Universitetskaya 1, Izhevsk, 426034 Russia \\
\textit{and} \\
A.A.Blagonravov Mechanical Engineering Research Institute of RAS \\
ul. Bardina 4, Moscow, 117334 Russia \\
\texttt{borisov@rcd.ru}

\bigskip

\emph{L.C.~Garc\'ia-Naranjo} \\
Departamento de Matem\'aticas y Mec\'anica IIMAS-UNAM \\
Apdo. Postal: 20-726 Mexico City, 01000, Mexico\\
\texttt{luis@mym.iimas.unam.mx}

\bigskip

\emph{I.S.~Mamaev} \\ 
Institute of Mathematics and Mechanics of the Ural Branch of RAS \\
ul. S.Kovalevskoi 16, Ekaterinburg, 620990 Russia\\
\textit{and}\\
Izhevsk State Technical University \\
Studencheskaya 7, Izhevsk, 426069 Russia \\
\texttt{mamaev@rcd.ru}

\bigskip

\emph{J.~Montaldi} \\
School of Mathematics, University of Manchester \\
Manchester M13 9PL, UK\\
\texttt{j.montaldi@manchester.ac.uk}


\begin{thebibliography}{99}{}

\bibitem{5}
Bolsinov A.V., Borisov A.V., Mamaev I.S.,
Topology and stability of integrable systems.
Russian Mathematical Surveys, 2010, vol.65, no.2, pp.259--318.

\bibitem{6}
Bolsinov A.V., Borisov A.V., Mamaev I.S.,
The Bifurcation Analysis and the Conley Index in Mechanics.
Regular and Chaotic Dynamics, 2012, vol.17, no.5, pp.457--478.


\bibitem{DTT}
Borisov A.V., Mamaev I.S.,
Rigid body dynamics. Hamiltonian methods, integrability, chaos.
Moscow--Izhevsk: Institute of Computer Science, 2005, 576 p. (in Russian).
	

\bibitem{4}
Borisov A.V., Mamaev I.S.,
Reduction in the two-body problem on the Lobatchevsky plane.
Russian Journal of Nonlinear Dynamics, 2006, vol.2, no.3, pp.279--285 (in Russian).

\bibitem{7}
Borisov A.V., Mamaev I.S.,
Rigid Body Dynamics in NonEuclidean Spaces.
Russian Journal of Mathematical Physics,
2016, vol.23, no.4, pp.431--453.

\bibitem{BM_2006}
Borisov A.V., Mamaev I.S.,
The Restricted Two-Body Problem in Constant Curvature Spaces.
Celestial Mech. Dynam. Astronom., 2006, vol.96, no.1, pp.1--17.


\bibitem{1}
Borisov A.V., Mamaev I.S., Bizyaev I.A.,
The Spatial Problem of 2 Bodies on a Sphere.
Reduction and Stochasticity. Regular and Chaotic Dynamics, 2016, vol.21, no.5, pp.556--580.

\bibitem{3}
Borisov A.V., Mamaev I.S., Kilin A.A.,
Two-Body Problem on a Sphere: Reduction, Stochasticity, Periodic Orbits.
Regul. Chaotic Dyn., 2004, vol.9, no.3, pp.265--279.



\bibitem{Carinena} Cari\~nena J.F., M.F. Ra\~nada, M. Santander, Central potentials on spaces of constant curvature: the Kepler problem
on the two dimensional sphere $S^2$ and the hyperbolic plane $H^2$,  J. Math. Phys., 2005, vol.46, 052702.


\bibitem{Ch}
Chernoivan V.A., Mamaev I.S.,
The Restricted Two-Body Problem and the Kepler Problem in the Constant Curvature Spaces.
Regular and Chaotic Dynamics, 1999, vol.4, no.2, pp.112--124.


\bibitem{Diacu}
Diacu F.,
Relative Equilibria of the Curved $N$-Body Problem.
Paris: Atlantis, 2012.

\bibitem{DiacuPCh}  Diacu F., P\'erez-Chavela E.,  Reyes J.G., An intrinsic approach in the curved n-body problem. The negative case.
J. Differential Equations, 2012, vol.252, pp.4529--4562.

\bibitem{LGN2016} Garc\'ia-Naranjo L.C., Marrero J.C.,
P\'erez-Chavela E. and Rodr\'iguez-Olmos M., 
Classification and stability of relative equilibria for the
two-body problem in the hyperbolic space of dimension 2.
J. of Differential Equations, 2016, vol.260, pp.6375--6404.

\bibitem{Iversen} Iversen, B., Hyperbolic Geometry. LMS Student Texts, no.\ 25. 1992.

\bibitem{Kilin}
Kilin A.A.,
Libration Points in Spaces $S^2$ and $L^2$.
Regular and Chaotic Dynamics, 1999, vol.4, no.1, pp.91--103.

\bibitem{kozlov} Kozlov, V.V., Harin, A.O., Kepler's problem in constant curvature spaces, {\it Cel. Mech. Dynam. Astro.},  (1992),  {\bf 54}, 393--399,


\bibitem{Markeev_1978}
Markeev A.P.,
Libration points in celestial mechanics and cosmodynamics.
M.: Nauka, 1978, 312 p. (in Russian).


\bibitem{Marsden}
Marsden, J.E, 
Lectures on Mechanics. 
C.U.P. (1992).


\bibitem{MeyerHall} Meyer K.R., Hall G.R, and Offin, D.,  
Introduction to Hamiltonian dynamical systems and the $N$-body problem.
Second edition. Applied Mathematical Sciences, {\bf 90}. Springer, New York, 2009.


\bibitem{Mo-Peyresq}
Montaldi, J., 
Relative equilibria and conserved quantities in symmetric Hamiltonian systems. 
In: Peyresq Lectures in Nonlinear Phenomena, 
World Scientific, 2000. 

\bibitem{MN-G}
Montaldi, J., Nava-Gaxiola, C., 
Point vortices on the hyperbolic plane.
J.\ Math.\ Phys., vol.~55 (2014), 102702, 14 pp.

\bibitem{Mont}
Montanelli H.,
Computing Hyperbolic Choreographies.
Regular and Chaotic Dynamics, 2016, vol.~21, no.5, pp.523--531.

\bibitem{Moser_1968}
Moser J.,
Lectures on Hamiltonian systems.
American Mathematical Soc., 1968, no.81.

\bibitem{PCh} P\'erez-Chavela, E., Reyes-Victoria, J. G., An intrinsic approach in the curved n-body problem. The positive curvature case. 
Trans. Amer. Math. Soc., vol.~364 (2012), 3805--3827. 
 
\bibitem{MRO-2017}
Rodr\'iguez-Olmos M., 
Relative equilibria for the two-body problem on $S^2$. In preparation.

\bibitem{2}
Shchepetilov A.V.,
Two-Body Problem on Spaces of Constant Curvature: 1.Dependence of the   Hamiltonian on the Symmetry Group and the Reduction of the Classical System.
Theoret.\ and Math.\ Phys.,  2000, vol.124, no.2, pp.1068--1081.

\bibitem{Sh}
Shchepetilov A.V.,
Calculus and Mechanics on Two-Point Homogenous Riemannian Spaces.
Lect.\ Notes Phys., vol.707, Berlin:Springer, 2006.


 \bibitem{Siegel-Moser} Siegel C.L. and  Moser J.K.,  
Lectures on celestial mechanics. 
Translated from the German by C. I. Kalme. Reprint of the 1971 translation. Classics in Mathematics. Springer-Verlag, Berlin, 1995.


\end{thebibliography}
\end{document}